\documentclass[12pt]{article}
\usepackage{amsmath,amssymb,bm,amsthm,array,tikz,enumerate, braket}
\usepackage{booktabs}
\usepackage{graphicx}
\usetikzlibrary{graphs}
\tikzset{mynodes/.style={circle,white,fill=black,text width=.8em,inner sep=0pt,text centered,font=\footnotesize}} 
\tikzset{myedges/.style={thick}} 
\tikzset{graphs/every graph/.style={nodes={mynodes}, edges={myedges}}} 
\tikzset{graphs/digraph/.style={edges={myedges,-latex}}} 
\tikzset{every loop/.style={looseness=30}} 
\numberwithin{equation}{section}
\usetikzlibrary{positioning,graphs}

\DeclareMathOperator{\Spec}{Spec}

\DeclareMathOperator{\Ker}{Ker}

\newcommand{\bs}[1]{\boldsymbol{#1}}

\newtheorem{thm}{Theorem}[section]
\newtheorem{lem}{Lemma}[section]
\newtheorem{pro}{Proposition}[section]
\newtheorem{cor}{Corollary}[section]

\newtheorem*{maintheorem1}{Theorem \ref{main}}
\newtheorem*{maintheorem2}{Theorem \ref{main2}}
\theoremstyle{definition}
\newtheorem{defi}{Definition}[section]

\allowdisplaybreaks

\title{Robustness of
periodicity in Grover walks under a magnetic vector potential }
\author{Hiroto Sekido~$^{\dagger,}$\footnote{corresponding author: 
\texttt{sekido-hiroto-zk@ynu.jp}}
,\;\; Etsuo Segawa~$^\dagger$ \\
{\small ~$^\dagger$Graduate School of Environment and Information Sciences,}\\
{\small Yokohama National University, }\\
{\small Hodogaya, Yokohama 240-8501, Japan.}
}
\date{}

\begin{document}

\maketitle
\begin{abstract}
We study the effect of magnetic vector potentials on periodic Grover walks on finite graphs. The magnetic vector potential is introduced through the framework of quantum graphs, which induces the Grover walk as a special case. We regard the vector potential as a perturbation of a periodic Grover walk and investigate the robustness of its periodicity.
Our analysis reveals that the response to such perturbations depends on the spectral structure of the underlying graph. In particular, when the graph possesses at least one non-simple eigenvalue, we derive a Hermitian matrix that characterizes the robustness of its periodicity.  As a consequence, we show that the perturbed dynamics is asymptotically described by a continuous-time quantum walk generated by this Hermitian matrix.
\vspace{8pt} \\
@{\it Keywords:} 
perturbed magnetic vector potential, 
Grover walk, periodicity 
\\
{\it MSC 2020 subject classifications:} 81Q99; 05C50
\end{abstract}

\section{Introduction}


Quantum walks are the quantum mechanical counterpart of classical random walks and have been extensively studied in connection with quantum computation~\cite{Am, JK, PR} and various related fields, e.g.,~\cite{Mo, Ya}. In particular, the discrete-time quantum walks on graphs  exhibit a variety of distinct behaviors depending on the underlying graph. Consequently, a considerable amount of research has been devoted to characterizing graph structures through quantum-walks phenomena such as perfect state transfer~\cite{CZ,SKEG,KSYY,MSSS,MSSS2}, periodicity~\cite{YNIE,SHH,KSYY,SHK,Y2019}, effectiveness of quantum search~\cite{BW1,HZ}, graph isomorphism detection~\cite{BW}, attainment of maximum entropy~\cite{SHE}, orientability detection of graph embeddings into closed surfaces~\cite{HS}, etc. The spectral structure of the time evolution operator plays a central role in connecting the behavior of a quantum walk with the underlying graph structure, e. g.~\cite{SMT,QGW1,QGW2}. Among the various characteristic phenomena of discrete-time quantum walks, this paper focuses on periodicity, for which the influence of the spectral structure appears particularly prominently.

Periodicity is defined by the property that the time evolution operator returns to the identity after a finite number of steps. This phenomenon is closely related to the fact that the spectral phases are rational multiples of \(2\pi\)~\cite{PR}. For the Grover walk, this spectral characterization has led to various results on periodicity and the classification of periodic graphs~\cite{YNIE,SHK,Y2019}. However, much less is known about the robustness of periodicity under perturbations.

One natural source of perturbations arises from quantum graph models, where magnetic vector potentials can be incorporated into quantum dynamics. A quantum graph (e.g.,~\cite{PP, SU, TU, P}) is a metric graph equipped with differential operators acting on the edges together with boundary conditions imposed on the vertices. For a finite graph \(G=(V,E)\), each edge is identified with a finite interval, and a Schrödinger equation is defined in each interval. The solutions on the adjacent edges are coupled through boundary conditions at the vertices, and the resulting spectral problem can be described by a bond scattering matrix acting on the set of directed edges $A(G)$ obtained by replacing each edge of $G$ with a pair of oppositely directed edges.

Several studies have revealed close connections between quantum graphs and discrete-time quantum walks~\cite{QGW1,QGW2,SG,GK}.  The state space of this induced quantum walk is the Hilbert space $\mathbb{C}^{A(G)}$ with the standard inner product. They showed that stationary solutions of a quantum graph can be characterized in terms of eigenvectors associated with the eigenvalue $1$ of this corresponding quantum walk. Consequently, spectral properties of quantum graphs can be investigated within the framework of quantum walks. Furthermore, Grover walks can be realized as a special case of this quantum walk. In particular, magnetic vector potentials associated with directed edges appear naturally as phase factors in this induced quantum walk.

In this paper, we regard the magnetic vector potential as a perturbation of a periodic Grover walk and investigate the robustness of its periodicity. Our analysis reveals that the response to such perturbations depends crucially on whether the underlying graph has non-simple eigenvalues. In particular, if the graph has at least one non-simple eigenvalue, the perturbed dynamics is effectively described by a certain continuous-time quantum walk, which captures a gap from the original periodic behavior. Recent studies, including \cite{Childs, SS, Strauch}, have investigated the approximations of discrete-time quantum walks by continuous-time quantum walks. Within this line of research, our work presents an approach based on periodicity, where a continuous-time quantum walk emerges as a description of a gap from periodic behavior induced by a magnetic flux.

The remainder of this paper is organized as follows. In Section 2, we introduce the necessary definitions and preliminary results on quantum walks and magnetic vector potentials. In Section 3, we introduce several definitions and state the main results in a formal way. In Section 4, we present several lemmas and prove the main results. Finally, in Section 5, we provide three examples.

\subsection{First main result}

The main result of this paper can be stated informally as follows. The first main result establishes a connection between periodic Grover walks with magnetic perturbations assigned to a fixed directed edge, called the reference edge, and continuous-time quantum walks. More precisely, we show that, under a suitable scaling limit, the perturbed periodic dynamics converges to a continuous-time quantum walk generated by a Hermitian matrix $H$. The detailed construction of $H$ is given in Definition~\ref{H1}.

\begin{maintheorem1}
    Let $G$ be a graph that induces a $\tau$-periodic Grover walk. For the small strength of a vector potential $\beta$ and a reference edge $a \in A(G)$, denote $U_{\beta} = U_{\beta}(G,a)$ and $H = H(G,a)$.   Set $\phi_0$ as the initial state, and $\varphi_{\lceil t/\beta\rceil}^{[D]}$   as the state at the final time 
    ${\lceil t/\beta\rceil}$ of the discrete-time quantum walk with some $t>0$, defined by the following time evolution:
\[
\varphi_0^{[D]} = \phi_0;\  \varphi_n^{[D]} = U^{\tau}_{\beta} \varphi_{n-1}^{[D]} \ 
\left(n=1,\dots, \left\lceil \frac{t}{\beta} \right \rceil
\right).
\]
     Set $\psi_t^{[C]}$ as the state at time 
    $t$ of the continuous-time quantum walk, defined by the following time evolution:
    \begin{align*}
\psi_0^{[C]} = \phi_0;\  
-i \frac{\partial}{\partial t}\psi_t^{[C]} = \tau H\psi_t^{[C]} \ (t>0).
\end{align*}
Then, we have
\[
\psi_t^{[C]} = \lim_{\beta \to 0} \varphi_{\lceil \frac{t}{\beta}\rceil}^{[D]}.
\]
\end{maintheorem1}

\subsection{Second main result}

The second main result concerns the spectral properties of $H$ and the robustness of  periodicity under magnetic vector potentials. The nonzero eigenvalues and the corresponding eigenvectors of $H$ can be expressed in terms of the eigenvalues and their associated eigenvectors of the discriminant matrix $T$ . Here, $T$ is similar to the probability transition matrix of the isotropic random walk on $G$, and the explicit relation between them is given in \eqref{RWbyT}.
To formulate this notion precisely, we introduce the following definition of robustness of periodicity. 
\begin{defi}[ Robustness of periodicity ]
    Let $G=G(V, E)$ be a graph that induces periodic Grover walks. For a reference edge $a \in A= A(G)$, set  $H = H(G,a)$. Robustness of periodicity, denoted by $\mathcal{R}_p= \mathcal{R}_p(G, a)$, is defined as follows:
    \[
    \mathcal{R}_p = \frac{\dim \ker H}{|A|}.
    \]
\end{defi}

Note that,  by Theorem~\ref{main}, the continuous-time quantum walk remains periodic whenever the initial state belongs to $\ker H$ because $d\psi_t^{[C]}/dt=0$, that is, $\psi_t^{[C]} = \psi_0^{[C]}$ for any $t >0$. Therefore, a higher value of $\mathcal{R}_p$ indicates the existence of a larger subspace preserving the periodic behavior throughout the space. 

As we will see precisely later in Section~3,  the dimension of $\ker H$ and the spectrum of $H$ can be characterized in terms of how the discriminant $T : \mathbb{C}^V \to \mathbb{C}^V$ overlaps with the endpoints $\partial X_a$ and the number of fundamental cycles in $G$.  
More precisely, let $\mathcal{B}_{\partial X_a}$ denote the subspaces of vectors supported on the end points $\partial X_a$ of a reference edge $a$ carrying the magnetic vector potential.
Since $|\partial X_a|=2$, we have
\[
\kappa
:=
\dim\!\left(\ker(T-\lambda_TI)\cap\mathcal B_{\partial X_a}\right)\le2
\]
for every eigenvalue $\lambda_T$ of the discriminant $T$.
This observation naturally classifies the eigenspaces of the discriminant according to
\[
\kappa \in \{0,1,2\},
\]
which forms the basis of our analysis.
The cases $\kappa \in \{0,1\}$ give rise to the subspaces 
$\mathcal{S}_{\mathrm{sim}}$ ($\kappa=1$), 
$\mathcal{T}_{\mathrm{per}} $ ($\kappa=0$) $\subset \ker H$, respectively, while the case $\kappa=2$
determines the non-zero spectrum of $H$. 
The precise definitions of $\mathcal{S}_{\mathrm{sim}}$ and
$\mathcal{T}_{\mathrm{per}}$ are given in Definition~\ref{ST}.
In addition, the subspace $\mathcal{L}^{\perp}$, which arises from the
fundamental cycles of the underlying graph~\cite{SMT}, constitutes the
remaining component of $\ker H$, that is, $\ker H = \mathcal{S}_{\mathrm{sim}} 
    \oplus
    \mathcal{T}_{\text{per}}
    \oplus
    \mathcal{L}^{\perp}$. 
For the case 
$\kappa=\dim\!\left(\ker(T-\lambda_T I)\cap\mathcal B_{\partial X_a}\right)=2$, 
the non-zero eigenvalues of $H$ induced by $\lambda_T$ are explicitly described in terms of 
the local quantity $E_{\lambda_T}(a,a)$, which admits a geometric interpretation as the area of the parallelogram generated by the two basis vectors of $\ker(T-\lambda_TI\cap \mathcal{B}_{\delta X_A})$ restricted to $\partial X_a$; and the eigenvectors of $H$ are described by the vectors lifted by the operator $\partial_\lambda^{(\lambda_T)}: \mathbb{C}^X\to \mathbb{C}^{A}$ from the above two basis vectors in $\mathbb{C}^X$.  Its precise definition is given in Definition~\ref{H1}.

\begin{maintheorem2}
    Set $U_{0} = U_{0}(G)$ and $H = H(G, a)$, for a graph $G$ that induces $\tau$-periodic Grover walks and a reference edge $a \in A(G)$. For  $\lambda_T  \in \Spec(T)$ with $\dim \left( \ker (T- \lambda_T I) \cap \mathcal{B}_{\partial X_a}\right) =2$, we denote by $f_1^{(\lambda_T)}, f_2^{(\lambda_T)}$ an orthonormal basis of $\ker (T- \lambda_T I) \cap \mathcal{B}_{\partial X_a}.$ Then, the spectrum of $H$ is described as follows:
    \[
    \Spec(H) = \left\{ \mu_{\lambda_T}^{\pm} |  \lambda_T \in \Spec (T), \dim \left( \ker (T- \lambda_T I) \cap \mathcal{B}_{\partial X_a}\right) =2 \right\} \cup \{0\}, 
    \]
    where 
    \[
    \mu_{\lambda_T}^{\pm} = \frac{ \mp E_{\lambda_T}(a, a)}{\sqrt{1- \lambda_T^2}}.
    \]
    Moreover, for $\lambda = e^{i\arccos \lambda_T} \in \Spec(U_0)$,  
the vectors
\[
\ket{\psi_\pm^{(\lambda)}}
=
\frac{1}{\sqrt2}\partial_{\lambda}^*f_1^{(\lambda_T)}
\pm
\frac{i}{\sqrt2}\partial_{\lambda}^*f_2^{(\lambda_T)}
\]
are normalized eigenvectors of \(H\) associated with the eigenvalues
$\mu_{\lambda_T}^{\pm},
$ 
respectively.
Furthermore, 
    \[
    \ker H = \mathcal{S}_{\mathrm{sim}} 
    \oplus
    \mathcal{T}_{\text{per}}
    \oplus
    \mathcal{L}^{\perp}.
    \] 
\end{maintheorem2}

Consider the path graph $P_n$, the cycle graph $C_n$, and the complete regular bipartite graph $K_{n,n}$ with an arbitrary reference edge for each case as examples. These three graphs 
are periodic \cite{YNIE, SHH}. The detailed computations are described in Section~5. The dimensions of $\mathcal{S}_{\mathrm{sim}}$,  
    $\mathcal{T}_{\text{per}}$, 
    $\mathcal{L}^{\perp}$, and the robustness $\mathcal{R}_p$ are summarized in Table~\ref{tab:examples} for each case.

In the following, let us observe the robustness of these three examples from the viewpoint of the subspace decomposition of $\ker H = \mathcal{S}_{\mathrm{sim}} 
    \oplus
    \mathcal{T}_{\text{per}}
    \oplus
    \mathcal{L}^{\perp}$. For the transition from $P_N$ to $C_N$ through the connection of both boundary vertices by the edge in $P_N$, the robustness decreases drastically because almost all the eigenvalues in $C_N$ have multiplicity $2$, while all the eigenvalues in $P_N$ are simple, which decreases $\dim \mathcal{S}_{\text{sim}}$ without any increase in $\dim \mathcal{T}_{\text{per}}$. Based on this observation, the existence of cycles seems to decrease the robustness. Then, we consider the complete bipartite graph $K_{n,n}$ $(n\geq 2)$, which is a periodic graph with $K_{1,1}=P_2$, $K_{2,2}=C_4$ and includes many cycles as subgraphs for large $n$.  However, for the transition from $C_n$ to $K_{n,n}$, the robustness {\it increases} due to the growth of both $\dim \mathcal{T}_{\text{per}}$ and $\dim \mathcal{L}^{\perp}$. The growth of $\dim \mathcal{T}_{\text{per}}$ derives from the degeneracy of eigenvalues in $K_{n,n}$; indeed, there are only $3$ distinct eigenvalues. 
The growth of $\dim \mathcal{L}^\perp$ derives from the large number of the first Betti number $|E|-|V|+1$, which is the number of fundamental cycles. 
The dominant contribution comes from $\dim \mathcal{L}^{\perp}$, since $\dim \mathcal{S}_{\mathrm{sim}}$ and $\dim \mathcal{T}_{\text{per}}$ are at most of the order of $|V|$, while $\dim \mathcal{L}^{\perp}$ is of the order of the first Betti number $|E|-|V|+1$. This suggests that graph structures with many edges, for which the contribution of $\mathcal{L}^{\perp}$ becomes large, tend to exhibit stronger robustness of periodicity under magnetic vector potentials. 
However, (interestingly),  the graph having the strongest robustness against the magnetic vector potential is the path that has the smallest number of edges needed to maintain connectivity, that is, \begin{multline*}
0=\lim_{n\to\infty}\mathcal{R}_p(C_{2n})<\cdots<\mathcal{R}_p(C_8)<\mathcal{R}_p(C_6)<
\mathcal{R}_p(C_{4})\\
=\mathcal{R}_p(K_{2,2})<\mathcal{R}_p(K_{3,3})<\cdots<\lim_{n\to\infty}\mathcal{R}_p(K_{n,n})=\mathcal{R}_p(K_{1,1})\\
=\mathcal{R}_p(P_2)=\mathcal{R}_p(P_3)=\cdots=1.
\end{multline*}

\begin{table}[htbp]
 \begin{center}
   \caption{Examples of $\mathcal{R}_p$}
\label{tab:examples}
  \begin{tabular}{|l|c|c|c|c|c|} \hline
     & $|A|$ & $\dim \mathcal{S}_{\text{sim}}$ & $\dim \mathcal{T}_{\text{per}}$ & $\dim \mathcal{L}^{\perp}$ & $\mathcal{R}_p$ \\ \hline
    $P_n$ & $2(n-1)$ & $2(n-1)$ & $0$ & $0$ & $1$ \\ \hline
    $C_{2m}$ & $4m$ & $2$ & $0$ & $2$ & $1/ m$ \\ \hline
    $C_{2m+1}$ & $2(2m +1)$ & $1$ & $0$ & $1$ & $1/ (2m+1)$ \\ \hline
    $K_{n, n}$ & $2n^2$ & $2$ & $4(n-2)$ & $2(n-1)^2$ & $(n^2-2)/n^2$ \\ \hline
  \end{tabular}
 \end{center}
\end{table}
\section{Preliminaries}

\subsection{Graphs}

Let $G=(V, E)$ be a finite, simple, connected, and undirected graph with a vertex set $V$ and an edge set $E$. 
For $x, y \in V$, let $xy$ denote the \emph{directed} edge $(x,y) \in V \times V$ from $x$ to $y$. 
Also, define $A=A(G)$ as the set $\{ xy, yx \mid \{x,y\} \in E\}$. 
Let $a=xy \in A$ be a directed edge.
Let $a^{-1}$ denote the directed edge $yx$.
Let $o(a)$ and  $t(a)$ be the \emph{origin} $x$ and \emph{terminus} $y$ of $a$, respectively. The \emph{degree} of a vertex $x \in V$ for a graph $G$ is written as $\deg x$. The \emph{adjacency matrix} $M=M(G)\in \mathbb{C}^{V\times V}$ of a graph $G$ is defined by
\[M_{x, y}=\begin{cases}
1 & \{x, y\}\in E, \\
0 &\text{otherwise}. 
\end{cases}\]
We define the \emph{degree matrix} $D=D(G) \in \mathbb{C}^{V\times V}$ as $D_{x, y}=(\deg x)\delta_{x, y}$, where $\delta_{x, y}$ is Kronecker's delta. Define the \emph{discriminant matrix} $T=T(G) \in \mathbb{C}^{V\times V}$ by
\[
T=D^{-\frac{1}{2}}MD^{-\frac{1}{2}}.
\]
We note that $T$ is isomorphic to the transition matrix of the isotropic random walk on $G$, that is, 
\begin{align}\label{RWbyT}
D^{-\frac{1}{2}}TD^{\frac{1}{2}} =D^{-1}M.
\end{align}

\subsection{Grover walks with a magnetic vector potential }

In this section, we consider a quantum graph walk with a vector potential~\cite{SFRH, QGW2, GK}. Let $G = (V, E)$ be a graph. The vector potential is a one-form, that is,  a function $B : A(G) \to \mathbb{R}$ such that $B(e^{-1}) = -B(e)$. In this paper, we choose the one-form for  a directed edge $a \in A(G)$ and a real value $\beta >0$ so that 
\[
B(e) 
=
\begin{cases}
    \beta & e=a,\\
    -\beta &  e= a^{-1},\\
    0 & \text{otherwise},
\end{cases}
\]
for any $e \in A(G)$. We call the fixed directed edge $a$  the \emph{reference edge}, and $\beta$  the \emph{strength of a vector potential}. A cycle passing through the reference edge acquires a magnetic flux determined by the strength of a vector potential $\beta$~\cite{YT}. 
We explain the relation of our quantum walk to quantum graphs in Section 2.4. 

We now define the quantum walk with a magnetic vector potential considered in this paper.
\begin{defi}[Grover walk with a magnetic vector potential on $G$]
    Let $G$ be a graph, and let $a \in A$ be a reference edge.
    \begin{itemize}
        \item Total state space: $\mathbb{C}^A$ with the standard inner product.
        \item Initial state: $||\psi_0 || = 1$
        \item Time evolution: $\psi_t \in \mathbb{C}^{A}$ given by 
        \[
        \psi_t = U_{\beta}\psi_{t-1} \quad (t=1, 2, \dots).
        \] 
    \end{itemize}
    Here, $U_{\beta}=U_{\beta}(G,a)$ is defined by 
    \[
    (U_{\beta})_{e,f} 
= e^{iB({e^{-1}})}
\Big(\frac{2}{\deg t(f)} \delta_{o(e), t(f)} - \delta_{e,f^{-1}}\Big).
    \]
\end{defi}

We next introduce several operators.
The following subsection establishes a spectral mapping theorem for the Grover walk using these operators. They also play an important role in the proof of the main result. The \emph{shift matrix} $S_\beta = S_\beta(G,a) \in \mathbb{C}^{A \times A}$
is defined by 
\[
(S_\beta)_{e, f} = \begin{cases}
    e^{iB(f)} & f=e^{-1}, \\
    0 & \text{otherwise}.
\end{cases}
\]  
Clearly,
$S_\beta^2 = I.$
The \emph{boundary matrix} $d = d(G) \in \mathbb{C}^{V \times A}$ is defined by
\[
(d)_{x,e} = \begin{cases}\frac{1}{\sqrt{\deg x}} & x=t(e), \\
0 & \text{otherwise}.
\end{cases}
\]
We also write
\[
C=2d^*d-I,
\]
and refer to $C$ as the \emph{coin matrix}.
For $\lambda \in \mathbb{C},$ define the operator\begin{align*}
\partial_\lambda^* =
\begin{cases}
    d^* & \lambda= \pm 1, \\
    d^* +\lambda S_0d^* &  \text{otherwise}.
\end{cases}
\end{align*}
By directly calculating the entries based on the definition of matrix multiplication, it follows that
$dd^* = I$ and $dS_0d^* =T$.
Note that the time evolution matrix $U_{\beta} = U_{\beta}(G,a) \in \mathbb{C}^{A \times A}$ can be written 
as 
\begin{align} \label{Umat}
U_\beta = S_{\beta}C = S_\beta(2d^*d-I).
\end{align} 
When $\beta = 0$, the quantum walk defined by $U_0$ is called the \emph{Grover walk} on $G$.

\subsection{Spectral properties and periodicity of Grover Walks }

In this section, we review the spectral properties of Grover walks and their periodicity. Periodicity of Grover walks is defined as follows. 
\begin{defi} [\cite{SHK}]
Let $G$ be a graph, and let $U_0=U_0(G)$ be the time evolution matrix of Grover walks. If there exists $\tau \in \mathbb{N}$ such that $U^\tau=I$, then we say that the graph $G$ is \emph{periodic} and the minimum $\tau$ is \emph{period}. 
Such a graph is also called a $\tau$-periodic graph. 
\end{defi}

\begin{lem}[{\cite{SHH}}] \label{thm:SMT}
Let $G$ be a graph, and let $U_0=U_0(G)$ be the time evolution matrix of Grover walks. 
Then, $G$ is periodic if and only if there exists $\tau \in \mathbb{N}$ such that $\lambda^\tau=1$ for any $\lambda \in \Spec(U_0)$.
\end{lem}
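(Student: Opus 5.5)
The plan is to reduce the statement to the fact that $U_0$ is diagonalizable, since it is unitary: $U_0 = S_0(2d^*d-I)$ is a product of the permutation matrix $S_0$ and the reflection $C=2d^*d-I$. The reflection is unitary because $dd^*=I$ gives $C^*C = (2d^*d-I)^2 = 4d^*(dd^*)d - 4d^*d + I = I$.

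Since $U_0$ is unitary, it is normal. By the spectral theorem there is a unitary $W$ with $U_0 = W\Lambda W^*$, where $\Lambda=\mathrm{diag}(\lambda_1,\dots,\lambda_{|A|})$ and each $|\lambda_j|=1$. Then $U_0^\tau = W\Lambda^\tau W^*$ for every $\tau\in\mathbb{N}$.

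For the forward direction, suppose $G$ is periodic, so $U_0^\tau=I$ for some $\tau$. Then $\Lambda^\tau = W^*U_0^\tau W = I$, which means $\lambda_j^\tau=1$ for all $j$. Equivalently, applying $U_0^\tau$ to any eigenvector $v$ with eigenvalue $\lambda$ gives $\lambda^\tau v = v$, hence $\lambda^\tau=1$.

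For the converse, suppose some $\tau$ satisfies $\lambda^\tau=1$ for every $\lambda\in\Spec(U_0)$. Then $\Lambda^\tau=I$, so $U_0^\tau=WW^*=I$ and $G$ is periodic. If the common exponent is instead produced from individual orders $\tau_\lambda$, one takes $\tau$ to be their least common multiple; this is legitimate because the spectrum is finite.

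The only step that needs any care is diagonalizability. Without it, a Jordan block for an eigenvalue that is a root of unity would prevent $U_0^\tau = I$. Unitarity rules this out, and unitarity rests on the identity $dd^*=I$ recorded above together with $S_0^2=I$.
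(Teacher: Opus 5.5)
Your proof is correct and complete. Unitarity of $U_0=S_0C$ gives a unitary diagonalization: $C$ is a Hermitian involution because $dd^*=I$, and $S_0$ is a permutation matrix. After that, $U_0^\tau=I$ is equivalent to $\lambda^\tau=1$ for every $\lambda\in\Spec(U_0)$.

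The paper gives no proof of this lemma; it only cites it from the literature, so there is no argument in the paper to compare yours against.

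One small imprecision is in your closing remark: $S_0^2=I$ alone would not make $S_0$ unitary. What you actually need, and already stated earlier, is that $S_0$ is a real symmetric permutation matrix.
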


To prepare for the spectral analysis of $U_0$, we introduce the subspace
\[
\mathcal{L}=\operatorname{Im}d^* +\operatorname{Im}S_0d^*.
\]
This subspace is invariant under $U_0$ because of the expression \eqref{Umat}. The orthogonal complement of $\mathcal{L}$ can be expressed by 
\begin{align*}
    \mathcal{L}^{\perp}
    &=
    \left(
    \ker d \ \cap \ \ker \left(
    I -S_0
    \right)
    \right)
    \oplus
    \left(
    \ker d \ \cap \ \ker \left(
    I + S_0
    \right)
    \right).
\end{align*}

We next recall the spectral mapping theorem for Grover walks.

\begin{lem}[\cite{SMT}] \label{GSMT}
Let  $G=(V, E)$ be a graph, and let $U_0 = U_0(G)$ be the time evolution matrix of Grover walks. Then, its spectrum is  described as follows:

\[
\Spec(U_0|_{\mathcal{L}})= \left\{e^{\pm i \arccos(\lambda_T)}  | \lambda_T \in \Spec(T) \right\}, 
\]

\[
\Spec(U_0|_{\mathcal{L}^{\perp}})=
\begin{cases}
    \left\{1 \right\} & \text{$G$ has a simple odd cycle,}\\
    \emptyset  & \text{$G$ is a tree,} \\
    \left\{\pm1 \right\} & \text{otherwise.}
\end{cases}
\]
Moreover, 
\[
\ker (U_0|_{\mathcal{L}} - \lambda I)
=
\left\{
\partial_{\lambda}^* f 
\mid
f \in \ker (T - \lambda_T I)
\right\},
\]

\[
\ker (U_0|_{\mathcal{L}^{\perp}} \mp I)
=
\mathcal{C}_{\pm},
\]
where
\[
\mathcal{C}_{\pm} = \ker d \cap \ker\left(
S \mp I
\right),
\]
with 
\[
\begin{cases}
    \dim \mathcal{C}_+ 
    =
    |E| - |V| +1, \\
    \dim \mathcal{C}_- 
    =
    |E| - |V| + \dim \Ker (T + I).
\end{cases}
\]
\end{lem}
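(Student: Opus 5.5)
The plan is to treat $\mathcal{L}$ and $\mathcal{L}^{\perp}$ separately, using only the identities $dd^*=I$, $dS_0d^*=T$, $S_0^2=I$ and the factorization \eqref{Umat} with $\beta=0$, namely $U_0=S_0(2d^*d-I)$.

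\textbf{The part on $\mathcal{L}$.}
\begin{enumerate}
\item For $f\in\mathbb{C}^V$, compute the action of $U_0$ on the two generators:
\[
U_0d^*f=S_0(2d^*dd^*-d^*)f=S_0d^*f,
\qquad
U_0S_0d^*f=S_0(2d^*T-S_0d^*)f=2S_0d^*Tf-d^*f .
\]
\item Take $f\in\ker(T-\lambda_TI)$. Then $W_f:=\mathrm{span}\{d^*f,\,S_0d^*f\}$ is $U_0$-invariant. In the basis $(d^*f,S_0d^*f)$, $U_0$ acts by the $2\times2$ matrix
\[
\begin{pmatrix}0&-1\\ 1&2\lambda_T\end{pmatrix},
\qquad\text{with characteristic polynomial } \lambda^2-2\lambda_T\lambda+1 .
\]
Its roots are $\lambda=e^{\pm i\arccos\lambda_T}$.
\item Solving for the eigenvector gives a combination of $d^*f$ and $S_0d^*f$ whose coefficient ratio is fixed by $\lambda$; this is the vector $\partial_\lambda^*f$, up to a nonzero scalar and the sign convention for $\lambda$ versus $\bar\lambda$. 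One checks it against the paper's definition of $\partial_\lambda^*$.
\item Handle $\lambda_T=\pm1$ separately. Here $\|S_0d^*f\mp d^*f\|^2=2\mp2\langle f,Tf\rangle=0$, so $S_0d^*f=\pm d^*f$. Thus $W_f$ collapses to the line spanned by $d^*f$, and $d^*f$ is a $\pm1$ eigenvector. This is why $\partial_{\pm1}^*=d^*$.
\item Take an orthonormal eigenbasis of $T$. Since $\mathcal{L}=\sum_f W_f$, the $W_f$ exhaust $\mathcal{L}$.
\item The $W_f$ for distinct eigenvectors are mutually orthogonal. This follows from the Gram computations $\langle d^*f,d^*g\rangle=\langle f,g\rangle$ and $\langle d^*f,S_0d^*g\rangle=\langle f,Tg\rangle$.
\item Steps 5 and 6 give both the spectrum on $\mathcal{L}$ and the eigenspace description, with $\dim\mathcal{L}=2|V|-\dim\ker(T-I)-\dim\ker(T+I)$.
\end{enumerate}

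\textbf{The part on $\mathcal{L}^\perp$.} Here $\mathcal{L}^{\perp}=\ker d\cap\ker dS_0$, and on it $U_0=-S_0$. Since $S_0$ is an involution preserving $\mathcal{L}^\perp$, the space splits into $\ker d\cap\ker(S_0-I)$ and $\ker d\cap\ker(S_0+I)$, on which $U_0$ acts as $\mp I$ respectively. These are the spaces $\mathcal{C}_\pm$, matched up to the sign convention used in the statement.

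For the dimensions I would identify the two pieces with edge functions:
\begin{itemize}
\item $S_0$-antisymmetric vectors correspond to $\mathbb{C}^E$ with an orientation. The restriction of $d$ to them is, up to the diagonal scaling $D^{-1/2}$, the signed incidence matrix. This matrix has rank $|V|-1$ for connected $G$, so its kernel has dimension $|E|-|V|+1$.
\item $S_0$-symmetric vectors give the unsigned incidence matrix. Its rank is $|V|-\dim\ker(T+I)$, since $\ker(T+I)$ is one-dimensional exactly when $G$ is bipartite and zero otherwise. Its kernel therefore has dimension $|E|-|V|+\dim\ker(T+I)$.
\end{itemize}
As a cross-check, these counts together with $\dim\mathcal{L}$ sum to $2|E|=|A|$.

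\textbf{The three cases and the main obstacle.} The cases for $\Spec(U_0|_{\mathcal{L}^\perp})$ then follow from the two formulas:
\begin{itemize}
\item For a tree, $|E|=|V|-1$ and the graph is bipartite, so both dimensions vanish.
\item With an odd cycle, $\ker(T+I)=0$ and $|E|=|V|$ for a unicyclic graph, so exactly one of the two spaces survives.
\item Otherwise both spaces are nonzero.
\end{itemize}
The main obstacle is the sign bookkeeping: $U_0=-S_0$ on $\ker d$, the convention in $\partial_\lambda^*=d^*+\lambda S_0d^*$, and the labelling of $\mathcal{C}_\pm$ must be made consistent with the statement. I would fix these by a direct check on a single cycle graph before writing the general argument. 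The incidence-rank step is standard.
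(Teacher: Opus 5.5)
Your argument is the standard spectral-mapping proof. The paper gives no proof of its own and simply cites the spectral mapping theorem, which is proved in exactly this way. Every computation you list is correct. The trouble is that the places where you defer to ``the sign convention used in the statement'' are places where the printed statement is actually wrong. In Step~3 your diagnosis of the discrepancy is also wrong, so the check you promise there cannot succeed.

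From your $2\times2$ matrix, $\alpha d^*f+\beta S_0d^*f$ is a $\lambda$-eigenvector iff $\beta=-\lambda\alpha$, so the eigenvector is $(d^*-\lambda S_0d^*)f$. For the displayed $\partial_\lambda^*=d^*+\lambda S_0d^*$ one gets
\[
U_0\bigl(d^*f+\lambda S_0d^*f\bigr)=-\lambda\, d^*f+(1+2\lambda\lambda_T)\,S_0d^*f .
\]
So when $|\lambda_T|<1$ this vector is never a $\lambda$-eigenvector. It is an eigenvector at all only if $\lambda^2+2\lambda_T\lambda+1=0$, i.e.\ $\lambda_T=0$, and then for the eigenvalue $-\lambda$. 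The discrepancy is therefore $\lambda\mapsto-\lambda$, not $\lambda\mapsto\bar\lambda$; the two agree only at $\lambda_T=0$, and no rescaling fixes it. You should prove and state the eigenspace as $\{(d^*-\lambda S_0d^*)f : f\in\ker(T-\lambda_TI)\}$. This is also what the explicit formula in Lemma~\ref{GSMTv} uses, since $(S_0d^*f)(e)=f(o(e))/\sqrt{\deg o(e)}$.

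The same issue appears on $\mathcal{L}^\perp$. You have already shown $U_0=-S_0$ there, so $\ker d\cap\ker(S_0-I)$ is the $(-1)$-eigenspace and $\ker d\cap\ker(S_0+I)$ is the $(+1)$-eigenspace. With the printed definition $\mathcal{C}_\pm=\ker d\cap\ker(S\mp I)$, the identity $\ker(U_0|_{\mathcal{L}^\perp}\mp I)=\mathcal{C}_\pm$ is false. It becomes true with $\mathcal{C}_\pm=\ker d\cap\ker(S_0\pm I)$. Your incidence-matrix counts then give exactly the stated dimensions: $|E|-|V|+1$ for the antisymmetric part and $|E|-|V|+\dim\ker(T+I)$ for the symmetric part. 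These signs are forced by your own algebra, so no test on a cycle graph is needed. State the corrected definition rather than ``matched up to convention''.

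Finally, the first case ``$G$ has a simple odd cycle'' is false if read literally. $K_4$ contains triangles, yet $\dim\bigl(\ker d\cap\ker(S_0-I)\bigr)=|E|-|V|=2$, so $-1\in\Spec(U_0|_{\mathcal{L}^\perp})$. Your case analysis silently assumes $G$ is unicyclic. That is the correct hypothesis (non-bipartite with $|E|=|V|$). With it your trichotomy is complete, because in the remaining non-tree cases either $G$ is bipartite with $|E|\ge|V|$ or non-bipartite with $|E|\ge|V|+1$. Say explicitly that this corrected version is what you are proving.
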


As an immediate consequence of the spectral mapping theorem, we obtain the following characterization of the components of eigenvectors of \(U_0\), which will play a key role in the proof of the main result. We also require the following lemma concerning the relationship between the eigenvectors associated with \(\lambda\) and its complex conjugate \(\overline{\lambda}\).

\begin{lem} \label{GSMTv}
    Let  $G$ be a graph, and let $U_0 = U_0(G)$. Then, for any $\psi_{\mathcal{L}} \in \ker \left(
    U_0|_{\mathcal{L}}-\lambda I
    \right)$, there exists $f_{\lambda_T} \in \ker (T - \lambda_T)$ such that 
    \[
\psi_{\mathcal{L}}(e)
= 
\left(
\partial_\lambda^* f_{\lambda_T}
\right)(e)
= 
\frac{1}{\sqrt{2(1-\lambda_T^2)}}\left(\frac{f_{\lambda_T}({t(e))}}{\sqrt{\deg t(e)} } -\lambda \frac{f_{\lambda_T}({o(e))}}{\sqrt{\deg o(e)} }\right)
\]
for every $e \in A(G)$.
Moreover, 
for any $\psi \in \ker \left(
    U_0-\lambda I
    \right)$ with $|\psi| = 1$, 
    \[
    \psi(e^{-1}) = 
    \begin{cases}
        -\lambda\overline{\psi(e)} & \lambda \neq \pm 1, \\
        \pm \psi(e) & \lambda = \pm 1.
    \end{cases}
    \]
\end{lem}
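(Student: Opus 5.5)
The statement to prove is Lemma~\ref{GSMTv}, which has two parts: an explicit formula for eigenvectors in $\mathcal{L}$, and the symmetry $\psi(e^{-1})=-\lambda\overline{\psi(e)}$. I will derive both from Lemma~\ref{GSMT} together with the identities $dd^*=I$, $dS_0d^*=T$ and $S_0^2=I$.

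\emph{The explicit formula.} By Lemma~\ref{GSMT}, any $\psi_{\mathcal{L}}\in\ker(U_0|_{\mathcal{L}}-\lambda I)$ has the form $\partial_\lambda^*g$ for some $g\in\ker(T-\lambda_T I)$. For $\lambda\neq\pm1$ I write
\[
\partial_\lambda^* g = d^*g+\lambda S_0d^*g .
\]
Computing entries from the definitions gives $(d^*g)(e)=g(t(e))/\sqrt{\deg t(e)}$. Since $(S_0)_{e,e^{-1}}=1$, we also get
\[
(S_0d^*g)(e)=(d^*g)(e^{-1})=g(o(e))/\sqrt{\deg o(e)} .
\]
The formula in the statement has a minus sign, so I rescale: put $g=-f$ up to normalization, or equivalently track the sign convention for $\partial_\lambda^*$. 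In any case I expect to check the sign of $\lambda$ against the eigen-equation $U_0\partial_\lambda^*f=\lambda\partial_\lambda^*f$ directly.

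That check goes as follows. We have $U_0 d^*f=S_0(2d^*dd^*-d^*)f=S_0d^*f$. We also have $U_0S_0d^*f=S_0(2d^*T f-S_0d^*f)=2\lambda_T S_0d^*f-d^*f$. So on $\mathrm{span}\{d^*f,S_0d^*f\}$ the operator $U_0$ acts by the $2\times2$ matrix $\begin{pmatrix}0&-1\\1&2\lambda_T\end{pmatrix}$, whose eigenvalues are the roots of $\lambda^2-2\lambda_T\lambda+1=0$. An eigenvector has the form $d^*f+c\,S_0d^*f$, where $c$ is fixed by the eigen-relation. Solving for $c$ shows that $c=-\lambda$ works, up to conjugation of $\lambda$. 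This is the sign consistent with the displayed formula, and I will adopt $f_{\lambda_T}$ accordingly.

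For the normalization constant, I compute the norm:
\[
\|d^*f-\lambda S_0d^*f\|^2=\|f\|^2+|\lambda|^2\|f\|^2-2\,\mathrm{Re}\big(\lambda\langle d^*f,S_0d^*f\rangle\big)=2-2\lambda_T\,\mathrm{Re}\lambda .
\]
Here I used $\|d^*f\|=\|S_0d^*f\|=1$ for unit $f$, and $\langle d^*f,S_0d^*f\rangle=\langle f,Tf\rangle=\lambda_T$. With $\mathrm{Re}\lambda=\lambda_T$ this equals $2(1-\lambda_T^2)$, which gives the factor $1/\sqrt{2(1-\lambda_T^2)}$. The case $\lambda=\pm1$ is covered by $\partial^*_{\pm1}=d^*$.

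\emph{The symmetry.} For $\lambda\neq\pm1$, an eigenvector lies in $\mathcal{L}$, because $\mathcal{L}^\perp$ only carries $\pm1$. By the formula just proved, and using $t(e^{-1})=o(e)$, we get
\[
\psi(e^{-1})\propto \frac{f(o(e))}{\sqrt{\deg o(e)}}-\lambda\frac{f(t(e))}{\sqrt{\deg t(e)}} .
\]
Since $T$ is real symmetric, I can choose $f$ real. Then $-\lambda\overline{\psi(e)}\propto-\lambda\big(f(t)/\sqrt{\deg t}-\bar\lambda f(o)/\sqrt{\deg o}\big)$, and because $|\lambda|=1$ this equals the expression for $\psi(e^{-1})$.

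The main obstacle is that the eigenspace may be degenerate: it may contain the eigenvalues $\lambda$ and $\bar\lambda$ coming from different $\lambda_T$, or a complex multiple $c\psi$. A complex multiple breaks the identity unless we use the phase freedom. So the statement must be read as holding for a suitable choice of phase within real $f$, or the relation holds for the natural real-$f$ basis. I would make this precise by restricting to $\psi=\partial_\lambda^*f$ with $f$ real.

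For $\lambda=\pm1$, I decompose $\psi$ into its parts in $\mathcal{L}$ and $\mathcal{C}_\pm$. On $\mathcal{C}_\pm$ we have $S_0\psi=\pm\psi$, which is the claim. On the $\mathcal{L}$-part, $\lambda_T=\pm1$. If $G$ is bipartite, $f$ satisfies $f(x)/\sqrt{\deg x}=\pm f(y)/\sqrt{\deg y}$ across edges, so $d^*f$ is $\pm$-symmetric under $S_0$.
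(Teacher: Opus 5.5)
\textbf{The gap is in the case $\lambda=\pm1$.} You take from Lemma~\ref{GSMT} the identification $\ker(U_0|_{\mathcal{L}^\perp}\mp I)=\mathcal{C}_\pm=\ker d\cap\ker(S_0\mp I)$, which gives $S_0\psi=\pm\psi$ there. You never check this against $U_0$, as you did on $\mathcal{L}$, and it is wrong.

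On $\ker d$ we have $C=2d^*d-I=-I$, so $U_0=-S_0$ there. Hence $\ker(U_0|_{\mathcal{L}^\perp}\mp I)=\ker d\cap\ker(S_0\pm I)$. This is also the assignment that matches the dimension formulas in Lemma~\ref{GSMT}: the antisymmetric vectors in $\ker d$ are the flows, which form a space of dimension $|E|-|V|+1$.

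On the $\mathcal{L}$-summand, the eigenvectors are $d^*f$ with $f\in\ker(T\mp I)$, and $S_0d^*f=\pm d^*f$. So the two summands of $\ker(U_0\mp I)$ carry opposite symmetries. Concretely, on $C_n$ (with $C_\pm$ the arc classes of Section~5), the constant vector $\mathbf{1}_A$ and the circulating flow $\chi=\mathbf{1}_{C_+}-\mathbf{1}_{C_-}$ are both fixed by $U_0$. Already $\chi$ alone has the wrong sign. The unit vector $\psi=(\mathbf{1}_A+\chi)/(2\sqrt{n})$ equals $1/\sqrt{n}$ on $C_+$ and $0$ on $C_-$, so it satisfies neither relation.

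Thus for $\lambda=1$ the claim fails on every graph that is not a tree, and no argument can close this step. What is true is $\psi(e^{-1})=\pm\psi(e)$ on $\ker(U_0|_{\mathcal{L}}\mp I)$, and $\psi(e^{-1})=\mp\psi(e)$ on $\ker(U_0|_{\mathcal{L}^\perp}\mp I)$. Also, bipartiteness matters only for $\lambda_T=-1$; for $\lambda_T=1$ one has $f\propto(\sqrt{\deg x})_{x\in V}$ on any connected graph.

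\textbf{The case $\lambda\neq\pm1$.} Here your argument is correct and follows the natural route; the paper gives no proof beyond calling the lemma an immediate consequence of Lemma~\ref{GSMT}. Your $2\times2$ computation gives $c=-\lambda$ exactly, with no conjugation involved.

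So the middle expression $\partial_\lambda^*f$ in the statement is right only after redefining $\partial_\lambda^*=(2(1-\lambda_T^2))^{-1/2}(d^*-\lambda S_0d^*)$. The printed $d^*+\lambda S_0d^*$ never yields a $\lambda$-eigenvector. Substituting $g=-f$ cannot fix this, since it leaves the relative sign unchanged.

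For the same reason, you should not quote Lemma~\ref{GSMT} for the fact that every $\lambda$-eigenvector in $\mathcal{L}$ has this form. Get it from your own computation instead:
\begin{itemize}
\item The blocks $d^*\ker(T-\lambda_T I)+S_0d^*\ker(T-\lambda_T I)$ are mutually orthogonal and $U_0$-invariant, and together they span $\mathcal{L}$.
\item On a block with $\lambda_T\neq\pm1$, $U_0$ acts as $\begin{pmatrix}0&-1\\1&2\lambda_T\end{pmatrix}\otimes I$.
\item The eigenvalues $\lambda,\bar\lambda$ of that matrix are distinct and have real part $\lambda_T$, so the $\lambda$-eigenspace is exactly $\{d^*f-\lambda S_0d^*f\}$.
\end{itemize}

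Finally, restricting the conjugation identity to $\psi=\partial_\lambda^*f$ with $f$ real is necessary, not merely cautious. The left side is linear in $\psi$ and the right side is antilinear. Moreover, the vectors $\psi_\pm^{(\lambda)}$ of Theorem~\ref{main2}, built with the corrected $\partial_\lambda^*$, violate even $|\psi(a)|=|\psi(a^{-1})|$ whenever $E_{\lambda_T}(a,a)\neq0$, because $\langle\psi_\pm^{(\lambda)},\sigma\psi_\pm^{(\lambda)}\rangle=\mu_{\lambda_T}^\pm$.
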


\begin{lem} \label{GSMTb}
    Let  $G$ be a graph, and let $U_0 = U_0(G)$. Then, for any $\psi_{\mathcal{L}}^{(\lambda)} \in \ker \left(
    U_0|_{\mathcal{L}}-\lambda I
    \right)$, 
    \[
    \overline{\psi_{\mathcal{L}}^{(\lambda)}} = \psi_{\mathcal{L}}^{(\overline{\lambda})}.
    \]
\end{lem}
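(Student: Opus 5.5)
The plan is to use that every operator involved is real, so complex conjugation commutes with them. I would first note that $S_0$, $d$, $d^*$ and hence $C=2d^*d-I$ and $U_0=S_0C$ all have real entries: the entries of $S_0$ are $0$ or $e^{iB(f)}=1$ at $\beta=0$, and the entries of $d$ are $0$ or $1/\sqrt{\deg x}$. Consequently $\overline{U_0\psi}=U_0\overline{\psi}$ for every $\psi\in\mathbb{C}^A$. Similarly $T=dS_0d^*$ is real symmetric, so $\overline{f}\in\ker(T-\lambda_T I)$ whenever $f\in\ker(T-\lambda_T I)$, since $\lambda_T$ is real.

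Next I would check that $\mathcal{L}=\operatorname{Im}d^*+\operatorname{Im}S_0d^*$ is closed under conjugation. This holds because $\overline{d^*g}=d^*\overline{g}$ and $\overline{S_0d^*h}=S_0d^*\overline{h}$. So if $\psi^{(\lambda)}_{\mathcal L}\in\mathcal{L}$ satisfies $U_0\psi^{(\lambda)}_{\mathcal L}=\lambda\psi^{(\lambda)}_{\mathcal L}$, conjugating gives $U_0\overline{\psi^{(\lambda)}_{\mathcal L}}=\overline{\lambda}\,\overline{\psi^{(\lambda)}_{\mathcal L}}$ with $\overline{\psi^{(\lambda)}_{\mathcal L}}\in\mathcal{L}$. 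Hence $\overline{\psi^{(\lambda)}_{\mathcal L}}\in\ker(U_0|_{\mathcal L}-\overline{\lambda}I)$, which is the content of the statement, read as identifying the conjugate with an eigenvector for $\overline{\lambda}$.

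To make the identification explicit in the form used later, I would also use Lemma~\ref{GSMT} and Lemma~\ref{GSMTv}. Write $\psi^{(\lambda)}_{\mathcal L}=\partial_\lambda^*f$ with $f\in\ker(T-\lambda_TI)$ and $\lambda=e^{i\arccos\lambda_T}$. For $\lambda\neq\pm1$,
\[
\overline{\partial_\lambda^*f}=d^*\overline{f}+\overline{\lambda}\,S_0d^*\overline{f}=\partial_{\overline{\lambda}}^*\overline{f}.
\]
Here $\overline{\lambda}=e^{-i\arccos\lambda_T}$ is the partner eigenvalue attached to the same $\lambda_T$, and $\overline f\in\ker(T-\lambda_TI)$. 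So this vector is exactly of the form prescribed for $\ker(U_0|_{\mathcal L}-\overline{\lambda}I)$ in Lemma~\ref{GSMT}. In the case $\lambda=\pm1$ we have $\overline{\lambda}=\lambda$ and $\partial^*_{\pm1}=d^*$ is real, so the claim is immediate. Any normalization factor, such as $1/\sqrt{2(1-\lambda_T^2)}$ in Lemma~\ref{GSMTv}, is real and passes through the conjugation unchanged.

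I expect no real obstacle. The only point needing care is the bookkeeping of the convention: the superscript $(\overline{\lambda})$ must denote the eigenvector built from $\overline{f}$ rather than from $f$ itself. One may take $f$ real (possible since $T$ is real symmetric) to make the two coincide.
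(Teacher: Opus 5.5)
Your proof is correct and follows essentially the route the paper intends: the paper gives no separate argument and presents the lemma as an immediate consequence of the spectral mapping theorem. That is, conjugating the explicit eigenvector $\partial_\lambda^* f$ (the formula of Lemma~\ref{GSMTv}) with $f$ real yields $\partial_{\overline{\lambda}}^* f$, which is your third paragraph, while your observation that $U_0$, $T$ and $\mathcal{L}$ are real or conjugation-invariant supplies the underlying reason. Your reading of $\psi_{\mathcal{L}}^{(\overline{\lambda})}$ as the eigenvector built from the same real $f$ matches the convention used later in the proof of Lemma~\ref{key lemma}, and your conjugation step works whether $\partial_\lambda^*$ carries $+\lambda$ (as in its definition) or $-\lambda$ (as in Lemma~\ref{GSMTv}).
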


\subsection{Relation to quantum graph with the magnetic vector potential}
We explain why we call $B$ the magnetic vector potential. 
To this end, we review the relation of this quantum walk model to the following quantum graph~i.g.,\cite{SFRH,BK}, which gives a stationary Schr{\"o}dinger equation on the metric graph. 
The metric graph considered here has the euclidean length $1$ in each arc, and each arc is regarded as the wire with length $1$ and each vertex is the junction. 
Then, the total state space is roughly denoted by $\{\psi\in \mathbb{C}^{A\times [0,1]}\;|\;\psi(e,x)=\psi(\bar{e},1-x) \text{ for any $e\in A$ and $x\in[0,1]$}\}$. 
The pair $(e,x)\in A\times [0,1]$ presents the position with the euclidean distance from $o(e)$ on the arc $e$.  
In the literature on quantum graphs, the magnetic vector potential appears as follows:  
the quantum graph is the stationary Schr{\"o}dinger equation for the plane wave on this metric graph, satisfying the following equation on each wire with the vector potential $B$:
\begin{equation}\label{eq:schro} 
\left(-i\frac{\partial}{\partial x}+B(e)\right)^2\bs{\varphi}(e,x)=k^2 \bs{\varphi}(e,x). 
\end{equation}
A typical boundary condition on each vertex $u\in V$ is given by   
\begin{enumerate}
\item Continuity: 
for $e_1,e_2,\dots,e_s\in A$ with $o(e_j)=u$, 
\[\bs{\varphi}(e_1,0)=\bs{\varphi}(e_2,0)=\cdots=\bs{\varphi}(e_s,0).\] 
\item Current conservation: 
by setting $s_u=\bs{\varphi}(e,0)$ with $o(e)=u$,  
\[ \sum_{o(e)=u}\left(-i\frac{\partial}{\partial x}+B(e)\right)\bs{\varphi}(e,x)\bigg|_{x=0}=\lambda s_u. \]
\end{enumerate}
The parameter $\lambda\in [0,\infty)$ is the height of the delta potential controlling the scattering of the plane wave at each junction, and $k^2$ is the energy of the plane wave. 
A general solution of  (\ref{eq:schro}) with the condition $\bs{\varphi}(e,x)=\bs{\varphi}(\bar{e},1-x)$ can be described by using two constants $\gamma_e,\gamma_{\bar{e}}\in \mathbb{C}$ such that  
\begin{equation}\label{eq:sol}
\bs{\varphi}(e,x)= \gamma_e\exp[{-i(k+B(e))x}]+\gamma_{\bar{e}}\exp[{-i(k+B(\bar{e}))(1-x)}]. 
\end{equation}
Therefore, the sequence $\{\gamma_e\}_{e\in A}$ isomorphic to the solution of this stationary Schr{\"o}dinger equation. 
By inserting this expression into the boundary conditions (1) and (2), the sequence $\{\gamma_e\}_{e\in A}$ is characterized by $U_\beta$ as follows, which is the relation to our quantum walk: 
\begin{pro}[\cite{QGW1,QGW2,GK}]
Set $\bs{\gamma}\in \mathbb{C}^A$ with $\gamma(e)=\gamma_e$ ($e\in A$) in (\ref{eq:sol}). If $\lambda=0$ and $k=2n\pi$ ($n=1,2,\dots$), then $\bs{\gamma}$ must satisfy
\[ U_\beta \bs{\gamma} = \bs{\gamma}. \]
\end{pro}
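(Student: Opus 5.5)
The plan is to substitute the general solution \eqref{eq:sol} into the two vertex conditions and read off a linear system for $\bs{\gamma}$. Fix a vertex $u$ and the arcs $e_1,\dots,e_s$ with $o(e_j)=u$. At $x=0$ we have $\bs{\varphi}(e,0)=\gamma_e+\gamma_{\bar e}\exp[-i(k+B(\bar e))]$. With $k=2n\pi$ this becomes $\gamma_e+e^{-iB(\bar e)}\gamma_{\bar e}=\gamma_e+e^{iB(e)}\gamma_{\bar e}$, using $B(\bar e)=-B(e)$. Condition (1) therefore says that the quantity
\[
s_u=\gamma_e+e^{iB(e)}\gamma_{\bar e}
\]
is the same for every $e$ with $o(e)=u$.

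Next I differentiate \eqref{eq:sol}. One has
\[
\left(-i\partial_x+B(e)\right)e^{-i(k+B(e))x}=-k\,e^{-i(k+B(e))x},
\]
and, since $\exp[-i(k+B(\bar e))(1-x)]$ has derivative $i(k+B(\bar e))$ times itself,
\[
\left(-i\partial_x+B(e)\right)e^{-i(k+B(\bar e))(1-x)}=\bigl(k+B(\bar e)+B(e)\bigr)e^{-i(k+B(\bar e))(1-x)}=k\,e^{-i(k+B(\bar e))(1-x)}.
\]
At $x=0$ and $k=2n\pi$ the flux term at the vertex is therefore
\[
k\bigl(-\gamma_e+e^{iB(e)}\gamma_{\bar e}\bigr).
\]
Taking $\lambda=0$ in condition (2), and using $k\neq0$, gives $\sum_{o(e)=u}\bigl(\gamma_e-e^{iB(e)}\gamma_{\bar e}\bigr)=0$.

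Now set $w_e=e^{iB(e)}\gamma_{\bar e}$ for $o(e)=u$. Condition (1) gives $\gamma_e=s_u-w_e$, so the sum becomes $\sum_{o(e)=u}(s_u-2w_e)=0$, that is,
\[
s_u=\frac{2}{\deg u}\sum_{o(e)=u}w_e .
\]
Substituting back,
\[
\gamma_e=\frac{2}{\deg u}\sum_{o(e')=u}e^{iB(e')}\gamma_{\bar{e'}}-e^{iB(e)}\gamma_{\bar e}.
\]
I then reindex by $f=\bar{e'}$, so that $t(f)=u=o(e)$ and $e^{iB(e')}=e^{iB(f^{-1})}$. The last term corresponds to $f=e^{-1}$.

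The main obstacle is matching these phases with the definition, which attaches $e^{iB(e^{-1})}$ to the row index $e$ rather than to $f$. I will try to show that the equation above is $\bs{\gamma}=U_\beta\bs{\gamma}$ possibly after a gauge change $\gamma'_e=e^{i\theta_e}\gamma_e$, or equivalently after using the form $U_\beta=S_\beta C$ from \eqref{Umat}. Computing $(S_\beta C\bs{\gamma})_e=e^{iB(e^{-1})}(C\bs{\gamma})_{e^{-1}}$ places the phase on the outgoing side. I therefore expect to work with the transformed vector $\gamma'_e:=e^{iB(e)}\gamma_{\bar e}$, the incoming amplitude at the vertex, or with $\gamma$ evaluated at the other end. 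I will check which normalization makes the vertex relation read $\gamma'=S_\beta C\gamma'$, and adjust the identification of $\gamma_e$ with the convention of \cite{QGW2,GK}. Once the bookkeeping is fixed, the rest is the routine computation above.
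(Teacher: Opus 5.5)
The paper gives no derivation of its own beyond inserting \eqref{eq:sol} into conditions (1) and (2), and that is your route; everything up to your reindexed vertex relation is correct. The gap is the last step. You leave it open, and it cannot be closed in the form you are aiming for.

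In matrix form your relation reads $\bs{\gamma}=U_0\Theta\bs{\gamma}$, with $\Theta:=\mathrm{diag}\bigl(e^{iB(e^{-1})}\bigr)_{e\in A}$, because the phase $e^{iB(f^{-1})}$ sits on the column index. On the other hand $S_\beta=\Theta S_0$, so $U_\beta=S_\beta C=\Theta U_0$, with the phase on the row index. These matrices have different fixed spaces in general, so $\bs{\gamma}$ itself is not a fixed vector of $U_\beta$.

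Concretely, take $P_2$ with $a=uv$. Then $C=I$ and, in the basis $(a,a^{-1})$,
\[
U_\beta=S_\beta=\begin{bmatrix}0 & e^{-i\beta}\\ e^{i\beta} & 0\end{bmatrix},
\]
whose fixed space is spanned by $(1,e^{i\beta})^{T}$. Your own current condition at $u$, however, forces $\gamma_a=e^{i\beta}\gamma_{a^{-1}}$, i.e.\ $\bs{\gamma}\propto(1,e^{-i\beta})^{T}$. So with the conventions written in the paper, the statement as literally formulated fails whenever $e^{2i\beta}\neq 1$, and no bookkeeping will rescue it.

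The missing idea is a one-line conjugation. Since $U_0\Theta=\Theta^{-1}(\Theta U_0)\Theta=\Theta^{-1}U_\beta\Theta$, multiplying your relation by $\Theta$ gives $U_\beta(\Theta\bs{\gamma})=\Theta\bs{\gamma}$.

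The vector $(\Theta\bs{\gamma})_e=e^{-iB(e)}\gamma_e$ is exactly the value of the wave $\gamma_e e^{-i(k+B(e))x}$ at $x=1$. That is your option ``$\gamma$ evaluated at the other end.'' Because $\Theta$ is a diagonal unitary, $\bs{\gamma}\mapsto\Theta\bs{\gamma}$ is a bijection from the solutions onto $\ker(U_\beta-I)$. That is the correct content of the proposition.

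Your other candidate, $\gamma'_e=e^{iB(e)}\gamma_{\bar e}$, does not work. At a vertex of degree one the current condition reads $\gamma_e=e^{iB(e)}\gamma_{\bar e}$, so on $P_2$ this vector coincides with $\bs{\gamma}$ and fails again. You should commit to $\Theta\bs{\gamma}$ and state the result in that gauge.
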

A vector $\bs{\gamma}'\in \mathbb{C}^A$ such that there exists $m$ such that $U_\beta^m \bs{\gamma}'=\bs{\gamma}'$ may be considered as an extended above solution. 
In this paper, such a $\bs{\gamma}'$ is arbitrary in $\mathbb{C}^A$ with $m=T$ for $\beta=0$ because the underlying graph $G$ is periodic, and how the perturbation $\beta$ affects it is considered.

\section{Quantum graph walks induced by periodic graphs}
In this section, we present the main theorem describing the effect of a small perturbation induced by a vector potential of the strength $\beta$ on a quantum graph walk on a graph $G$ generating a periodic Grover walk.

We introduce a local decomposition around the reference edge $a \in A(G)$.
\begin{defi}   
    Let $G = (V, E)$ be a graph, and let $a \in A(G)$ be a reference edge. 
We define
\begin{align*}
     \partial X_a &= \{ t(a), o(a) \}, \\
     \partial A_a &= \{ e \in A(G) \mid t(e) \in \partial X_a \}, 
\end{align*}
and
\begin{align*}
\mathcal{B}_{\partial X_a}
&:=
\{ f \in \mathbb{C}^V \mid \mathrm{supp}(f) \cap \partial X_a \neq \emptyset \} \\
\mathcal{B}_{\partial X_a}^{\perp}
&:=
\{ f \in \mathbb{C}^V \mid \mathrm{supp}(f) \cap \partial X_a = \emptyset \},
\end{align*}
where, for $f \in \mathbb{C}^V$, 
\[
\mathrm{supp}(f) = \{v \in V \mid f(v) \neq 0\}.
\]
\end{defi} 
Note that since $|\partial X_a| =2$, there exist at most two linearly independent eigenvectors in
\[
\ker(T-\lambda_T I) \cap \mathcal{B}_{\partial X_a}
\]
for $\lambda_T \in \Spec(T)$.
We construct CONS of $\ker (T- \lambda_T I)$ 
as 
\[
f_1^{(\lambda_T)}, f_2^{(\lambda_T)}, \dots, f_{\dim(\ker(T-\lambda_T I))}^{(\lambda_T)} 
\]
so that $f_1^{(\lambda_T)}, f_2^{(\lambda_T)} \in \mathcal{B}_{\partial X_a} $ and $f_j^{(\lambda_T)} \in \mathbb{C}^{V}\setminus \mathcal{B}_{\partial X_a}$ $(j \geq 3)$ if $\dim(\ker(T-\lambda_T I) \cap \mathcal{B}_{\partial X_a} =2. $
In particular, we take
\begin{align}\label{f}
\ker(T-\lambda_T I) \cap \mathcal{B}_{\partial X_a}
=
\{ f_1^{(\lambda_T)}, f_2^{(\lambda_T)} \}, 
\end{align}
where $f_1^{(\lambda_T)}$ and $f_2^{(\lambda_T)}$ are real vectors. Concrete forms of \(f_l^{(\lambda_T)}\) are given in Section~5 for the cycle \(C_n\) and the complete bipartite graph \(K_{k,k}\).

\begin{defi}[Harmitian $H$] \label{H1}
    Let $G$ be a graph, and let $a \in A(G)$ be a reference edge. 
    For $e \in A,$ $\iota_{e} : \mathbb{C}^V \to \mathbb{C}^{\{o(e), t(e)\}}$ be defined by 
    \[
    (\iota_{e} f)  = \begin{bmatrix}
        \frac{1}{\sqrt{\deg o(e)}}f(o(e)) \\
        \frac{1}{\sqrt{\deg t(e)}}f(t(e))
    \end{bmatrix}
    \]
    for any $f \in \mathbb{C}^V.$
    For $\lambda_T \in \Spec(T) $ with $\ker(T- \lambda_T I) \cap \mathcal{B}_{\partial X_a} = \mathrm{span} \{ f_1 ^{(\lambda_T)}, f_2 ^{(\lambda_T)}\}$ we set the area spanned by $\iota_{e_1}f_1^{(\lambda_T)}, \iota_{e_2}f_2^{(\lambda_T)}$ by 
    \[
    E_{\lambda_T} (e_1, e_2) = \det \begin{bmatrix}
        \iota_{e_1}f_1^{(\lambda_T)} & \iota_{e_2}f_2^{(\lambda_T)}
    \end{bmatrix}
    \]
    for $e_1, e_2 \in A.$
    For $\lambda_T \in \Spec(T)$ with $\dim(\ker(T- \lambda_T I)\cap \mathcal{B}_{\partial X_a}) \leq 1, $
    we set $E_{\lambda_T}(e_1, e_2)= 0.$
    Then, the Hermitian $H = H(G, a) \in \mathbb{C}^{A \times A}$ is defined by 
    \[
    (H)_{e_1, e_2} = 
    \sum_{\lambda_T \in \Spec(T) \setminus\{\pm1\}}\frac{1}{1-\lambda_T^2}E_{\lambda_T} (a, a)
    \left(  
    E_{\lambda_T} (e_1, e_2)
    +
    E_{\lambda_T} (e_2, e_1)
    \right).
    \]
\end{defi}

\begin{defi}\label{ST}
    Let $G$ be a graph, and let $U_0=U_0(G)$ be the time evolution matrix of Grover walks. 
    If $\dim\left(\ker(T-\lambda_T I)\cap \mathcal{B}_{\partial X_a}\right) = 1$, we say that $\lambda_T$ has a simple eigenvector in $\mathcal{B}_{\partial X_a}$ and denote such an eigenvector by $g_{\lambda_T}$.
    Define 
    \[
    \mathcal{S}_{\mathrm{sim}}
    =
    \bigoplus_{\substack{
    \lambda_T\in\Spec(T) \\ 
    \lambda_T \text{  has a simple eigenvector in $\mathcal{B}_{\partial X_a}$}}}
    \left\{
    \partial_{\lambda}^*
    g_{\lambda_T} \mid  \lambda=e^{\pm i \arccos \lambda_T}
    \right\},
    \]
    and 
    \[
\mathcal{T}_{\mathrm{per}}
=
\bigoplus_{\substack{
\lambda_T\in\Spec(T)
}}
\left\{
\partial_{\lambda}^*
f \mid 
f \in 
\ker(\lambda_T I-T)\cap \mathcal{B}_{\partial X_a}^{\perp} ,\  \lambda= e^{\pm i \arccos \lambda_T}
\right\}.
    \]
\end{defi}

In Section~5, we discuss several explicit examples of the Hermitian matrix $H$ for particular graphs.

Using the Hermitian matrix \(H\) defined above, we now state the main theorem of this paper. The first main result shows that a periodic Grover walk with a  small perturbation induced by a vector potential admits an effective continuous-time description.

\begin{thm}\label{main}
    Let $G$ be a graph that induces a $\tau$-periodic Grover walk. For the small strength of a vector potential $\beta$ and a reference edge $a \in A(G)$, denote $U_{\beta} = U_{\beta}(G,a)$ and $H = H(G,a)$.   Set $\phi_0$ as the initial state, and $\varphi_{\lceil \frac{t}{\beta}\rceil}^{[D]}$   as the state at the final time 
    $\lceil \frac{t}{\beta}\rceil$ of the discrete-time quantum walk with some $t>0$, defined by the following time evolution:
\[
\varphi_0^{[D]} = \phi_0;\  \varphi_n^{[D]} = U^{\tau}_{\beta} \varphi_{n-1}^{[D]} \ 
\left(n=1,\dots, \left\lceil \frac{t}{\beta} \right \rceil
\right).
\]
     Set $\psi_t^{[C]}$ as the state at time 
    $t$ of the continuous-time quantum walk, defined by the following time evolution:
    \begin{align*}
\psi_0^{[C]} = \phi_0;\  
-i \frac{\partial}{\partial t}\psi_t^{[C]} = \tau H\psi_t^{[C]} \ (t>0).
\end{align*}
Then, we have
\[
\psi_t^{[C]} = \lim_{\beta \to 0} \varphi_{\lceil \frac{t}{\beta}\rceil}^{[D]}.
\]

\end{thm}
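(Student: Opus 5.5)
Since $U_0^\tau=I$, I will write $U_\beta^\tau = I + \beta K + O(\beta^2)$ and show that $K = i\tau H$ in the sense of the relevant projection. Then $(U_\beta^\tau)^{\lceil t/\beta\rceil}\to e^{i\tau H t}$, which is exactly the solution of $-i\partial_t\psi=\tau H\psi$.

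\textbf{Step 1 (first-order expansion).} Only the shift carries $\beta$, so I write $S_\beta = S_0 + \beta P + O(\beta^2)$ with $P = i\,(\ket{a^{-1}}\!\bra{a}B'(a) + \ket{a}\!\bra{a^{-1}}B'(a^{-1}))S_0$-type, that is, a rank-two operator supported on the pair $\{a,a^{-1}\}$. Hence $U_\beta = U_0 + \beta W + O(\beta^2)$ with $W=PC$, and
\[
U_\beta^\tau = I + \beta\sum_{k=0}^{\tau-1} U_0^{\tau-1-k} W U_0^{k} + O(\beta^2) = I + \beta\, U_0^{-1}\tau\,\overline{W}+O(\beta^2).
\]
Here $\overline{W}=\frac1\tau\sum_k U_0^{-k}(WU_0^{-1})U_0^{k}$ is the time average. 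Because $U_0$ is unitary with $\tau$-th roots of unity as eigenvalues (Lemma~\ref{thm:SMT}), this average equals the block-diagonal part $\sum_\lambda \Pi_\lambda (U_0^{-1}W)\Pi_\lambda$ in the eigenprojections $\Pi_\lambda$ of $U_0$. Unitarity of $U_\beta$ forces $K:=\tau\sum_\lambda\Pi_\lambda U_0^{-1}W\Pi_\lambda$ to be anti-Hermitian, so $K=i\tau H'$ for some Hermitian $H'$ commuting with $U_0$.

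\textbf{Step 2 (limit).} I set $N=\lceil t/\beta\rceil$ and $M_\beta=U_\beta^\tau$. I write $M_\beta=e^{\beta K+R_\beta}$ with $\|R_\beta\|=O(\beta^2)$, using the matrix logarithm near $I$, which is legitimate for small $\beta$. Then $M_\beta^N = e^{N\beta K + NR_\beta}$, $N\beta\to t$, and $NR_\beta=O(\beta)$. So $\varphi_N^{[D]}\to e^{tK}\phi_0=e^{i\tau tH'}\phi_0$, which solves the continuous-time equation with generator $\tau H'$. Alternatively, a telescoping estimate $\|M_\beta^N-e^{N\beta K}\|\le N\|M_\beta-e^{\beta K}\|=O(\beta)$ avoids logarithms, using unitarity of $M_\beta$ and of $e^{\beta K}$.

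\textbf{Step 3 (identify $H'=H$), the main obstacle.} I compute the matrix elements $\langle\psi,U_0^{-1}W\psi'\rangle$ for $\psi,\psi'$ in the same eigenspace of $U_0$. Using Lemma~\ref{GSMT}, I decompose each eigenspace into $\partial_\lambda^* f$ parts and $\mathcal{C}_\pm$ parts. Because $W$ is supported on $a,a^{-1}$, only the values $\psi(a),\psi(a^{-1})$ enter. By Lemma~\ref{GSMTv}, $\psi(a^{-1})=-\lambda\overline{\psi(a)}$, and $\psi(a)$ is given by the explicit formula involving $f(o(a))/\sqrt{\deg o(a)}$ and $f(t(a))/\sqrt{\deg t(a)}$, i.e.\ by $\iota_a f$.

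I then check the following cases:
\begin{itemize}
\item For $\mathcal{C}_\pm$ vectors and for $f\in\mathcal{B}_{\partial X_a}^\perp$, these boundary values make the contribution vanish, or cancel in the averaged quantity. This gives $\mathcal{L}^\perp$ and $\mathcal{T}_{\rm per}$ inside $\ker H'$.
\item For $\kappa=1$, the diagonal contribution of a single vector cancels by the $\lambda\leftrightarrow\bar\lambda$ symmetry of Lemma~\ref{GSMTb}.
\item For $\kappa=2$, the $2\times2$ block in the basis $\partial_\lambda^*f_1,\partial_\lambda^*f_2$ has antisymmetric off-diagonal entries. These are proportional to $\det[\iota_af_1\ \iota_af_2]=E_{\lambda_T}(a,a)$, with the normalization $1/(1-\lambda_T^2)$ coming from Lemma~\ref{GSMTv}.
\end{itemize}
Summing the rank-one pieces $\ket{\partial_\lambda^*f_j}\!\bra{\partial_\lambda^*f_k}$ over $\lambda=e^{\pm i\arccos\lambda_T}$ and evaluating at entries $(e_1,e_2)$ should reproduce
\[
\frac{E_{\lambda_T}(a,a)}{1-\lambda_T^2}\bigl(E_{\lambda_T}(e_1,e_2)+E_{\lambda_T}(e_2,e_1)\bigr).
\]
This bookkeeping of phases and signs is where I expect errors to be likely, so I would verify it on $C_4$ first.
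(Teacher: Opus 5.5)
Your Steps 1 and 2 are the paper's own argument: first-order expansion of $U_\beta^\tau$, then the period average kills every block $\Pi_\lambda(\cdot)\Pi_{\lambda'}$ with $\lambda\neq\lambda'$ because $\Spec(U_0)$ consists of $\tau$-th roots of unity, then a telescoping limit. They are sound apart from a stray $U_0^{-1}$ in your display; the first-order term is $\tau\overline{W}=K=\tau\sum_\lambda\overline{\lambda}\,\Pi_\lambda W\Pi_\lambda$.

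The gap is Step 3, and it is not a matter of bookkeeping. You treat lifted eigenvectors $\partial_\lambda^*f$ and $\mathcal{L}^\perp$ vectors separately. At $\lambda=\pm1$, however, they lie in the \emph{same} eigenspace of $U_0$, and the block-diagonal average keeps their cross terms. For $\lambda=1$, $\ker(U_0-I)=\mathrm{span}\{s\}\oplus Z$. Here $s=|A|^{-1/2}\mathbf{1}$ comes from $\lambda_T=1$ and satisfies $S_0s=s$, while $Z=\ker d\cap\ker(S_0+I)$ is the cycle space, whose vectors satisfy $c(a^{-1})=-c(a)$. Hence $\langle s,\sigma c\rangle=2\,\overline{s(a)}\,c(a)$, which is nonzero for some $c\in Z$ as soon as $a$ lies on a cycle. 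Therefore
\[
\Pi_1\sigma\Pi_1=\ket{s}\bra{w}+\ket{w}\bra{s},\qquad w=\Pi_Z\sigma s\neq 0,
\]
which is a nonzero block that $H$ of Definition~\ref{H1} does not contain. Similar cross terms can occur at $\lambda=-1$ for bipartite $G$. The relation $\psi(e^{-1})=\pm\psi(e)$ from Lemma~\ref{GSMTv}, on which your case analysis rests, holds on $\mathrm{span}\{s\}$ and on $Z$ separately (with opposite signs), but not on their sum.

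There is also a sign problem. Since $(S_\beta)_{a,a^{-1}}=e^{iB(a^{-1})}=e^{-i\beta}$, you get $P=-i\sigma S_0$, $W=-i\sigma U_0$ and $K=-i\tau\sum_\lambda\Pi_\lambda\sigma\Pi_\lambda$. A direct computation shows that $\sum_{\lambda\neq\pm1}\Pi_\lambda\sigma\Pi_\lambda$ is exactly $H$ of Definition~\ref{H1}, so the sign is opposite to the statement.

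The cycle you wanted to test exposes both issues. On $C_n$ every degree is $2$, so $U_0$ merely translates along $C_+$ and along $C_-$, and a phase is acquired only on entering $a$ or $a^{-1}$. Hence exactly $U_\beta^n=e^{-i\beta}I_{C_+}\oplus e^{i\beta}I_{C_-}$, with $\tau=n$. Take $\phi_0=s=(2n)^{-1/2}(\mathbf{1}_{C_+}+\mathbf{1}_{C_-})\in\ker(U_0-I)$. Then $H\phi_0=0$ (indeed $\phi_0\in\mathcal{S}_{\mathrm{sim}}$), so the statement predicts $\psi^{[C]}_t=\phi_0$. But $\varphi^{[D]}_{\lceil t/\beta\rceil}\to(2n)^{-1/2}(e^{-it}\mathbf{1}_{C_+}+e^{it}\mathbf{1}_{C_-})$, which is orthogonal to $\phi_0$ at $t=\pi/2$.

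So Step 3 cannot be completed. What Steps 1 and 2 actually prove is convergence to $e^{tK}\phi_0$ with $K=-i\tau\sum_{\lambda\in\Spec(U_0)}\Pi_\lambda\sigma\Pi_\lambda$. That is the statement with $H$ replaced by $-\bigl(H+\Pi_1\sigma\Pi_1+\Pi_{-1}\sigma\Pi_{-1}\bigr)$. As written, the statement fails for every reference edge lying on a cycle.

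The paper's proof has the same gap. It writes $S_\beta-S_0=(i\beta\sigma+\mathcal{O}(\beta^2))S_0$, which has the wrong sign. It then discards $\lambda=\pm1$ via Lemma~\ref{braket1}, which fails exactly because of the cross terms above.
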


The second main result concerns robustness of periodicity under a  small perturbation induced by a vector potential.

\begin{thm}\label{main2}
    Set $U_{0} = U_{0}(G)$ and $H = H(G, a)$, for a graph $G$ that induces $\tau$-periodic Grover walks and a reference edge $a \in A(G)$. For  $\lambda = e^{\pm i \arccos{\lambda_T}} \in \Spec(U_0)$ with $\dim \left( \ker (T- \lambda_T I) \cap \mathcal{B}_{\partial X_a}\right) =2$, we denote by $\ket{\lambda^{(1)}}, \ket{\lambda^{(2)}}$ an orthonormal basis of $\ker(U_0-\lambda I).$ Then, the spectrum of $H$ is described as follows:
    \[
    \Spec(H) = \left\{ \mu_{\lambda_T}^{\pm} |  \lambda_T \in \Spec (T), \dim \left( \ker (T- \lambda_T I) \cap \mathcal{B}_{\partial X_a}\right) =2 \right\} \cup \{0\}, 
    \]
    where 
    \begin{align}\label{mu}
    \mu_{\lambda_T}^{\pm} = \frac{ \mp E_{\lambda_T}(a, a)}{\sqrt{1- \lambda_T^2}}.
    \end{align}
Moreover, for $\lambda = e^{i\arccos \lambda_T} \in \Spec(U_0)$,   
the vectors
\begin{align}\label{muv}
\ket{\psi_\pm^{(\lambda)}}
=
\frac{1}{\sqrt2}\partial_{\lambda}^*f_1^{(\lambda_T)}
\pm
\frac{i}{\sqrt2}\partial_{\lambda}^*f_2^{(\lambda_T)}
\end{align}
are normalized eigenvectors of \(H\) associated with the eigenvalues
$\mu_{\lambda_T}^{\pm},
$ 
respectively.
Furthermore, \begin{align}\label{kerh}
    \ker H = \mathcal{S}_{\mathrm{sim}} 
    \oplus
    \mathcal{T}_{\text{per}}
    \oplus
    \mathcal{L}^{\perp}.
    \end{align}
\end{thm}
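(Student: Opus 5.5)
The plan is to read off the structure of $H$ from Definition~\ref{H1} and to use the spectral decomposition of $U_0$ from Lemma~\ref{GSMT} together with Lemma~\ref{GSMTv}. The first step is to rewrite $H$ in bra--ket form. For a fixed $\lambda_T$ with $\kappa=2$, expanding the determinants gives
\[
E_{\lambda_T}(e_1,e_2)=\tfrac{f_1(o(e_1))f_2(t(e_2))}{\sqrt{\deg o(e_1)\deg t(e_2)}}-\tfrac{f_2(o(e_2))f_1(t(e_1))}{\sqrt{\deg t(e_1)\deg o(e_2)}} .
\]
The symmetrized combination $E_{\lambda_T}(e_1,e_2)+E_{\lambda_T}(e_2,e_1)$ should then be re-expressed through the entries of $\partial_\lambda^* f_j$ and $\partial_{\bar\lambda}^* f_j$ given in Lemma~\ref{GSMTv}. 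Here $f_1,f_2$ are real, and Lemma~\ref{GSMTb} gives $\overline{\partial_\lambda^* f}=\partial_{\bar\lambda}^*f$. The goal of this step is an identity of the form
\[
H=\sum_{\lambda_T:\kappa=2}\mu^+_{\lambda_T}\bigl(\ket{\psi_+^{(\lambda)}}\bra{\psi_+^{(\lambda)}}-\ket{\psi_-^{(\lambda)}}\bra{\psi_-^{(\lambda)}}\bigr)\ \ (+\ \text{the same for }\bar\lambda),
\]
up to the normalization $\frac{1}{\sqrt{2(1-\lambda_T^2)}}$ hidden in $\partial_\lambda^*$. Equivalently, I expect the rank-two operator $i\bigl(\partial_\lambda^*f_1\,(\partial_\lambda^*f_2)^*-\partial_\lambda^*f_2\,(\partial_\lambda^*f_1)^*\bigr)$, summed with its $\bar\lambda$ counterpart, to reproduce the symmetrized $E$ entrywise. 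The factor $1/(1-\lambda_T^2)$ in $H$ combines with the $1/\sqrt{1-\lambda_T^2}$ produced by the cross terms to give $\mu^\pm_{\lambda_T}=\mp E_{\lambda_T}(a,a)/\sqrt{1-\lambda_T^2}$.

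Granting this representation, the eigenvector claim \eqref{muv} and the spectrum follow quickly. First, $\partial_\lambda^*f_1$ and $\partial_\lambda^*f_2$ are orthonormal, because $\partial_\lambda^*$ is an isometry on $\ker(T-\lambda_T I)$ for $\lambda_T\neq\pm1$; this follows from $dd^*=I$ and $dS_0d^*=T$. Second, eigenspaces of the unitary $U_0$ for distinct $\lambda$ are orthogonal. Hence the vectors $\ket{\psi_\pm^{(\lambda)}}$ over all admissible $\lambda$ form an orthonormal family, and $H$ is diagonal in it with eigenvalues $\mu^\pm_{\lambda_T}$. I would also check that the $\bar\lambda$ contributions give the same eigenvalues via Lemma~\ref{GSMTb}.

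It then remains to show that everything orthogonal to these vectors lies in $\ker H$, which proves \eqref{kerh}. By Lemma~\ref{GSMT}, $\mathbb{C}^A=\mathcal{L}\oplus\mathcal{L}^\perp$, and $\mathcal{L}$ is spanned by $\partial^*_\lambda f$ with $f$ running over a CONS of each $\ker(T-\lambda_T I)$. The range of $H$ lies in the span of the $\partial^*_\lambda f_{1,2}^{(\lambda_T)}$ with $\kappa=2$, and $H$ is Hermitian. Therefore any vector orthogonal to all of these, namely $\mathcal{L}^\perp$, $\mathcal{T}_{\rm per}$ (images of $f_j$ with $j\ge3$ and of $f\in\mathcal{B}^\perp_{\partial X_a}$), and $\mathcal{S}_{\rm sim}$ (the $\kappa=1$ eigenvalues, where $E_{\lambda_T}:=0$), is annihilated by $H$. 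Counting dimensions then yields equality in \eqref{kerh}. Two further points need care. The eigenvalues $\lambda_T=\pm1$ are excluded from the sum defining $H$ and fall into $\mathcal{T}_{\rm per}$ or $\mathcal{S}_{\rm sim}$. One must also confirm that $\mu^\pm_{\lambda_T}\neq0$, i.e.\ $E_{\lambda_T}(a,a)\neq0$, which holds because $f_1,f_2$ restricted to $\partial X_a$ are linearly independent when $\kappa=2$.

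I expect the main obstacle to be the first step: verifying entrywise that the symmetrized determinants coincide with the rank-two combination of $\partial^*_\lambda$-vectors, keeping track of the phases $\lambda,\bar\lambda$ and the sign in $\psi(e^{-1})=-\lambda\overline{\psi(e)}$. My first attempt would expand $\partial^*_\lambda f_1(e_1)\overline{\partial^*_\lambda f_2(e_2)}$ explicitly, sum it with its conjugate from $\bar\lambda$, and check that the terms not proportional to $\lambda-\bar\lambda=2i\sqrt{1-\lambda_T^2}$ cancel by antisymmetry in $(1,2)$.
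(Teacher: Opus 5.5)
Your proposal is correct and is essentially the paper's proof. The paper likewise writes $H$ as a sum over the $\kappa=2$ eigenvalues of rank-two blocks $\frac{iE_{\lambda_T}(a,a)}{\sqrt{1-\lambda_T^2}}\bigl(\ket{\lambda^{(1)}}\bra{\lambda^{(2)}}-\ket{\lambda^{(2)}}\bra{\lambda^{(1)}}\bigr)$, taken from the computation in Lemma~\ref{key lemma} (your entrywise identity read backwards); it then diagonalizes the $2\times2$ antisymmetric matrix and gets $\ker H$ by inclusion plus a dimension count, where your check $E_{\lambda_T}(a,a)\neq0$ supplies the rank statement the paper leaves implicit.

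Read ``the same for $\bar\lambda$'' as ``the complex conjugate'': the $\bar\lambda$ block carries $-iE_{\lambda_T}(a,a)/\sqrt{1-\lambda_T^2}$, so $\frac{1}{\sqrt2}\bigl(\partial_{\bar\lambda}^*f_1\pm i\,\partial_{\bar\lambda}^*f_2\bigr)$ has eigenvalue $\mu^{\mp}_{\lambda_T}$, which leaves the spectrum unchanged.
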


\section{Proof of main theorem}

In this section, we present several definitions and lemmas for the proof of Theorem~\ref{main} and Theorem~\ref{main2}, and conclude with the proof at the end of the section.

\begin{defi}[Projection of $\partial X_a$ and $\partial A_a$]
    Let $G$ be a graph, and let $a \in A(G)$ be a reference edge.
    We define $\Pi_{\partial X_a}$ and $\Pi_{\partial A_a}$ as the projections onto the subspaces corresponding to the vertices in $\partial X_a$ and the directed edges  in $\partial A_a$, respectively. That is, 
\begin{align*}
    (\Pi_{\partial X_a})_{x,y} = 
    \begin{cases}
        1 & x=y, \ x\in \partial X_a, \\
        0 & \text{otherwise},
    \end{cases}
\end{align*}
 and
 \begin{align*}
    (\Pi_{\partial A_a})_{e,f} = 
    \begin{cases}
        1 & e=f, \ e\in \partial A_a, \\
        0 & \text{otherwise}.
    \end{cases}
\end{align*}
Also, let $\sigma = \sigma(G, a) \in \mathbb{C}^{A(G)\times A(G)}$ be the matrix defined by
\[
(\sigma)_{e, f}=\begin{cases}
    1 & e=f=a, \\
    -1 & e=f=a^{-1}, \\
    0 & \text{otherwise}.
\end{cases}
\]
\end{defi}

\begin{lem}\label{commute}
    Let $G$ be a graph. For the strength of a vector potential $\beta$ and a reference edge $a \in A(G)$, set $U_{\beta} = U_{\beta}(G,a) = S_{\beta}C$. 
    Then $C$ and $\Pi_{\partial A_a}$ commute. Moreover, $\sigma S_{\beta}$ and $\Pi_{\partial A_a}$ also commute.
\end{lem}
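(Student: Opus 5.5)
The plan is to verify both commutation relations entrywise, using only the definitions of $C$, $S_\beta$, $\sigma$ and $\Pi_{\partial A_a}$. Since $\Pi_{\partial A_a}$ is a diagonal $0/1$ matrix, a matrix $M$ commutes with it if and only if $M_{e,f}=0$ whenever exactly one of $e,f$ lies in $\partial A_a$. In other words, $M$ must be block diagonal with respect to the splitting $A=\partial A_a\sqcup(A\setminus\partial A_a)$. So for each of $C$ and $\sigma S_\beta$ I will check that every nonzero entry $M_{e,f}$ satisfies $e\in\partial A_a\iff f\in\partial A_a$.

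For the coin matrix, I would compute $C=2d^*d-I$ explicitly. Since $(d^*d)_{e,f}=\sum_x d_{x,e}d_{x,f}$, we get $(d^*d)_{e,f}=\frac{1}{\deg t(e)}\delta_{t(e),t(f)}$. Hence $C_{e,f}=\frac{2}{\deg t(e)}\delta_{t(e),t(f)}-\delta_{e,f}$, and this vanishes unless $t(e)=t(f)$. Membership in $\partial A_a$ depends only on the terminus ($e\in\partial A_a\iff t(e)\in\partial X_a$). So whenever $C_{e,f}\neq0$, either both $e,f$ lie in $\partial A_a$ or neither does. This gives $C\Pi_{\partial A_a}=\Pi_{\partial A_a}C$.

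For $\sigma S_\beta$, the matrix $\sigma$ is diagonal and supported only on $\{a,a^{-1}\}$. Thus $(\sigma S_\beta)_{e,f}=\sigma_{e,e}(S_\beta)_{e,f}$, which is nonzero only when $e\in\{a,a^{-1}\}$ and $f=e^{-1}$. The only nonzero entries are therefore $(a,a^{-1})$ and $(a^{-1},a)$. Both $a$ and $a^{-1}$ lie in $\partial A_a$, because $t(a)$ and $t(a^{-1})=o(a)$ both belong to $\partial X_a$. So $\sigma S_\beta=\Pi_{\partial A_a}\sigma S_\beta\Pi_{\partial A_a}$, and it commutes with $\Pi_{\partial A_a}$.

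No real obstacle is expected. The only point needing care is the second claim: $S_\beta$ by itself does \emph{not} commute with $\Pi_{\partial A_a}$, since an arc entering $\partial X_a$ reverses to an arc leaving it. Multiplying by $\sigma$ is exactly what kills those entries. I will make this explicit so the reader sees why $\sigma$ is needed.
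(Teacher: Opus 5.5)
Your proposal is correct and is the direct entrywise computation that the paper invokes without giving details. The block-diagonality criterion, the formula $C_{e,f}=\frac{2}{\deg t(e)}\delta_{t(e),t(f)}-\delta_{e,f}$ combined with the fact that membership in $\partial A_a$ depends only on $t(e)$, and the observation that $\sigma S_\beta$ is supported on $\{(a,a^{-1}),(a^{-1},a)\}\subset\partial A_a\times\partial A_a$ are exactly what is needed.

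Your side remark that $S_\beta$ alone does not commute with $\Pi_{\partial A_a}$ needs some arc entering $\partial X_a$ from a vertex outside it, and it fails for $G=P_2$, where $\Pi_{\partial A_a}=I$. It plays no role in the proof.
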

\begin{proof}
    This can be checked by a direct computation of the matrix elements.
\end{proof}

In this section, for each
\(\lambda\in\Spec(U_0)\),
we denote by
\[
\{\ket{\lambda^{(1)}},\dots,\ket{\lambda^{(r_\lambda)}}\}
\]
an orthonormal basis of
\[
\ker(\lambda I-U_0),
\]
where
\[
r_\lambda=\dim\ker(\lambda I-U_0).
\]
Furthermore, for each basis vector $\ket{\lambda^{(j)}}$, we denote by $(\ket{\lambda^{(j)}})_e$ its component indexed by $e\in A$.

\begin{lem} \label{braket}
    Let $\ket{\lambda}$ be an eigenvector of a time evolution matrix $U_{0}$ associated with the eigenvalue $\lambda$. Then the following holds.
    \[
    \bra{\lambda}\sigma\ket{\lambda}=0.
    \]
\end{lem}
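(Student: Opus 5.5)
We need to show $\bra{\lambda}\sigma\ket{\lambda}=|\psi(a)|^2-|\psi(a^{-1})|^2=0$ for an eigenvector $\psi=\ket{\lambda}$ of $U_0$. The key input is the second part of Lemma~\ref{GSMTv}, which relates $\psi(e^{-1})$ to $\psi(e)$ for eigenvectors of $U_0$. Since $\sigma$ is diagonal with entries $+1$ at $a$ and $-1$ at $a^{-1}$, we have
\[
\bra{\lambda}\sigma\ket{\lambda}=|\psi(a)|^2-|\psi(a^{-1})|^2 ,
\]
so it suffices to show $|\psi(a^{-1})|=|\psi(a)|$.

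We split into cases according to $\lambda$.

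\emph{Case $\lambda\neq\pm1$.} Lemma~\ref{GSMTv} gives $\psi(e^{-1})=-\lambda\overline{\psi(e)}$. Since $|\lambda|=1$ by unitarity of $U_0$, taking moduli at $e=a$ gives $|\psi(a^{-1})|=|\psi(a)|$.

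\emph{Case $\lambda=\pm1$.} Lemma~\ref{GSMTv} gives $\psi(e^{-1})=\pm\psi(e)$, and again $|\psi(a^{-1})|=|\psi(a)|$.

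One subtlety needs care. Lemma~\ref{GSMTv} is stated for normalized $\psi\in\ker(U_0-\lambda I)$, and the relation $\psi(e^{-1})=-\lambda\overline{\psi(e)}$ is antilinear. It therefore cannot hold for every vector of a multidimensional eigenspace (multiplying by $i$ would break it). It holds for the specific vectors $\partial_\lambda^*f$ with $f$ real. An arbitrary eigenvector is a superposition of such vectors together with a component in $\mathcal{L}^\perp$ when $\lambda=\pm1$, so the pointwise identity must be restated in a linear form. For $\lambda\ne\pm1$, Lemma~\ref{GSMTv} lets us write $\psi=\partial_\lambda^* f=d^*f+\lambda S_0 d^*f$ with $f\in\ker(T-\lambda_TI)$, where $f$ may be complex. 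Then
\[
\psi(a)=u+\lambda v,\qquad \psi(a^{-1})=v+\lambda u,
\]
where $u=f(t(a))/\sqrt{\deg t(a)}$ and $v=f(o(a))/\sqrt{\deg o(a)}$, up to the common normalization constant. Expanding,
\[
|u+\lambda v|^2-|v+\lambda u|^2=(|u|^2+|v|^2+2\,\mathrm{Re}(\bar u\lambda v))-(|u|^2+|v|^2+2\,\mathrm{Re}(\bar v\lambda u)).
\]
The two real parts need not cancel in general: $\mathrm{Re}(\lambda\bar u v)=\mathrm{Re}(\lambda u\bar v)$ requires $\mathrm{Im}\lambda\cdot\mathrm{Im}(\bar u v)=0$. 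This holds when $f$ is real up to a global phase, which is the setting of Lemma~\ref{GSMTv} and of the real basis \eqref{f}. So I would prove the claim for basis eigenvectors $\ket{\lambda^{(j)}}$ chosen as $\partial_\lambda^*$ of real vectors $f_j^{(\lambda_T)}$, matching the real CONS fixed in \eqref{f}.

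For $\lambda=\pm1$, every eigenvector satisfies $\psi(e^{-1})=\pm\psi(e)$. On $\mathcal{L}^\perp$ this is linear, because $\mathcal{C}_\pm\subset\ker(S_0\mp I)$. On $\mathcal{L}$, $\partial_{\pm1}^*f=d^*f$ with $f\in\ker(T\mp I)$, and then $S_0d^*f=\pm d^*f$ follows from $\|S_0d^*f\|=\|d^*f\|$ together with $\langle d^*f,S_0d^*f\rangle=\langle f,Tf\rangle=\pm\|f\|^2$, which is the equality case of Cauchy--Schwarz. So this case is linear and holds for all eigenvectors.

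The main obstacle is this real-versus-complex issue for nonreal $\lambda$. The plan is to state explicitly that $\ket{\lambda}$ is taken from the basis $\partial_\lambda^* f_j^{(\lambda_T)}$ with real $f_j^{(\lambda_T)}$, which is how the lemma is used later in the paper, and then apply the modulus identity above.
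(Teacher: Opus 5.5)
Your reduction is the paper's: write $\bra{\lambda}\sigma\ket{\lambda}=|\psi(a)|^2-|\psi(a^{-1})|^2$, then get $|\psi(a)|=|\psi(a^{-1})|$ from Lemma~\ref{GSMTv} and $|\lambda|=1$. The paper applies the antilinear identity $\psi(e^{-1})=-\lambda\overline{\psi(e)}$ to an \emph{arbitrary} normalized eigenvector, and your objection to that step is correct: the lemma as literally stated is false.

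For a concrete counterexample, take $C_4$ with $a=(0,1)$, $\lambda=i$ and $f(x)=i^x\in\ker T$. The vector $\psi(e)=f(t(e))-i\,f(o(e))$ is an eigenvector of $U_0$ for $i$. It vanishes on every arc $x\to x+1$ and equals $-2i\cdot i^x$ on $x\to x-1$, so $|\psi(a)|=0\neq 2=|\psi(a^{-1})|$.

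The failure is also forced structurally. If $\bra{\psi}\sigma\ket{\psi}=0$ held on every eigenspace, polarization would give $P_\lambda\sigma P_\lambda=0$ for all $\lambda$. Then $H=\sum_\lambda P_\lambda\sigma P_\lambda=0$, which is the identity established in the proof of Lemma~\ref{key lemma}. That contradicts the nonzero spectrum in Theorem~\ref{main2} and the $C_n$ example.

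What you prove instead is the correct statement, and it is all the paper needs.
\begin{itemize}
\item For $\lambda=\pm1$ your argument is fully linear: $S_0d^*f=\pm d^*f$ by the equality case of Cauchy--Schwarz, and $\mathcal{C}_\pm\subset\ker(S_0\mp I)$. So it covers every eigenvector, and it also gives a genuine proof of Lemma~\ref{braket1}.
\item For $\lambda\neq\pm1$ your computation shows the discrepancy is a multiple of $\mathrm{Im}(\lambda)\,\mathrm{Im}(\bar u v)$. Hence the identity holds for $\partial_\lambda^*f$ with $f$ real up to a phase. It holds for every eigenvector exactly when the restriction of $\ker(T-\lambda_T I)$ to $\partial X_a$ has rank at most one, in particular for simple eigenvalues.
\end{itemize}
This covers both places the lemma is used:
\begin{itemize}
\item the diagonal entries of $M_\sigma^{(\lambda,\lambda)}$ in the real basis \eqref{f} in Lemma~\ref{key lemma};
\item the $P_n$ example, where every eigenvalue is simple and $\bra{\lambda}\sigma\ket{\lambda}$ is invariant under a global phase.
\end{itemize}

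One harmless slip you inherited from the paper: the eigenvector is $d^*f-\lambda S_0d^*f$, matching the explicit formula in Lemma~\ref{GSMTv}, not $d^*f+\lambda S_0d^*f$. This only flips the sign of your discrepancy term and does not affect the conclusion.
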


\begin{proof}
    By the definition of $\sigma$, $\bra{\lambda}\sigma\ket{\lambda}$ 
    is determined by the components corresponding to $a^{\pm}$. Thus,
    \begin{align*}
    \bra{\lambda}\sigma\ket{\lambda}
    &=
    \begin{bmatrix}
        \overline{(\ket{\lambda})_{a}} & \overline{(\ket{\lambda})_{a^{-1}}}
    \end{bmatrix}
    \begin{bmatrix}
        1 & 0 \\
        0 & -1
    \end{bmatrix}
    \begin{bmatrix}
        (\ket{\lambda})_{a} \\ (\ket{\lambda})_{a^{-1}}
    \end{bmatrix} \\
    &=
    |(\ket{\lambda})_{a}|^2 - |(\ket{\lambda})_{a^{-1}}|^2.
\end{align*}
Since $U_0$ is unitary, we have $|\lambda|=1$. 
Therefore, by Lemma~\ref{GSMTv}, $|(\ket{\lambda})_{a}| = |(\ket{\lambda})_{a^{-1}}|$, and the lemma follows.
\end{proof}

\begin{lem}\label{braket1}
Let $\lambda \in \{\pm1\}$ be an eigenvalue of $U_0$ with multiplicity
$m>1$, and let
$\ket{\lambda^{(j)}}$ $(1 \leq j \leq m)$
denote the corresponding eigenvectors. 
Then we have
\[
\bra{\lambda^{(j)}}\sigma\ket{\lambda^{(k)}}=0
\]
for $j \neq k.$
\end{lem}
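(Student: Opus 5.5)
The plan is to compute $\bra{\lambda^{(j)}}\sigma\ket{\lambda^{(k)}}$ directly, as in the proof of Lemma~\ref{braket}, using the symmetry $\psi(a^{-1})=\pm\psi(a)$ that eigenvectors for $\lambda=\pm1$ satisfy by Lemma~\ref{GSMTv}. Orthogonality of the basis will not be needed; the off-diagonal entries vanish for a purely algebraic reason.

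\emph{Reduction to the edges $a, a^{-1}$.} Since $\sigma$ is supported only on the diagonal entries indexed by $a$ and $a^{-1}$, we get
\[
\bra{\lambda^{(j)}}\sigma\ket{\lambda^{(k)}}
=
\overline{(\ket{\lambda^{(j)}})_{a}}\,(\ket{\lambda^{(k)}})_{a}
-
\overline{(\ket{\lambda^{(j)}})_{a^{-1}}}\,(\ket{\lambda^{(k)}})_{a^{-1}} .
\]

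\emph{Symmetry for every vector of the eigenspace.} The key step is that every $\psi\in\ker(U_0-\lambda I)$, not just specially chosen ones, satisfies $\psi(e^{-1})=\lambda\psi(e)$ for $\lambda=\pm1$. Lemma~\ref{GSMTv} states this for normalized eigenvectors, and the relation is linear and homogeneous, so it extends to the whole eigenspace. In particular it holds for each $\ket{\lambda^{(j)}}$, whatever orthonormal basis is chosen.

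Because the basis in the statement is arbitrary, I would also check this directly from Lemma~\ref{GSMT} via the decomposition $\ker(U_0-\lambda I)=\ker(U_0|_{\mathcal L}-\lambda I)\oplus\mathcal C_{\pm}$. The two pieces go as follows.
\begin{itemize}
\item On $\mathcal C_\pm=\ker d\cap\ker(S_0\mp I)$, the condition $S_0\psi=\pm\psi$ reads $\psi(e^{-1})=\pm\psi(e)$ by the definition of $S_0$.
\item On the $\mathcal L$-part, $\psi=d^*f$ with $f\in\ker(T\mp I)$, so $\psi(e)=f(t(e))/\sqrt{\deg t(e)}$.
\begin{itemize}
\item For $\lambda_T=1$, $f$ is proportional to $\sqrt{\deg}$, so $\psi$ is constant and $\psi(e^{-1})=\psi(e)$.
\item For $\lambda_T=-1$ ($G$ bipartite), $f$ is proportional to $\pm\sqrt{\deg}$ with sign alternating between the two parts, so $\psi(e^{-1})=-\psi(e)$.
\end{itemize}
\end{itemize}
In both cases the sign is exactly $\lambda$, so the relation holds on all of $\ker(U_0-\lambda I)$.

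\emph{Conclusion.} Substituting $(\ket{\lambda^{(l)}})_{a^{-1}}=\lambda(\ket{\lambda^{(l)}})_a$ for $l=j,k$ and using $\lambda^2=1$ with $\lambda$ real, the second term becomes
\[
\overline{\lambda}\lambda\,\overline{(\ket{\lambda^{(j)}})_{a}}\,(\ket{\lambda^{(k)}})_{a}
=
\overline{(\ket{\lambda^{(j)}})_{a}}\,(\ket{\lambda^{(k)}})_{a}.
\]
This cancels the first term, so $\bra{\lambda^{(j)}}\sigma\ket{\lambda^{(k)}}=0$ for $j\ne k$; the same computation also covers $j=k$.

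The only potential obstacle is justifying that the symmetry holds for an arbitrary basis of the degenerate eigenspace, which mixes the $\mathcal L$ and $\mathcal L^\perp$ components. Linearity, together with the explicit check above, handles this.
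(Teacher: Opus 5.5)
Your route is the paper's route: reduce to the entries at $a,a^{-1}$, then cancel using $\psi(a^{-1})=\pm\psi(a)$ from Lemma~\ref{GSMTv}. But the key step, that every $\psi\in\ker(U_0-\lambda I)$ with $\lambda=\pm1$ satisfies $\psi(e^{-1})=\lambda\psi(e)$, is false, and so is the statement in general.

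Your check on the $\mathcal L^\perp$-part inherits a sign slip from Lemma~\ref{GSMT}. On $\ker d$ one has $C=2d^*d-I=-I$, hence $U_0=S_0C=-S_0$ there. Therefore $\ker(U_0-\lambda I)\cap\mathcal L^\perp=\ker d\cap\ker(S_0+\lambda I)$, i.e.\ $\psi(e^{-1})=-\lambda\psi(e)$ on that piece. This is also what the quoted dimensions force: $|E|-|V|+1$ is the dimension of the \emph{antisymmetric} cycle space. Your computation on the $\mathcal L$-part is correct. So for $\lambda=\pm1$ the eigenspace is a sum of two pieces with \emph{opposite} symmetry at $a$, and linearity cannot merge two different relations. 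For $\psi_j$ from one piece and $\psi_k$ from the other, the cross term is $\bra{\psi_j}\sigma\ket{\psi_k}=2\,\overline{\psi_j(a)}\,\psi_k(a)$, not $0$.

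Concretely, take $G=C_n$ with $a$ in the orientation class $C_+$ of Section~5. There $U_0$ is just the shift along each class $C_\pm$, so $\ker(U_0-I)$ is spanned by the indicator vectors $\mathbf 1_{C_+},\mathbf 1_{C_-}$. Take the orthonormal pair $\psi_1=(\mathbf 1_{C_+}+\mathbf 1_{C_-})/\sqrt{2n}$ (the $\mathcal L$-vector) and $\psi_2=(\mathbf 1_{C_+}-\mathbf 1_{C_-})/\sqrt{2n}$ (the cycle flow, in $\mathcal L^\perp$). Then $\bra{\psi_1}\sigma\ket{\psi_2}=1/n\neq0$.

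This is not a basis artifact, since $P_1\sigma P_1\neq O$. Directly, $U_\beta^n=e^{-i\beta}I_{C_+}\oplus e^{i\beta}I_{C_-}$, whose first-order term does not vanish on $\ker(U_0-I)$. That would be impossible if $P_{\pm1}\sigma P_{\pm1}=O$.

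In general the claim fails for $\lambda=1$ whenever $a$ is not a bridge. It fails for $\lambda=-1$ whenever $G$ is bipartite and $a$ lies on a cycle. It holds only when $\ker(U_0-\lambda I)$ has a single symmetry type at $a$, e.g.\ when $a$ is a bridge, or when $\lambda=-1$ and $G$ is non-bipartite.

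The paper's one-line proof ("essentially the same as Lemma~\ref{braket}") rests on the same symmetry claim from Lemma~\ref{GSMTv}, so it has the same defect. Following it could not have repaired your argument.
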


\begin{proof}
    The proof is essentially the same as that of Lemma~\ref{braket}.
\end{proof}

We note that, for $\lambda \neq \pm 1$, the quantity
\[
\bra{\lambda^{(j)}}\sigma\ket{\lambda^{(k)}}
\]
does not necessarily vanish when $j \neq k.$
Motivated by this observation, we introduce the following matrix whose entries are given by these quantities.

\begin{defi}
    Let $\lambda_j$ be an eigenvalue of $U_{0}$ with multiplicity $r_j$, and let $\{\ket{\lambda_j^{(l)}}\}_{l=1}^{r_j}$ be an orthonormal basis of its eigenspace corresponding to $\lambda_j$. We define 
    $\Phi_{\lambda_j}$,
    and $M_{\sigma} ^{(\lambda_j, \lambda_k)}$ as follows:
    \begin{align*}
    &\Phi_{\lambda_j} :=\left[\ \ket{\lambda_j^{(1)}} \cdots \ket{\lambda_j^{(r_j)}} \ \right], \\
    &(M_{\sigma} ^{(\lambda_j, \lambda_k)})_{l, l'}
    := \bra{\lambda_j^{(l)}}  \sigma \ket{\lambda_k^{(l')} },
    \end{align*}
    for $1\le l\le r_j$ and $1\le l'\le r_k$.
\end{defi}

Here, for a matrix $F = \mathcal{O}(\beta^m) \ (m \geq 1)$ means that 
\[
0 \leq \limsup_{\beta \to 0 }{ \beta^{-m}\max_{l, l' \in A(G)}\left|(F)_{l, l'}\right|} < \infty.
\]
We are now ready to state a key lemma of this paper.

\begin{lem} \label{key lemma}
    Let $G$ be a graph that induces $\tau$-periodic Grover walks.  For the strength of a vector potential $\beta$ and a reference edge $a \in A(G)$, set $U_{\beta} = U_{\beta}(G,a)$. Here, we assume that $0 < \beta \ll 1 $. Let $H = H(G, a)$ defined by Definition~ \ref{H1}. Then, the following holds for $U_{\beta}^{\tau}$.
    \begin{align*}
       U_{\beta}^{\tau}
       =
       I + 
       i\beta \tau H +
        \mathcal{O}(\beta^2).
    \end{align*}
\end{lem}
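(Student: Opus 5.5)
The plan is a first-order perturbation expansion of $U_\beta^\tau$ around the periodic Grover walk $U_0$, followed by a spectral averaging argument.

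\emph{Step 1: factor the perturbation.} By definition, $(S_\beta)_{e,e^{-1}}=e^{iB(e^{-1})}$. The only affected entries are $e=a$, where $B(a^{-1})=-\beta$, and $e=a^{-1}$, where $B(a)=\beta$. Hence $S_\beta=e^{-i\beta\sigma}S_0$, with $e^{-i\beta\sigma}$ diagonal. By \eqref{Umat} this gives
\[
U_\beta=e^{-i\beta\sigma}U_0=U_0-i\beta\,\sigma U_0+\mathcal O(\beta^2).
\]
(The overall sign of $H$ depends on this convention; I will fix it at the end by comparing with Definition~\ref{H1}.)

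\emph{Step 2: expand the $\tau$-th power.} Expanding to first order and using $U_0^\tau=I$ from periodicity,
\[
U_\beta^\tau=I-i\beta\sum_{k=0}^{\tau-1}U_0^{k}\sigma U_0^{-k}+\mathcal O(\beta^2).
\]
The remainder is uniform because $\tau$ is fixed and all operators are bounded.

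\emph{Step 3: average over the period.} Insert the spectral decomposition $U_0=\sum_\lambda \lambda P_\lambda$, where $P_\lambda=\Phi_\lambda\Phi_\lambda^*$. This gives
\[
\sum_{k=0}^{\tau-1}U_0^k\sigma U_0^{-k}=\sum_{\lambda,\mu}\Big(\sum_{k=0}^{\tau-1}(\lambda\bar\mu)^k\Big)P_\lambda\sigma P_\mu .
\]
By Lemma~\ref{thm:SMT} every eigenvalue is a $\tau$-th root of unity. So the inner sum is $\tau$ if $\lambda=\mu$ and $0$ otherwise. Therefore the first-order term is $-i\beta\tau\sum_\lambda \Phi_\lambda M_\sigma^{(\lambda,\lambda)}\Phi_\lambda^*$. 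It remains to show that $-\sum_\lambda \Phi_\lambda M_\sigma^{(\lambda,\lambda)}\Phi_\lambda^*$ equals the matrix $H$ of Definition~\ref{H1}.

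\emph{Step 4: the eigenvalues $\pm1$.} These blocks vanish. This includes the $\mathcal L^\perp$ eigenspaces, and the eigenvalues $\lambda_T=\pm1$ that Definition~\ref{H1} excludes. By Lemma~\ref{GSMTv}, $\psi(a^{-1})=\pm\psi(a)$ for such eigenvectors, so
\[
\overline{\psi_j(a)}\psi_k(a)-\overline{\psi_j(a^{-1})}\psi_k(a^{-1})=0 .
\]
This is exactly Lemmas~\ref{braket} and~\ref{braket1}.

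\emph{Step 5: the eigenvalues $\lambda\neq\pm1$.} Take the basis $\partial_\lambda^* f_l^{(\lambda_T)}$ built from the CONS fixed in \eqref{f}. Lemma~\ref{GSMTv} gives the explicit values at $a$ and $a^{-1}$ in terms of $f(o(a))/\sqrt{\deg o(a)}$ and $f(t(a))/\sqrt{\deg t(a)}$. For $l\ge3$ the vector $f_l$ vanishes on $\partial X_a$, so only the $2\times2$ block for $f_1,f_2$ survives. Its diagonal vanishes by Lemma~\ref{braket}.

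For the off-diagonal entry I will compute, with $\lambda=e^{i\theta}$ and $\cos\theta=\lambda_T$,
\[
\bra{\partial_\lambda^*f_1}\sigma\ket{\partial_\lambda^*f_2}.
\]
The $|\,\cdot\,|^2$ terms cancel, and what remains should be $\frac{(\bar\lambda-\lambda)}{2(1-\lambda_T^2)}E_{\lambda_T}(a,a)$ up to sign, i.e.\ a purely imaginary multiple of $\sin\theta\,E_{\lambda_T}(a,a)/(1-\lambda_T^2)$. The two vectors are real, so this is the determinant of $[\iota_a f_1\ \iota_a f_2]$.

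Next I write the block as $\sum_{l,l'}\ket{\partial_\lambda^* f_l}(M)_{l,l'}\bra{\partial_\lambda^* f_{l'}}$ and evaluate its $(e_1,e_2)$ entry, again via Lemma~\ref{GSMTv}. I then add the $\bar\lambda$ block, which by Lemma~\ref{GSMTb} is the complex conjugate. The cross terms combine into $E_{\lambda_T}(e_1,e_2)+E_{\lambda_T}(e_2,e_1)$, and the imaginary $\lambda$-dependent parts cancel. This should leave exactly the summand of Definition~\ref{H1}.

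\emph{Main obstacle.} The delicate step is Step 5. The normalization of $\partial_\lambda^*$ gives the factor $1/(2(1-\lambda_T^2))$, and the $\lambda$, $\bar\lambda$ phases appear inside $\psi(e)$. I must show that after summing the conjugate pair these yield a real symmetric expression with precisely the stated prefactor and sign. I would first check this on $C_4$ to fix the signs before doing the general bookkeeping.
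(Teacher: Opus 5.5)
Your outline follows the paper's own route: first-order expansion, averaging over one period so that only the blocks $P_\lambda\sigma P_\lambda$ survive, discarding $\lambda=\pm1$, then the $2\times2$ determinant computation and the pairing of $\lambda$ with $\bar\lambda$. However, Step~4 is false, and it is equally false in the paper's proof through Lemma~\ref{braket1}. The relation $\psi(a^{-1})=\pm\psi(a)$ holds for $d^*g$ with $g\in\ker(T\mp I)$. It fails on the $\mathcal L^\perp$-part of the same eigenspace: for $\psi\in\ker d$ one has $U_0\psi=-S_0\psi$. So the $+1$-eigenvectors in $\mathcal L^\perp$ are the antisymmetric flows $\ker d\cap\ker(S_0+I)$, which satisfy $\psi(a^{-1})=-\psi(a)$; the situation at $-1$ for bipartite $G$ is the mirror image. (The labels in Lemma~\ref{GSMT} are swapped in exactly this way.) The mixed entries are $\bra{d^*g}\sigma\ket{\phi}=2\,\overline{(d^*g)(a)}\,\phi(a)$, and these are nonzero as soon as $a$ lies on a cycle. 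Concretely, take $C_n$ with $a\in C_+$. Then $U_0$ simply rotates the arcs within each orientation class, $\mathbf 1_{C_+}$ is a $+1$-eigenvector, and $\bra{\mathbf 1_{C_+}}\sigma\ket{\mathbf 1_{C_+}}=1$. So Steps~1--3, done honestly, give $U_\beta^\tau=I-i\beta\tau\sum_{\lambda\in\Spec(U_0)}P_\lambda\sigma P_\lambda+\mathcal O(\beta^2)$, and the $\lambda=\pm1$ terms do not vanish.

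Second, the sign cannot be ``fixed at the end''. Your Step~1 is correct: $S_\beta=e^{-i\beta\sigma}S_0$. The paper's factorization $S_\beta-S_0=(i\beta\sigma+\mathcal O(\beta^2))S_0$ has the wrong sign, since $(\sigma S_0)_{a^{-1},a}=-1$ while $(S_\beta-S_0)_{a^{-1},a}=e^{i\beta}-1$. Carrying out your Step~5 gives, entrywise, $\Phi_\lambda M_\sigma^{(\lambda,\lambda)}\Phi_\lambda^*+\Phi_{\bar\lambda}M_\sigma^{(\bar\lambda,\bar\lambda)}\Phi_{\bar\lambda}^*=+\frac{E_{\lambda_T}(a,a)}{1-\lambda_T^2}\bigl(E_{\lambda_T}(e_1,e_2)+E_{\lambda_T}(e_2,e_1)\bigr)$, exactly as in the paper. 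That is, $\sum_{\lambda\neq\pm1}P_\lambda\sigma P_\lambda=+H$, so your target identity $-\sum_\lambda\Phi_\lambda M_\sigma^{(\lambda,\lambda)}\Phi_\lambda^*=H$ is false. The cycle $C_n$ with $n$ odd exhibits both defects at once. Exactly, $U_\beta^{\,n}=e^{-i\beta}$ on $C_+$ and $e^{i\beta}$ on $C_-$. On the other hand, $H=\sum_{\lambda\neq\pm1}P_\lambda\sigma P_\lambda$ annihilates $\mathbf 1_{C_+}\in\ker(U_0-I)$ and acts as $+\tfrac1n$ on zero-sum vectors supported on $C_+$. So $I+i\beta nH$ is wrong both in sign and on $\ker(U_0-I)$. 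The displayed expansion, with $U_\beta$ and $H$ as defined, therefore cannot be proved by any argument. What your method actually establishes is $U_\beta^\tau=I-i\beta\tau\bigl(H+P_1\sigma P_1+P_{-1}\sigma P_{-1}\bigr)+\mathcal O(\beta^2)$, which reduces to $I-i\beta\tau H+\mathcal O(\beta^2)$ precisely when $a$ is a bridge.
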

\begin{proof}
    By setting $U_0=S_0C$ and $U'\in \mathbb{C}^{A(G)\times A(G)}$ such that
    \[
    U'= (S_{\beta}-S_0)C\Pi_{\partial A},
    \]
    then, $U_{\beta}=U_0+U'$.
    Note that, 
    \[
    (S_{\beta}-S_0)_{e, f}=
    \begin{cases}
    e^{i\beta}-1 & e=a^{-1}, f=a, \\
    e^{-i\beta}-1 & e=a, f=a^{-1}, \\
    0 & \text{otherwise}.
    \end{cases}
    \]
    Since $\beta$ is sufficiently small, the right-hand side can be expanded in powers of $\beta$ up to second order, we obtain
    \begin{align*}
        (S_{\beta}-S_0)_{e,f} = \begin{cases}
            i\beta + \mathcal{O}(\beta^2) & e=a^{-1}, f = a,  \\
            -i\beta + \mathcal{O}(\beta^2) & e=a,  f = a^{-1},  \\
            0 & \text{otherwise}.
        \end{cases}
    \end{align*}
    Therefore, $S_{\beta}-S_0=(i\beta \sigma + \mathcal{O}(\beta^2))S_0$. Then, 
    \begin{align*}
        U'&= i\beta \sigma S_0C\Pi_{\partial A_a} + \mathcal{O}(\beta^2)  \\
        &= i\beta \Pi_{\partial A_a} \sigma S_0C + \mathcal{O}(\beta^2)   \\
        &= i\beta  \sigma U_0 + \mathcal{O}(\beta^2). 
    \end{align*}
    Here, the second equality follows from Lemma~\ref{commute}, and the third equality follows from the fact that $\Pi_{\partial A}\sigma=\sigma$. 
    Since $G$ is $\tau$-periodic, taking the $\tau$-th power of $U_{\beta}$, we obtain 
    \begin{align}
        U_{\beta}^{\tau} &=  ( U_0+ i\beta  \sigma U_0 + \mathcal{O}(\beta^2) )^{\tau} \notag \\
        &= I + i\beta \sum_{j=0}^{\tau-1}U_0^j \sigma U_0^{-j}  +\mathcal{O}(\beta^2). \label{tenkai}
    \end{align}
    Using the spectral decomposition, We evaluate the first-order term in $\beta$ separately for each eigenvalue.
    Let $P_{\lambda}$ denote the spectral projection corresponding to the eigenvalue $\lambda$. Then, the first-order term in $\beta$ on RHS in the above equality can be decomposed as
    \begin{align} \sum_{j=0}^{\tau-1} U_0^j \sigma U_0^{-j} 
    &= \sum_{j=0}^{\tau-1}\bigg(\sum_{\lambda \in \Spec (U_0)}\lambda^j P_{\lambda} \bigg)
    \sigma 
    \bigg(\sum_{\lambda' \in \Spec (U_0)}\lambda'^{-j}P_{\lambda'} \bigg) \notag \\
    &=
    \sum_{\lambda, \lambda' \in \Spec (U_0)}
    \sum_{j=0}^{\tau-1}
    \left(\frac{\lambda}{\lambda'}\right)^j P_{\lambda}\sigma P_{\lambda'} .\label{eq;first}
    \end{align}
    Since $G$ is $\tau$-periodic, $\left( \lambda / \lambda' \right)^{\tau}=1$ from Lemma~\ref{thm:SMT}. 
    Then, the term with $\lambda \neq \lambda'$ in \eqref{eq;first} has no contribution because  
    \begin{align*}
     \sum_{j=0}^{\tau-1}
    \left(\frac{\lambda}{\lambda'}\right)^j P_{\lambda}\sigma P_{\lambda'} 
    &=
    \frac{1-\left(\frac{\lambda}{\lambda'}\right)^{\tau}}{1-\left(\frac{\lambda}{\lambda'}\right)}P_{\lambda}\sigma P_{\lambda'} =0.
    \end{align*}
    Therefore, it sufficient to consider only the terms with $\lambda=\lambda'$.
    \begin{align*}
    \sum_{\lambda, \lambda' \in \Spec (U_0)}
    \sum_{j=0}^{\tau-1}
    \left(\frac{\lambda}{\lambda'}\right)^j P_{\lambda}\sigma P_{\lambda'} 
    &= 
     \sum_{\lambda \in \Spec (U_0)}
    \sum_{j=0}^{\tau-1}
    P_{\lambda}\sigma P_{\lambda} \\
    &= 
    \sum_{\lambda \in \Spec (U_0)}
    \tau
    P_{\lambda}\sigma P_{\lambda} \\
    &=
    \sum_{\lambda\in \Spec(U_0)} 
    \tau
    \Phi_{\lambda} M_{\sigma}^{(\lambda, \lambda)} \Phi_{\lambda} ^*.
    \end{align*}
    It follows from Lemma~\ref{braket1} that \(M^{(\lambda,\lambda)}_\sigma=0\) for \(\lambda=\pm1\).
    By Lemma~\ref{GSMTv}, we express $M_{\sigma}^{(\lambda, \lambda)}$ for  $\lambda \neq \pm1$ below. Here, set $\lambda_T = \Re(\lambda)$ for $\lambda \in \Spec(U_0|_{\mathcal{L}}).$ Let $f^{(\lambda_T)}_{j}$ be an real eigenvector of $T$ corresponding to the eigenvalue $\lambda_T$ associated with $\ket{\lambda^{(j)}}$. 
\begin{align}
    \left(M_{\sigma}^{(\lambda, \lambda)}\right)_{l, l'} 
    &=
    \bra{\lambda^{(l)}}\sigma\ket{\lambda^{(l')}} \notag \\
    &=
    \begin{bmatrix}
        \overline{(\ket{\lambda^{(l)}})_{a}} & \overline{(\ket{\lambda^{(l)}})_{a^{-1}}}
    \end{bmatrix}
    \begin{bmatrix}
        1 & 0 \\
        0 & -1
    \end{bmatrix}
    \begin{bmatrix}
        (\ket{\lambda^{(l')}})_{a} \\ (\ket{\lambda^{(l')}})_{a^{-1}}
    \end{bmatrix} \notag \\ 
    &=
    \frac{i}{\sqrt{ (1- {\lambda_T}^2) \deg o(a) \deg t(a)} }
    \begin{vmatrix}f_l^{(\lambda_T)} (o(a)) &
    f_{l'}^{(\lambda_T)} (o(a)) \\
    f_l^{(\lambda_T)} (t(a)) &
    f_{l'}^{(\lambda_T)} (t(a))\end{vmatrix}. \label{|f|}
\end{align}
It is easy to see that 
\[
\left(M_\sigma ^{(\lambda, \lambda)}\right)_{l, l'}
=
-\left(M_\sigma ^{(\lambda, \lambda)}\right)_{l', l}
\]
We note that there exist \emph{at most} two linearly independent vectors in $\ker(T-\lambda_T I)$ whose supports intersect with $\partial X_a$. Then, we may choose an orthogonal real basis of $\ker(T-\lambda_T I)$ such that at most two basis vectors have an intersection with $\partial X_a$. If they exist, we denote them by $f_1^{(\lambda_T)}$ and $f_2^{(\lambda_T)}$, that is, 
\[
\braket{f_l^{(\lambda_T)}, f_{l'}^{(\lambda_T)} } = \delta_{l, m},
\]
and
\[
\mathrm{supp}\left(f_{l}^{(\lambda_T)}\right) \cap \partial X_a \neq \emptyset,
\]
for $l, m \in \{ 1,2 \}.$
Therefore, under this choice of basis vectors, we obtain
\begin{itemize}
    \item for $\lambda$ with  $\dim \left( \ker (T- \lambda_T I) \cap \mathcal{B}_{\partial X_a}\right) \leq 1$,
\end{itemize}
\begin{align*}
    M_{\sigma}^{(\lambda, \lambda)} = O.
\end{align*}
\begin{itemize}
    \item for $\lambda$ with  $\dim \left( \ker (T- \lambda_T I) \cap \mathcal{B}_{\partial X_a}\right) =2$,
\end{itemize}
\begin{align*}
    M_{\sigma}^{(\lambda, \lambda)}
    &=
    \begin{bmatrix}
    \bra{\lambda^{(1)}}\sigma\ket{\lambda^{(1)}} & \bra{\lambda^{(1)}}\sigma\ket{\lambda^{(2)}} \\
    \bra{\lambda^{(2)}}\sigma\ket{\lambda^{(1)}} & \bra{\lambda^{(2)}}\sigma\ket{\lambda^{(2)}}
    \end{bmatrix}
    \oplus O \\
    &=
    \frac{i E_{\lambda_T}(a, a)}{\sqrt{1 - \lambda_T^2}}
    \begin{bmatrix}
    0 & 1 \\
    -1 & 0
    \end{bmatrix}
    \oplus O. 
\end{align*}
Here, the second equality is derived from Lemma~\ref{braket} and \eqref{|f|}. 
Then, for $\lambda$ with  $\dim \left( \ker (T- \lambda_T I) \cap \mathcal{B}_{\partial X_a}\right) =2$,

\begin{align}
    \Phi_{\lambda} M_{\sigma}^{(\lambda, \lambda)} \Phi_{\lambda} ^*
    &=
    \frac{i E_{\lambda_T}(a, a)}{\sqrt{1- \lambda_T^2}}
    \begin{bmatrix}
    \ket{\lambda^{(1)}} &\ket{\lambda^{(2)}}
    \end{bmatrix}
    \begin{bmatrix}
        0 & 1 \\
        -1 & 0
    \end{bmatrix}
    \begin{bmatrix}
    \bra{\lambda^{(1)}} \\
    \bra{\lambda^{(2)}}
    \end{bmatrix} \label{SD} \\
    &=
    \frac{i E_{\lambda_T}(a, a)}{\sqrt{1- \lambda_T^2}}
    \left(
    \ket{\lambda^{(1)}}\bra{\lambda^{(2)}}
    -
    \ket{\lambda^{(2)}}\bra{\lambda^{(1)}}
    \right), \label{H-braket}  
\end{align}
and hence
\begin{align*}
     \overline{\Phi_{\lambda} M_{\sigma}^{(\lambda, \lambda)} \Phi_{\lambda} ^*}
     &=
     -\frac{i E_{\lambda_T}(a, a)}{\sqrt{1- \lambda_T^2}}
    \left(
    \ket{\overline{\lambda}^{(1)}}\bra{\overline{\lambda}^{(2)}}
    -
    \ket{\overline{\lambda}^{(2)}}\bra{\overline{\lambda}^{(1)}}
    \right) \\
    &=
    \Phi_{\overline{\lambda}} M_{\sigma}^{(\overline{\lambda}, \overline{\lambda})} \Phi_{\overline{\lambda}} ^*.
\end{align*}
Here, the first equality follows from Lemma~\ref{GSMTb}, and the second equality follows from $\Im{(\overline{\lambda})} = -\sqrt{1- \lambda_T^2}$.
   Thus, 
   \begin{align*} 
       &(\Phi_{\lambda} M_{\sigma}^{(\lambda, \lambda)} \Phi_{\lambda} ^*
       +
       \Phi_{\overline{\lambda}} M_{\sigma}^{(\overline{\lambda}, \overline{\lambda})} \Phi_{\overline{\lambda}} ^*)_{e_1, e_2} \\
       &=
       2\Re(\Phi_{\lambda} M_{\sigma}^{(\lambda, \lambda)} \Phi_{\lambda} ^*)_{e_1, e_2}
       \\
       &=
       \frac{2 E_{\lambda_T}(a, a)}{\sqrt{1- \lambda_T^2}}
       \Re\left(
       i\ket{\lambda^{(1)}}\bra{\lambda^{(2)}}
       -
       i\ket{\lambda^{(2)}}\bra{\lambda^{(1)}}
       \right)_{e_1, e_2}
       \\
       &=
       \frac{-2 E_{\lambda_T}(a, a)}{\sqrt{1- \lambda_T^2}}
       \Im\left(
       \ket{\lambda^{(1)}}_{e_1}\bra{\lambda^{(2)}}_{e_2}
       -
       \ket{\lambda^{(2)}}_{e_1}\bra{\lambda^{(1)}}_{e_2}
       \right)
       \\
       &=
       \frac{- E_{\lambda_T}(a, a)}{\sqrt{1- \lambda_T^2}(1-\lambda_T^2)}
       \Im
       \left(
       -E_{\lambda_T}(e_1, e_2) \lambda
       +
       E_{\lambda_T}(e_2, e_1)\overline{\lambda}
       \right)
       \\
       &=
       \frac{- E_{\lambda_T}(a, a)}{\sqrt{1- \lambda_T^2}(1-\lambda_T^2)}
       \left(
       -E_{\lambda_T}(e_1, e_2)\Im(\lambda)
       +
       E_{\lambda_T}(e_2, e_1)\Im(\overline{\lambda})
       \right)
       \\
       &=
       \frac{E_{\lambda_T}(a, a)}{1-\lambda_T^2}
       \left(
       E_{\lambda_T}(e_1, e_2)
       +
       E_{\lambda_T}(e_2, e_1)
       \right),
   \end{align*}
   where the forth equality follows from  Lemma~\ref{GSMTv}, and the final equality follows from $\Im (\lambda) = -\Im (\overline{\lambda})=\sqrt{1-\lambda_T^2}$.
       Therefore, the second term in \eqref{tenkai} can be written as
       \begin{align}
           \left(\sum_{j=0}^{\tau-1}U_0^j \sigma U_0^{-j}\right)_{e_1,e_2}
           &=
           \sum_{\substack{
           \lambda \in \Spec(U_0) \setminus\{\pm1\}\\
           \dim \left( \ker (T- \lambda_T I) \cap \mathcal{B}_{\partial X_a}\right)=2 }}
           \tau\left(
           \Phi_{\lambda} M_{\sigma}^{(\lambda, \lambda)} \Phi_{\lambda} ^*\right)_{e_1, e_2} \notag\\
           &=
           \sum_{\substack{
           \lambda_T \in \Spec(T) \setminus\{\pm1\}\\
           \dim \left( \ker (T- \lambda_T I) \cap \mathcal{B}_{\partial X_a}\right)=2}}
           \tau
           \left(
           \Phi_{\lambda} M_{\sigma}^{(\lambda, \lambda)} \Phi_{\lambda} ^*
           +
           \Phi_{\overline{\lambda}} M_{\sigma}^{(\overline{\lambda}, \overline{\lambda})} \Phi_{\overline{\lambda}} ^* 
           \right)_{e_1, e_2} \notag\\
           &=
           \sum_{\lambda_T\in \Spec(T)\setminus \{\pm 1\}}
           \tau\frac{E_{\lambda_T}(a, a)}{1-\lambda_T^2}
           \left(
           E_{\lambda_T}(e_1, e_2)
           +
           E_{\lambda_T}(e_2, e_1)
           \right) \notag\\
           &=
           \tau(H)_{e_1, e_2}. \label{H1 2} 
       \end{align}
\end{proof}

\begin{cor} \label{beta^2}
    Let $G$ be a graph that induces $\tau$-periodic Grover walks.  For the strength of a vector potential $\beta$ and a reference edge $a \in A(G)$, set $U_{\beta} = U_{\beta}(G,a)$. Here, we assume that $0 < \beta \ll 1 $. Let $H = H(G, a)$ defined by Definition~ \ref{H1}.
    Then, the following holds.
    \[
    ||e^{i \beta \tau H}-U_{\beta}^\tau|| = \mathcal{O}(\beta^2).
    \]
\end{cor}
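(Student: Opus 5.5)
The plan is to compare both operators with the common first-order expansion $I + i\beta\tau H$ and use the triangle inequality. Lemma~\ref{key lemma} already gives
\[
U_\beta^\tau = I + i\beta\tau H + \mathcal{O}(\beta^2).
\]
So it suffices to show
\[
e^{i\beta\tau H} = I + i\beta\tau H + \mathcal{O}(\beta^2).
\]
Subtracting the two expansions then yields $\|e^{i\beta\tau H}-U_\beta^\tau\| = \mathcal{O}(\beta^2)$. The entrywise $\mathcal{O}(\beta^2)$ notation defined before Lemma~\ref{key lemma} controls any matrix norm, up to a constant depending only on $|A|$, since $A$ is finite.

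For the exponential I will use the power series
\[
e^{i\beta\tau H} - I - i\beta\tau H = \sum_{k\ge 2} \frac{(i\beta\tau H)^k}{k!}.
\]
In operator norm this is bounded by $e^{\beta\tau\|H\|}-1-\beta\tau\|H\|$, which is at most $\tfrac12(\beta\tau\|H\|)^2 e^{\beta\tau\|H\|}$. Here $H$ is a fixed matrix independent of $\beta$ (Definition~\ref{H1}), and $\tau$ is fixed, so for $0<\beta\le 1$ this is $C\beta^2$ with $C$ independent of $\beta$. Alternatively, since $H$ is Hermitian, one can diagonalize $H$ and apply the scalar bound $|e^{ix}-1-ix|\le x^2/2$ to each eigenvalue, giving
\[
\|e^{i\beta\tau H}-I-i\beta\tau H\|\le \tfrac12\beta^2\tau^2\|H\|^2 .
\]

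The only point needing care is that the $\mathcal{O}(\beta^2)$ in Lemma~\ref{key lemma} is genuinely uniform. It came from expanding $(U_0 + i\beta\sigma U_0 + R_\beta)^\tau$, where $R_\beta=\mathcal{O}(\beta^2)$ collects the Taylor remainders of $e^{\pm i\beta}$. That is a finite product of $\tau$ factors. Each factor has norm bounded uniformly in $\beta$, since $U_0$ is unitary and $\|\sigma\|=1$. Hence all cross terms of order $\ge 2$ in the small quantities $i\beta\sigma U_0$ and $R_\beta$ are bounded by a constant (depending on $\tau$ and $|A|$) times $\beta^2$. I do not expect a real obstacle here; this remark merely justifies invoking the lemma as stated, and the corollary follows.
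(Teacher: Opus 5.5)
Your reduction is the paper's proof made explicit. The paper only says the corollary ``follows immediately from Lemma~\ref{key lemma}'', and your triangle inequality with $\|e^{i\beta\tau H}-I-i\beta\tau H\|\le\frac12\beta^2\tau^2\|H\|^2$ is all that this hides, so relative to the paper nothing is missing. The genuine gap is the input, which the paper shares. The expansion $U_\beta^\tau=I+i\beta\tau H+\mathcal{O}(\beta^2)$, which you also endorse in your last paragraph, does not hold, for two independent reasons.

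\textbf{Sign.} Since $(U_\beta)_{e,f}=e^{iB(e^{-1})}(U_0)_{e,f}$, one has exactly $U_\beta=e^{-i\beta\sigma}U_0$. Concretely, $(S_\beta-S_0)_{a^{-1},a}=e^{i\beta}-1$ while $(\sigma S_0)_{a^{-1},a}=-1$. So the first-order term is $-i\beta\sigma U_0$, not $+i\beta\sigma U_0$.

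\textbf{Dropped $\pm1$ blocks.} The blocks $P_{\pm1}\sigma P_{\pm1}$ discarded via Lemma~\ref{braket1} need not vanish. Let $e_1=a,e_2,\dots,e_k$ be a simple cycle of arcs through $a$, and let $x=\sum_i(\delta_{e_i}-\delta_{e_i^{-1}})\in\ker d\cap\ker(S_0+I)$ be its cycle flow. Both $\mathbf{1}_A\in\mathcal{L}$ and $x$ lie in $\ker(U_0-I)$, yet $\langle\mathbf{1}_A,\sigma x\rangle=x(a)-x(a^{-1})=2\neq0$. The symmetry $\psi(e^{-1})=\pm\psi(e)$ claimed in Lemma~\ref{GSMTv} for $\lambda=\pm1$ fails on such mixed eigenspaces. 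What your method actually yields is $\|U_\beta^\tau-e^{-i\beta K}\|=\mathcal{O}(\beta^2)$, where $K=\sum_{j=0}^{\tau-1}U_0^j\sigma U_0^{-j}=\tau\sum_{\lambda\in\Spec(U_0)}P_\lambda\sigma P_\lambda$ and the sum includes $\lambda=\pm1$.

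This is not cosmetic: the corollary already fails for $C_3$ ($\tau=3$) with $a=(0,1)$. Let $C_\pm$ be the two orientation classes of arcs as in Section~5. The Grover walk on $C_3$ is pure transmission, so $U_\beta=e^{-i\beta\sigma}U_0$ gives exactly $U_\beta^3=e^{-i\beta}$ on $\mathbb{C}^{C_+}$ and $U_\beta^3=e^{i\beta}$ on $\mathbb{C}^{C_-}$. In Definition~\ref{H1} only $\lambda_T=-\frac12$ contributes. A direct evaluation gives $3H=(I-\frac13J)\oplus(-(I-\frac13J))$ on $\mathbb{C}^{C_+}\oplus\mathbb{C}^{C_-}$, with $J$ the all-ones matrix. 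In particular $\ker H=\mathrm{span}\{\mathbf{1}_{C_+},\mathbf{1}_{C_-}\}$, as Theorem~\ref{main2} predicts. Compare the two operators on two subspaces:
on $\mathbf{1}_{C_+}^{\perp}\cap\mathbb{C}^{C_+}$, $e^{i\beta\tau H}$ acts as $e^{i\beta}$ and $U_\beta^\tau$ as $e^{-i\beta}$;
on $\mathbf{1}_{C_+}$, $e^{i\beta\tau H}$ acts as $1$ and $U_\beta^\tau$ as $e^{-i\beta}$.
Hence $\|e^{i\beta\tau H}-U_\beta^\tau\|\ge 2\sin\beta$, which is of order $\beta$, not $\beta^2$.

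So neither your argument nor the paper's establishes the statement as written. The step from the lemma to the corollary is fine, but the lemma must first be corrected: flip the sign of the generator and include the $\lambda=\pm1$ blocks. The corrected statement is the one with $e^{-i\beta K}$ above.
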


\begin{proof}
    The statement follows immediately from Lemma~\ref{key lemma}.
\end{proof}


\begin{cor}
For any fixed $t>0$, there exists a constant $C_0=C_0(t)$ such that, for every $\delta>0$,
\[
\beta^2<\frac{\delta}{C_0}
\]
implies
\[
\|\psi_t^{[C]}-\varphi_{\lceil \frac{t}{\beta}\rceil}^{[D]}\|<\delta.
\]
\end{cor}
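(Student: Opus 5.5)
The plan is to bound the error $\|\psi_t^{[C]}-\varphi_{N}^{[D]}\|$, where $N=\lceil t/\beta\rceil$, by a quantity of the form $C_0\beta^2$. Then $\beta^2<\delta/C_0$ immediately gives the claim. Since $\psi_t^{[C]}=e^{it\tau H}\phi_0$ and $\varphi_N^{[D]}=U_\beta^{\tau N}\phi_0$, I would split
\[
\|\psi_t^{[C]}-\varphi_{N}^{[D]}\|\le \|e^{it\tau H}-e^{iN\beta\tau H}\|+\|e^{iN\beta\tau H}-U_\beta^{\tau N}\|,
\]
and treat the two pieces separately. Throughout I use that $H$ is Hermitian (Definition~\ref{H1}) and that $U_\beta$ is unitary, so every power and exponential involved has norm $1$.

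For the second piece I would telescope:
\[
e^{iN\beta\tau H}-U_\beta^{\tau N}=\sum_{k=0}^{N-1}e^{i(N-1-k)\beta\tau H}\bigl(e^{i\beta\tau H}-U_\beta^{\tau}\bigr)U_\beta^{\tau k}.
\]
Combined with unitarity, this reduces the piece to $N$ times the one-period defect $\|e^{i\beta\tau H}-U_\beta^\tau\|$. Corollary~\ref{beta^2} gives only $N\cdot\mathcal{O}(\beta^2)=\mathcal{O}(\beta)$. To reach order $\beta^2$ I would sharpen Lemma~\ref{key lemma} to second order. Concretely, I would expand $U_\beta=U_0(I+i\beta\sigma'+\beta^2 R+\mathcal O(\beta^3))$, using the exact entries $e^{\pm i\beta}-1$ of $S_\beta-S_0$. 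I would then repeat the spectral averaging of \eqref{eq;first}, now applied to the double sums $\sum_{j<k}U_0^j\sigma U_0^{k-j}\sigma U_0^{-k}$.

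The goal of this step is to show that $U_\beta^\tau=\exp\bigl(i\beta\tau H+\beta^2K\bigr)+\mathcal O(\beta^3)$ for a fixed matrix $K$. Resonant terms with $\lambda=\lambda'$ survive, and non-resonant terms give bounded geometric sums because $(\lambda/\lambda')^\tau=1$ (Lemma~\ref{thm:SMT}). Telescoping against $\exp(i\beta\tau H+\beta^2K)$ instead of $e^{i\beta\tau H}$ then costs only $N\cdot\mathcal O(\beta^3)=\mathcal O(\beta^2)$. What remains is to compare $\exp(N(i\beta\tau H+\beta^2K))$ with $e^{iN\beta\tau H}$. I expect this comparison to be the \emph{main obstacle}: it is $\mathcal O(\beta^2)$ only if $N\beta^2\|K\|$ is absorbed. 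This forces either $K=0$ after averaging, or $K$ commuting with $H$ and acting trivially on the relevant spectral subspaces. I would first try to prove $K$ vanishes when compressed to each eigenspace of $U_0$, using Lemmas~\ref{braket}, \ref{braket1}, \ref{GSMTv} and the symmetry $\psi(e^{-1})=-\lambda\overline{\psi(e)}$.

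For the first piece, $\|e^{it\tau H}-e^{iN\beta\tau H}\|\le\tau\|H\|\,|N\beta-t|$ and $0\le N\beta-t<\beta$. This is only linear in $\beta$ unless $t/\beta\in\mathbb N$. So I would restrict, as the limit in Theorem~\ref{main} permits, to the sequence $\beta=t/N$, along which this piece vanishes identically. Finally I would collect the constants: $\tau$, $\|H\|$, $\|K\|$ and the $\mathcal O(\beta^3)$ constant (which depends only on $G$, $a$ and an upper bound $\beta\le\beta_0$), multiplied by $t$ from $N\le t/\beta+1$. Together these define $C_0(t)$, and with this constant the error is at most $C_0\beta^2<\delta$.
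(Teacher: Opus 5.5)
\textbf{The proposal has a genuine gap: it proves a weaker statement than the one given.} You are right that Corollary~\ref{beta^2}, telescoped over $N=\lceil t/\beta\rceil$ periods, only gives $N\cdot\mathcal{O}(\beta^2)=\mathcal{O}(\beta)$. You are also right that the rounding term $\|e^{it\tau H}-e^{iN\beta\tau H}\|\le\tau\|H\|\beta$ is first order. The paper's one-line proof (``follows immediately from Corollary~\ref{beta^2}'') has exactly this defect.

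Your repair, however, does not prove the statement. Since $\delta>0$ is arbitrary, the corollary is equivalent to $\|\psi_t^{[C]}-\varphi^{[D]}_{\lceil t/\beta\rceil}\|\le C_0\beta^2$ for \emph{all} small $\beta$. Passing to the sequence $\beta=t/N$ therefore proves a different statement. The gap cannot be closed, because the $\beta^2$ rate is false off that sequence.

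Take $G=C_n$ (so $\tau=n$), $a=(0,1)$ and $\phi_0=\delta_a$. From the formula for $(U_\beta)_{e,f}$, $U_\beta$ sends $\delta_{(x-1,x)}$ to $\delta_{(x,x+1)}$, picking up the factor $e^{iB(a^{-1})}=e^{-i\beta}$ exactly when the image is $a$. Hence $U_\beta^n\delta_a=e^{-i\beta}\delta_a$ and $\varphi^{[D]}_{\lceil t/\beta\rceil}=e^{-i\lceil t/\beta\rceil\beta}\delta_a$. For $\beta\in[t/N,t/(N-1))$ we have $\lceil t/\beta\rceil=N$, so the phase $N\beta$ sweeps $[t,t+t/(N-1))$, an interval of length comparable to $\beta$. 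The discrete states at the two ends of this $\beta$-interval are a distance of order $\beta$ apart. Once $N$ is large, they cannot both lie within $C_0\beta^2$ of the single vector $\psi_t^{[C]}$, whatever $H$ is.

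What Corollary~\ref{beta^2} does yield is the bound $C_0\beta$, i.e.\ the corollary with $\beta<\delta/C_0$. Your first two steps, stopped at $N\cdot\mathcal{O}(\beta^2)$, already prove this, and it is all Theorem~\ref{main} needs.

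\textbf{Even along $\beta=t/N$ the argument is incomplete.} Everything rests on the second-order term $K$ leaving no trace after $N\approx t/\beta$ periods, and you only propose to try to show this. Since $U_\beta^{\tau N}=\exp(N\log U_\beta^\tau)=\exp\bigl(it\tau H+t\beta K+\mathcal{O}(\beta^2)\bigr)$, the leading correction to $e^{it\tau H}$ is $t\beta\int_0^1e^{i(1-s)t\tau H}Ke^{ist\tau H}\,ds$. For generic $t$ this vanishes only if $K=0$.

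Vanishing of the compressions of $K$ to eigenspaces of $U_0$ would not suffice. An eigenspace of $H$ (above all $\ker H$) contains pieces of many eigenspaces of $U_0$, and the blocks of $K$ between them still produce a drift of size $t\beta$.

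Nor can Lemmas~\ref{braket}, \ref{braket1} and \ref{GSMTv} settle the question, since they control only first-order matrix elements $\bra{\lambda}\sigma\ket{\lambda'}$. Comparing entries, $S_\beta=e^{-i\beta\sigma}S_0$, so $U_\beta=e^{-i\beta\sigma}U_0$. With $\sigma_j=U_0^j\sigma U_0^{-j}$ one has exactly $U_\beta^\tau=e^{-i\beta\sigma_0}e^{-i\beta\sigma_1}\cdots e^{-i\beta\sigma_{\tau-1}}$. By Baker--Campbell--Hausdorff, $K=-\tfrac12\sum_{j<k}[\sigma_j,\sigma_k]$, which involves products $\sigma P_{\lambda''}\sigma$ through intermediate eigenspaces. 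It vanishes for cycles, where $U_0$ is a permutation matrix and all $\sigma_j$ are diagonal. Nothing in the paper forces it to vanish in general, so this step is a missing idea, not a routine computation.
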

\begin{proof}
    The statement follows immediately from Corollary~\ref{beta^2}.
\end{proof}

We are now in the position to prove Theorem~\ref{main} and Theorem~\ref{main2}.

\begin{proof}[Proof of Theorem~\ref{main}]
    We consider the discrete-time quantum walk.  For any $t>0$, 
    \begin{align*}
        \lim_{\beta \to 0} \varphi_{\lceil \frac{t}{\beta} \rceil}^{[D]} 
        &= \lim_{\beta \to 0} (e^{i\beta \tau H})^{ \lceil \frac{t}{\beta} \rceil} \phi_0 \\
        &= e^{i \tau Ht} \phi_0 \\
        &= \psi_t^{[C]}.
    \end{align*}
    Here, the first equality follows from Corollary~\ref{beta^2}.
\end{proof}

\begin{proof}[Proof of Theorem~\ref{main2}]
We first prove \eqref{mu} and \eqref{muv}. After that, we establish
\eqref{kerh}.
By \eqref{H1 2}, we have
     \[
     H
           =
     \sum_{\lambda\in \Spec(U_0) \setminus \{\pm1\}} 
           \Phi_{\lambda} M_{\sigma}^{(\lambda, \lambda)} \Phi_{\lambda} ^*.
     \]
     Note that $M_\sigma^{(\lambda,\lambda)} = O$ for $\lambda$ with  $\dim \left( \ker (T- \lambda_T I) \cap \mathcal{B}_{\partial X_a}\right) \leq 1$.
     Thus, only the eigenvalues \(\lambda\) satisfying
\[
\dim \left( \ker (T- \lambda_T I) \cap \mathcal{B}_{\partial X_a}\right)=2
\]
contribute to \(H\). We therefore introduce 
\begin{align*}
     &\Spec^{(2)}(U_0) \\
     &:= \Bigl\{ \lambda = e^{\pm i \arccos \lambda_T}\in\Spec(U_0)\setminus\{\pm1\} \mid \dim \bigl(\ker(T-\lambda_T I)\cap\mathcal B_{\partial X_a}\bigr)=2 \Bigr\}. 
     \end{align*}
Then, by \eqref{SD},
\begin{align}
H
&=
\sum_{\substack{
\lambda \in \Spec^{(2)}(U_0) 
}}
\Phi_{\lambda}
M_{\sigma}^{(\lambda,\lambda)}
\Phi_{\lambda}^{*} \notag \\
&=
\sum_{\substack{
\lambda \in \Spec^{(2)}(U_0) 
}}
\frac{i E_{\lambda_T}(a, a)}{\sqrt{1- \lambda_T^2}}
    \begin{bmatrix}
    \ket{\lambda^{(1)}} &\ket{\lambda^{(2)}}
    \end{bmatrix}
    \begin{bmatrix}
        0 & 1 \\
        -1 & 0
    \end{bmatrix}
    \begin{bmatrix}
    \bra{\lambda^{(1)}} \\
    \bra{\lambda^{(2)}}
    \end{bmatrix}  \label{Hsum} \\
&=   
\sum_{\substack{
\lambda \in \Spec^{(2)}(U_0) 
}}
    \frac{- E_{\lambda_T}(a, a)}{\sqrt{1- \lambda_T^2}}
    \frac{1}{\sqrt{2}}\left( \ket{\lambda^{(1)}} + i\ket{\lambda^{(2)}} \right)
    \frac{1}{\sqrt{2}}
    \left( \bra{\lambda^{(1)}} - i\bra{\lambda^{(2)}} \right) \notag \\
    & \qquad
    +\negthickspace
    \sum_{\substack{
\lambda \in \Spec^{(2)}(U_0) 
}}
    \frac{ E_{\lambda_T}(a, a)}{\sqrt{1- \lambda_T^2}}
    \frac{1}{\sqrt{2}}\left( \ket{\lambda^{(1)}} - i\ket{\lambda^{(2)}} \right)
    \frac{1}{\sqrt{2}}
    \left( \bra{\lambda^{(1)}} + i\bra{\lambda^{(2)}} \right), \notag
\end{align}
where the third equality follows from the spectral decomposition of
\[
\begin{bmatrix}
0 & 1\\
-1 & 0
\end{bmatrix}
=
i 
\begin{bmatrix}
\frac{1}{\sqrt{2}} \\
\frac{i}{\sqrt{2}} 
\end{bmatrix}
\begin{bmatrix}
\frac{1}{\sqrt{2}} &
-\frac{i}{\sqrt{2}} 
\end{bmatrix}
-i 
\begin{bmatrix}
\frac{1}{\sqrt{2}} \\
-\frac{i}{\sqrt{2}} 
\end{bmatrix}
\begin{bmatrix}
\frac{1}{\sqrt{2}} &
\frac{i}{\sqrt{2}} 
\end{bmatrix},
\]
Hence, we obtain a spectral decomposition of \(H\). The orthonormality of 
$\{ \frac{1}{\sqrt{2}}\left( \ket{\lambda^{(1)}} \pm i\ket{\lambda^{(2)}} \right)\}$
follows immediately from the fact that $\{\ket{\lambda^{(1)}},\ket{\lambda^{(2)}}\}$ 
is an orthonormal basis of \(\ker(U_0-\lambda I)\). By Lemma~\ref{GSMT},
\[
\ket{\lambda^{(1)}}=\partial_\lambda^* f_1^{(\lambda_T)},
\qquad
\ket{\lambda^{(2)}}=\partial_\lambda^* f_2^{(\lambda_T)}.
\]
Thus, \eqref{mu} and \eqref{muv} are follow.

Next, we prove \eqref{kerh}.
Since the summation in \eqref{Hsum} is taken over all $\lambda$ satisfying
\[
\dim\bigl(\ker(T-\lambda_T I)\cap\mathcal B_{\partial X_a}\bigr)=2, 
\]
we have
\[
H \ket{\psi} =0
\]
for any $\ket{\psi} \in \mathcal{S}_{\mathrm{sim}} 
    \oplus
    \mathcal{T}_{\text{per}}.$ 
Moreover, since the summation in \eqref{Hsum} is taken over all $\lambda \neq \pm1$,  we have
\[
H \ket{\psi} =0
\]
for any $\ket{\psi} \in \mathcal{L}^{\perp}$.
Therefore, 
\begin{align}\label{eq:in}
\mathcal{S}_{\mathrm{sim}} 
    \oplus
    \mathcal{T}_{\text{per}}
    \oplus
    \mathcal{L}^{\perp} \subset \ker H.
\end{align}
By Lemma~\ref{GSMT}, the summation index in \eqref{Hsum}, and definitions of $\mathcal{S}_{\mathrm{sim}}$ and $\mathcal{T}_{\text{per}}$, we have
\[
\dim \mathcal{L}
=
 \dim \mathcal{S}_{\mathrm{sim}} + \dim \mathcal{T}_{\text{per}} +
 \operatorname{rank}(H).
\]
 Hence, 
\[
 |A(G)|
 =
 \dim \mathcal{S}_{\mathrm{sim}} + \dim \mathcal{T}_{\text{per}} +
 \operatorname{rank}(H) + \dim \mathcal{L}^{\perp}.
\]
On the other hand, by the rank--nullity theorem,
\[
|A(G)|
=
\operatorname{rank}(H)
+
\dim \ker H.
\]
Comparing the RHS of the above equalities, we obtain

\[
\dim \ker H = \dim \mathcal{S}_{\mathrm{sim}} 
    +
    \dim \mathcal{T}_{\text{per}}
    +
    \dim\mathcal{L}^{\perp}.
\]
Together with \eqref{eq:in}, this implies that
\[
    \ker H = \mathcal{S}_{\mathrm{sim}} 
    \oplus
    \mathcal{T}_{\text{per}}
    \oplus
    \mathcal{L}^{\perp}.
    \] 
    This completes the proof of \eqref{kerh}.
    \end{proof}

\section{Several examples of $H$}

In this section, we investigate several concrete examples of \(H\) to clarify its structure.
Since \(H\) is characterized by eigenvalues of multiplicity at least two, we focus on three representative periodic graphs: the cycle graph \(C_n\), the complete bipartite graph \(K_{k,k}\), and the path graph \(P_n\),
whose spectral multiplicities exhibit distinct patterns.

\subsection{Example 1: The cycle graph $C_n$ }

First, we examine $H$ associated with the cycle graph $C_n$. $C_n$ is known to induce an $n$-periodic Grover walk~\cite{SHH}. Furthermore, all eigenvalues of its adjacency matrix, except for $2$ (and $-2$ when $n$ is even), are of multiplicity two~\cite{BH}.

We label the vertices of $C_n$ by $0,1,\dots,n-1$, 
    and order the components of eigenvectors accordingly. 
    We divide the set of directed edges into two classes $C_+$ and $C_-$, defined by
\[
C_+ = \{ (u,v) \mid v \equiv u+1 \ (\mathrm{mod}\ n) \}, \qquad
C_- = \{ (u,v) \mid v \equiv u-1 \ (\mathrm{mod}\ n) \}.
\]
\begin{pro}
    Let $G$ be the $n$-vertex cycle graph $C_n$. For the strength of a vector potential $\beta$ and a reference edge $a \in A(G)$, set $U_{\beta} = U_{\beta}(G,a)$. Here, we assume that $0 < \beta \ll 1 $. Then,
\[
H \cong
\begin{bmatrix}
1 & 0\\
0 & -1
\end{bmatrix}
\otimes H,
\]
where the matrix \(H\), indexed by \(A^{(C_+)}\), is given by the following entries for \(e_1,e_2\in C_+\).

If $n$ is odd, 
\[
(H)_{e_1,e_2}
=
\begin{cases}
\displaystyle \frac{n-2}{n^2}
& e_1=e_2
,\\[6pt]
\displaystyle -\frac{2}{n^2}
& \text{otherwise},\\[6pt]
\end{cases}
\]
and if $n$ is even, 
\[
(H)_{e_1,e_2}
=
\begin{cases}
\displaystyle \frac{n-2}{n^2}
& e_1=e_2
,\\[6pt]
\displaystyle -\frac{2}{n^2}
& e_1 \neq e_2,\ o(e_1)-o(e_2)\equiv 0 \pmod 2,\\[6pt]
\displaystyle 0
& \text{otherwise}.
\end{cases}
\]
for $e_1,e_2\in C_+.$
\end{pro}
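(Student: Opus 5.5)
The plan is to evaluate the closed formula of Definition~\ref{H1} directly, in the form obtained in the proof of Lemma~\ref{key lemma}. For $C_n$ we have $\deg x=2$ for every vertex, so $T=M/2$. Its spectrum is $\lambda_T=\cos\theta_k$ with $\theta_k=2\pi k/n$. For $\lambda_T\neq\pm1$, that is, for $1\le k\le \lfloor (n-1)/2\rfloor$ and $k\neq n/2$, the eigenspace is spanned by the real orthonormal vectors
\[
f_1(x)=\sqrt{2/n}\cos(\theta_k x),\qquad f_2(x)=\sqrt{2/n}\sin(\theta_k x).
\]
The restriction of this eigenspace to the two endpoints $\partial X_a=\{o(a),t(a)\}$ has determinant proportional to $\sin\theta_k\neq0$. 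Hence every such $\lambda_T$ has $\dim(\ker(T-\lambda_TI)\cap\mathcal{B}_{\partial X_a})=2$ and contributes to $H$, while $\lambda_T=\pm1$ are excluded by definition.

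\textbf{Choice of basis.} The quantity $E_{\lambda_T}(e_1,e_2)$ is a $2\times2$ determinant. Any other real orthonormal basis of the eigenspace differs from $(f_1,f_2)$ by an element of $O(2)$, which at most flips the sign of both $E_{\lambda_T}(a,a)$ and $E_{\lambda_T}(e_1,e_2)$. The product $E_{\lambda_T}(a,a)\,(E_{\lambda_T}(e_1,e_2)+E_{\lambda_T}(e_2,e_1))$ is therefore basis independent, so I may use the cosine/sine basis above even though the paper's normalisation \eqref{f} singles out $\partial X_a$.

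\textbf{Computing the determinants.} Write $e_j=(o_j,t_j)$. With $\iota_e f=\tfrac1{\sqrt2}(f(o(e)),f(t(e)))^{\top}$ one gets
\[
E_{\lambda_T}(e_1,e_2)=\tfrac1n\bigl(\cos\theta o_1\sin\theta t_2-\sin\theta o_2\cos\theta t_1\bigr).
\]
Symmetrising gives
\[
E_{\lambda_T}(e_1,e_2)+E_{\lambda_T}(e_2,e_1)=\tfrac1n\bigl(\sin\theta(t_2-o_1)+\sin\theta(t_1-o_2)\bigr),
\]
and in particular $E_{\lambda_T}(a,a)=\pm\sin\theta/n$, with the sign $+$ if $a\in C_+$. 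Now split into the three cases for $(e_1,e_2)$:
\begin{itemize}
\item If $e_1,e_2\in C_+$, then $t_j=o_j+1$ and the symmetrised sum equals $\tfrac{2}{n}\sin\theta\cos\theta d$, where $d=o_1-o_2$.
\item If $e_1,e_2\in C_-$, it equals $-\tfrac2n\sin\theta\cos\theta d$.
\item If $e_1\in C_+$ and $e_2\in C_-$ (or the reverse), the two sines cancel and the sum is $0$.
\end{itemize}
This yields the block structure $\mathrm{diag}(1,-1)\otimes H_{C_+}$, after ordering the arcs as $C_+$ then $C_-$ and matching each arc in $C_-$ to its reverse in $C_+$. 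Multiplying by $E_{\lambda_T}(a,a)/(1-\lambda_T^2)=\sin\theta/(n\sin^2\theta)$, the factors of $\sin\theta$ cancel, and each $k$ contributes $\tfrac{2}{n^2}\cos(\theta_k d)$ to the $C_+$ block. If $a\in C_-$, an overall sign flip occurs, which is absorbed into the isomorphism $\cong$.

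\textbf{Trigonometric sum.} It remains to evaluate $\tfrac2{n^2}\sum_k\cos(2\pi kd/n)$ over the admissible $k$, using $\sum_{k=0}^{n-1}\cos(2\pi kd/n)=n\,\delta_{d\equiv0}$ together with the symmetry $k\leftrightarrow n-k$.
\begin{itemize}
\item For $n$ even, the range is $k=1,\dots,n/2-1$, so the sum equals $\tfrac12\bigl(n\delta_{d,0}-1-(-1)^d\bigr)$. This gives $(n-2)/n^2$ on the diagonal, $-2/n^2$ for even $d\ne0$, and $0$ for odd $d$, exactly as claimed.
\item For $n$ odd, the range is $k=1,\dots,(n-1)/2$, and the same identity gives $\tfrac12(n\delta_{d,0}-1)$.
\end{itemize}

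\textbf{Main obstacle.} The step I expect to be delicate is bookkeeping of this last sum and of the normalisations in the odd case. A naive count gives diagonal $(n-1)/n^2$ and off-diagonal $-1/n^2$ rather than the stated $(n-2)/n^2$ and $-2/n^2$. I would therefore recheck three things: whether each $\lambda_T$ is meant to be counted once or twice (via $\lambda$ and $\bar\lambda$) in the sum \eqref{H1 2}; the factor $1/\sqrt{2(1-\lambda_T^2)}$ from Lemma~\ref{GSMTv}; and the precise arc indexing in the stated odd formula. I would then reconcile the two before finalising, since the even case already matches the claim.
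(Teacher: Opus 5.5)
Your route is the paper's route. You use the same cosine/sine basis and the same symmetrised determinant $E_{\lambda_T}(e_1,e_2)+E_{\lambda_T}(e_2,e_1)=\frac1n\bigl(\sin\theta_k(t_2-o_1)+\sin\theta_k(t_1-o_2)\bigr)$. You make the same split into $C_+$, $C_-$ and mixed pairs, and reach the same reduction to $\frac{2}{n^2}\sum_k\cos(2\pi kd/n)$ over $1\le k\le\lfloor (n-1)/2\rfloor$ (the paper's $\lceil n/2-1\rceil$). For even $n$ your argument is complete and agrees with the paper.

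\textbf{The odd case.} Trust your count rather than trying to reconcile it. The correct entries are $(n-1)/n^2$ on the diagonal and $-1/n^2$ off it, so the statement is false as written for odd $n$. The error is in the paper's proof, not in your bookkeeping.

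The paper makes two slips. First, it asserts the diagonal value $(n-2)/n^2$ before splitting by parity, although there are $(n-1)/2$ terms when $n$ is odd. Second, in its Case~2 both coefficients carry a spurious factor $2$. With $n=2m+1$ and $\phi=2\pi l/n$, one has $\sum_{k=1}^m2\cos k\phi=-1+\Re\bigl(e^{im\phi}-e^{i(m+1)\phi}\bigr)/(1-\cos\phi)$, hence
\[
(H')_{e_1,e_2}=-\frac{1}{n^2}+\frac{\Re\bigl(e^{im\phi}-e^{i(m+1)\phi}\bigr)}{n^2(1-\cos\phi)}=-\frac{1}{n^2},
\]
not $-2/n^2$.

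Two independent checks confirm your values:
\begin{itemize}
\item Every row of the $C_+$ block must sum to $0$, since $\sum_{d\in\mathbb{Z}_n}\cos(2\pi kd/n)=0$ for $1\le k\le n-1$. The stated odd formula gives row sum $-1/n$.
\item The stated block is $\frac1nI-\frac{2}{n^2}J$, which is invertible (eigenvalue $-1/n$ on constants). Then $H$ would have trivial kernel, contradicting $\dim\ker H=\dim\mathcal{S}_{\mathrm{sim}}+\dim\mathcal{T}_{\mathrm{per}}+\dim\mathcal{L}^{\perp}=1+0+1=2$ from Theorem~\ref{main2} and Table~\ref{tab:examples}. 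Your block $\frac1nI-\frac1{n^2}J$ has exactly the constants as kernel, giving $\dim\ker H=2$. The resulting nonzero eigenvalues $\pm1/n$ of $H$ also agree with $\mu^{\pm}_{\lambda_T}=\mp E_{\lambda_T}(a,a)/\sqrt{1-\lambda_T^2}=\mp1/n$.
\end{itemize}
The $C_5$ values $\frac{3}{25}$ and $-\frac{2}{25}$ in the paper's figure inherit the same slip and should be $\frac{4}{25}$ and $-\frac{1}{25}$.

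\textbf{Your three rechecks.} None of them can rescue the stated numbers.
\begin{itemize}
\item Definition~\ref{H1} already merges the $\lambda$ and $\bar\lambda$ contributions into one term per $\lambda_T$; this is how \eqref{H1 2} is obtained. The even case matches under exactly this convention, so doubling would break it.
\item The factor $1/\sqrt{2(1-\lambda_T^2)}$ enters identically for every $k$ and both parities.
\item The entries depend only on $d=o(e_1)-o(e_2)$, so re-indexing the arcs changes nothing.
\end{itemize}
What you should write is a proof of the corrected odd-case formula.

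\textbf{Basis independence.} One small correction to this step. Under $(f_1,f_2)\mapsto(f_1,f_2)R$ with $R\in O(2)$, the single quantity $E_{\lambda_T}(e_1,e_2)$ with $e_1\neq e_2$ is not just multiplied by $\pm1$. What picks up the factor $\det R$ is the combination
\[
E_{\lambda_T}(e_1,e_2)+E_{\lambda_T}(e_2,e_1)=\det[\iota_{e_1}f_1,\iota_{e_2}f_2]-\det[\iota_{e_1}f_2,\iota_{e_2}f_1],
\]
since $R\epsilon R^{\top}=(\det R)\,\epsilon$ for $\epsilon=\begin{bmatrix}0&1\\-1&0\end{bmatrix}$. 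This combination is all you use, so your conclusion stands. It also follows directly from $H=\sum_{\lambda}P_\lambda\sigma P_\lambda$ in the proof of Lemma~\ref{key lemma}.
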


\begin{figure}[htbp]
    \centering
    \includegraphics[width=\textwidth]{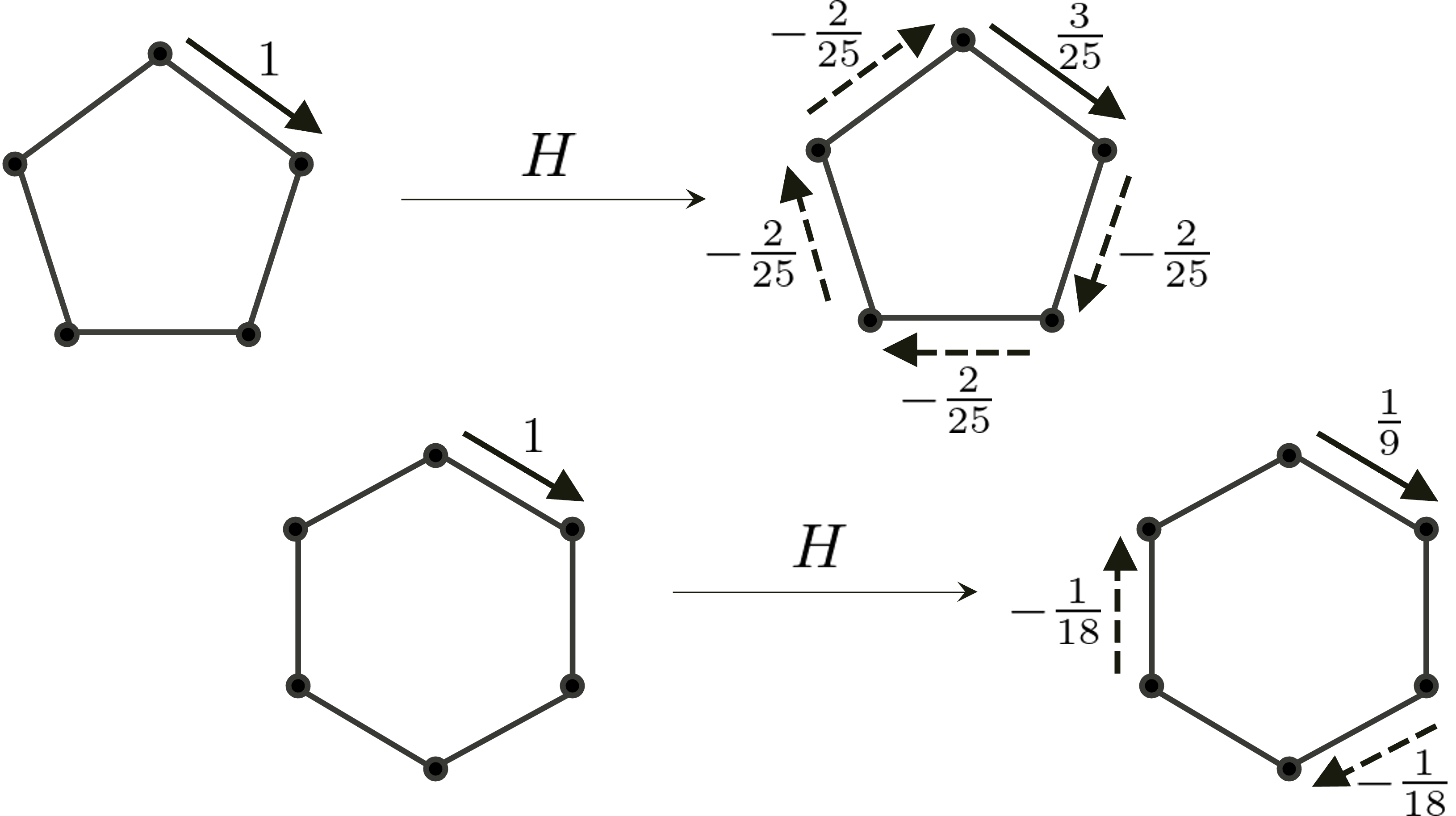}
    \caption{
        $H$ on $C_5$ : 
        The left figure represents $H \delta_i= \frac{3}{25}\delta_i -\frac{2}{25}\left(
        \delta_{i+1}+
        \delta_{i+2}+
        \delta_{i+3}+
        \delta_{i+4}
        \right)$
     \ ($ i \in \mathbb{Z}_5$).
        $H$ on $C_6$ : 
        The right figure represents $H \delta_j= \frac{1}{9}\delta_j -\frac{1}{18}\left(
        \delta_{j+2}+
        \delta_{j+4}
        \right)$
     \ ($ j \in \mathbb{Z}_6$).
    }
    \label{fig:c}
\end{figure}

\begin{proof}
    
Since $C_n$ is vertex-transitive, we may assume without loss of generality that the reference edge $a$ satisfies $o(a)=0$ and $t(a)=1$. 
The eigenvalues of \(T(C_n)\) with multiplicity greater than one are
\[
\mu_k = 2\cos\left(\frac{2\pi k}{n}\right),
\qquad
k=1,\dots,\left\lceil \frac{n}{2}-1 \right\rceil.
\]
For each $\mu_k$, the eigenvectors $f_1^{(\mu_k)}$ and $f_2^{(\mu_k)}$ chosen as the above are real-valued and linearly independent, and their supports intersect $\partial X_a$. They are given by
\[
f_1^{(\mu_k)}
= 
\sqrt{\frac{2}{n}}
\begin{bmatrix}
1 \\
\cos\left(\frac{2\pi k}{n}\right) \\
\vdots \\
\cos\left(\frac{2\pi(n-1)k}{n}\right)
\end{bmatrix},
\quad
f_2^{(\mu_k)}
= 
\sqrt{\frac{2}{n}}
\begin{bmatrix}
0 \\
\sin\left(\frac{2\pi k}{n}\right) \\
\vdots \\
\sin\left(\frac{2\pi(n-1)k}{n}\right)
\end{bmatrix},
\]
where $f_1^{(\mu_k)}$ and $f_2^{(\mu_k)}$ are defined as in \eqref{f}.
Then, 
\begin{align*}
    E_{\mu_k}(a, a)
    &= 
    \frac{1}{2}
    \begin{vmatrix}f_1^{(\mu_k)} (o(a)) &
    f_2^{(\mu_k)} (o(a)) \\
    f_1^{(\mu_k)} (t(a)) &
    f_2^{(\mu_k)} (t(a))\end{vmatrix} \\
    &=
    \frac{1}{n}
    \sin\left(\frac{2\pi k}{n}\right).
\end{align*}
Moreover, 
\begin{align*}
    &E_{\mu_k}(e_1, e_2)
    +
    E_{\mu_k}(e_2, e_1) \\
    &= 
    \frac{1}{n}
    \sin\left(\frac{2\pi k}{n}\left(t(e_2)-o(e_1)\right)\right)
    +
    \frac{1}{n}
    \sin\left(\frac{2\pi k}{n}\left(t(e_1)-o(e_2)\right)\right) \\
    &=
    \frac{2}{n}
    \sin\left(\frac{\pi k}{n}\left(t(e_1)+
    t(e_2)-o(e_1)-o(e_2)\right)\right) \\ 
    & \qquad \qquad \times \cos\left(\frac{\pi k}{n}\left(t(e_1)-
    t(e_2)+
    o(e_1)-o(e_2)\right)\right).
\end{align*}
If $e_1 \in C_{\pm}$ and $e_2 \in C_{\mp}$, then
\[
\sin\left(\frac{\pi k}{n}\left(t(e_1)+
    t(e_2)-o(e_1)-o(e_2)\right)\right) = 0.
\]
If $e_1 \in C_{\pm}$ and $e_2 \in C_{\pm}$, then
\[
\sin\left(\frac{\pi k}{n}\left(t(e_1)+
    t(e_2)-o(e_1)-o(e_2)\right)\right)
    =
    \pm\sin\left(\frac{2\pi k}{n}\right)
\]
and
\[
\cos\left(\frac{\pi k}{n}\left(t(e_1)-
    t(e_2)+
    o(e_1)-o(e_2)\right)\right) 
    =
    \cos\left(\frac{2\pi k}{n}\left(
    o(e_1)-o(e_2)\right)\right).
\]
Therefore,
\begin{align*}
    (H)_{e_1, e_2} 
    &=
    \begin{cases}
    \displaystyle
    \frac{2}{n^2}\sum_{k=1}^{\lceil \frac{n}{2}-1\rceil} \cos\left(\frac{2\pi k}{n}\left(
    o(e_1)-o(e_2)\right)\right) & e_1, e_2 \in C_+, \\
    \displaystyle
    -\frac{2}{n^2}\sum_{k=1}^{\lceil \frac{n}{2}-1\rceil } \cos\left(\frac{2\pi k}{n}\left(
    o(e_1)-o(e_2)\right)\right) & e_1, e_2 \in C_-.
    \end{cases}
\end{align*}
Thus,
\[
H \cong
\begin{bmatrix}
1 & 0\\
0 & -1
\end{bmatrix}
\otimes H',
\]
where
\[
(H')_{e_1,e_2}
=
\frac{2}{n^2}
\sum_{k=1}^{\left\lceil \frac{n}{2}-1 \right\rceil}
\cos\left(
\frac{2\pi k}{n}
\bigl(
o(e_1)-o(e_2)
\bigr)
\right)
\]
for $e_1,e_2 \in C_+$.
Put
\[
o(e_1)-o(e_2)=l.
\]
If $l \equiv 0 \pmod n$, then
\[
(H')_{e_1,e_2}
=
\frac{n-2}{n^2}.
\]
Assume that $l \not\equiv 0 \pmod n$.
We distinguish two cases according to the parity of $n$.

\medskip
\noindent
\textbf{Case 1.} \(n\) is even, say \(n=2m\). 

\noindent

\begin{align*}
    \left(H'\right)_{e_1,e_2}
&=
\frac{2}{n^2}
\sum_{k=1}^{m-1}
\cos\left(
\frac{2\pi lk}{2m}
\right) \\
&=
\frac{1}{n^2}
\sum_{k=1}^{m-1}
\left(
e^{\frac{2\pi lk}{2m}i} + e^{\frac{-2\pi lk}{2m}i}
\right) \\
&=
\frac{\Re \left(
-1 - e^{l\pi i} + e^{\frac{2\pi l}{2m}i} + e^{l\pi i} e^{-\frac{2\pi l}{2m}i}
\right)}{n^2\left(1- \cos\left(\frac{2\pi l}{2m}\right)\right)}.
\end{align*}
It follows that
\[
(H')_{e_1,e_2}=0
\]
when $l$ is odd, and
\[
(H')_{e_1,e_2}=-\frac{2}{n^2}
\]
when $l$ is even. Hence, the assertion follows.

\medskip
\noindent
\textbf{Case 2.} \(n\) is odd, say \(n=2m+1\). 

\noindent
In the similar fashion to Case 1, we obtain
\[
\left(H'\right)_{e_1,e_2}
=
-\frac{2}{n^2} + \frac
{2}
{n^2(1-\cos{\frac{2\pi l}{n}})}
\Re\left(
e^{\frac{2\pi ml}{2m+1}i}-e^{\frac{2\pi(m+1)l}{2m+1}i}
\right).
\]
Moreover, we find that
\begin{align*}
    \Re\left(
e^{\frac{2\pi ml}{2m+1}i}-e^{\frac{2\pi(m+1)l}{2m+1}i}
\right) 
&= 
-e^{l\pi i}
\Re\left(
e^{\frac{l\pi }{2m+1}i}-e^{\frac{-l\pi }{2m+1}i}
\right) \\
&=
-2e^{l\pi i}
\Re\left(
i\sin\left(\frac{l\pi}{2m+1}
\right)\right) \\
&=
0.
\end{align*}
Substituting this into the above expression, we obtain
\[
(H')_{e_1,e_2}=-\frac{2}{n^2}.
\]
Hence, the assertion follows.
\end{proof}

\subsection{Example 2: The complete bipartite graph $K_{k, k}$ }

Second, we examine \(H\) associated with the complete bipartite graph \(K_{k,k}\). 
\(K_{k,k}\) is known to induce an $4$-periodic Grover walk~\cite{YNIE}. Furthermore, Among the eigenvalues of the adjacency matrix, only \(0\) has multiplicity greater than one~\cite{BH}.

Let \(K_{k,k}\) be the complete bipartite graph with bipartition
\((V_1,V_2)\), where
\[
V_1=\{x_1,\dots,x_k\}, \qquad
V_2=\{y_1,\dots,y_k\}.
\]
For \(j=1,2\), we denote by
\[
A^{(V_j)} = A^{(V_j)}(K_{k,k})
=
\{\,e\in A(K_{k,k}) \mid o(e)\in V_j\,\}
\]
the set of directed edges with origin in \(V_j\).

\begin{pro}
    Let $G$ be the complete bipartite graph $K_{k,k}$. For the strength of a vector potential $\beta$ and a reference edge $a \in A(G)$, set $U_{\beta} = U_{\beta}(G,a)$. Here, we assume that $0 < \beta \ll 1 $.Then,
\[
H \cong
\begin{bmatrix}
1 & 0\\
0 & -1
\end{bmatrix}
\otimes H,
\]
where the matrix \(H\), indexed by \(A^{(V_1)}\), is given by
\[
(H)_{(x_i,y_j),(x_s,y_t)}
=
\begin{cases}
\displaystyle \frac{2(k-1)^2}{k^4}
& \text{if all of } i,j,s,t \text{ are equal to } k,\\[6pt]
\displaystyle \frac{(k-1)(k-2)}{k^4}
& \text{if exactly three of } i,j,s,t \text{ are equal to } k,\\[6pt]
\displaystyle -\frac{2(k-1)}{k^4}
& \text{if exactly two of } i,j,s,t \text{ are equal to } k,\\[6pt]
\displaystyle \frac{-k+2}{k^4}
& \text{if exactly one of } i,j,s,t \text{ is equal to } k,\\[6pt]
\displaystyle \frac{2}{k^4}
& \text{if none of } i,j,s,t \text{ is equal to } k.
\end{cases}
\]
\end{pro}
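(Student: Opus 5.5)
By vertex-transitivity of $K_{k,k}$ (and since swapping $V_1$ and $V_2$ only exchanges the two diagonal blocks), I will assume without loss of generality that the reference edge is $a=(x_k,y_k)$. This is consistent with the index $k$ playing a distinguished role in the statement. The plan is then to evaluate Definition~\ref{H1} directly.

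\textbf{Step 1: the spectrum of $T$.} Since every vertex has degree $k$, we have $T=\frac1k\begin{bmatrix}0&J\\J&0\end{bmatrix}$, where $J$ is the all-ones $k\times k$ matrix. Hence $\Spec(T)=\{1,-1,0\}$, and $\pm1$ are simple. So only $\lambda_T=0$ enters the sum defining $H$, and there $1-\lambda_T^2=1$. The eigenspace is
\[
\ker T=\{f : \textstyle\sum_{V_1}f=0,\ \sum_{V_2}f=0\},
\]
which has dimension $2k-2$. Inside it I will take
\[
f_1=\frac{k\,\delta_{x_k}-\mathbf 1_{V_1}}{\sqrt{k(k-1)}},\qquad f_2=\frac{k\,\delta_{y_k}-\mathbf 1_{V_2}}{\sqrt{k(k-1)}}.
\]
These are real and orthonormal, and they are nonzero at $x_k$ and $y_k$ respectively. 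I will complete them to an orthonormal basis of $\ker T$ using vectors vanishing at $x_k$ and $y_k$ (vectors with zero sum on $V_1\setminus\{x_k\}$, and likewise on $V_2\setminus\{y_k\}$). This shows $\dim(\ker T\cap\mathcal B_{\partial X_a})=2$, so the basis matches the convention \eqref{f}.

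\textbf{Step 2: computing $E_0$.} The vectors $f_1$ and $f_2$ are supported on $V_1$ and $V_2$ respectively. For $e_1,e_2\in A^{(V_1)}$ (origin in $V_1$, terminus in $V_2$) this gives
\[
E_0(e_1,e_2)=\tfrac1k\bigl(f_1(o(e_1))f_2(t(e_2))-f_2(o(e_2))f_1(t(e_1))\bigr)=\tfrac1k f_1(o(e_1))\,f_2(t(e_2)).
\]
In particular $E_0(a,a)=\frac{k-1}{k^2}$.

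\textbf{Step 3: the entries of $H$.} With $e_1=(x_i,y_j)$ and $e_2=(x_s,y_t)$, the formula above gives
\[
(H)_{e_1,e_2}=\frac{1}{k^4}\Bigl[(k\delta_{ik}-1)(k\delta_{tk}-1)+(k\delta_{sk}-1)(k\delta_{jk}-1)\Bigr].
\]
For pairs in $A^{(V_2)}$ the same computation goes through with the roles of $f_1$ and $f_2$ exchanged, which produces an overall sign. For mixed pairs, one factor is evaluated at a vertex outside its support and therefore vanishes. Together these give the block form $\mathrm{diag}(1,-1)\otimes H$.

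\textbf{Step 4: case analysis.} It remains to count how many of $i,j,s,t$ equal $k$, distinguishing how those indices are distributed between the two pairs $(i,t)$ and $(s,j)$. The cases with four, three, one or none equal to $k$ each give a single value, and these match the statement. The main obstacle is the case ``exactly two'': the value is $-2(k-1)/k^4$ when each pair contains exactly one index equal to $k$, but $((k-1)^2+1)/k^4$ when both indices of one pair equal $k$. So this case has to be checked carefully against the intended indexing convention. If the discrepancy persists, the statement needs that case refined, and I would report the two-valued formula obtained above.
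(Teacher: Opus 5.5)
Your argument follows the paper's own route: $a=(x_k,y_k)$, the same $f_1^{(0)},f_2^{(0)}$ up to a common sign (which leaves every $E_0$ unchanged), and the support argument that kills the mixed blocks and flips the sign on $A^{(V_2)}$. Your formula $(H)_{e_1,e_2}=k^{-4}\bigl[(k\delta_{ik}-1)(k\delta_{tk}-1)+(k\delta_{sk}-1)(k\delta_{jk}-1)\bigr]$ is correct, whereas the paper's proof stops at ``handled similarly''.

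The discrepancy you flag is genuine, and no indexing convention removes it. The plane spanned by $f_1,f_2$ is forced by Definition~\ref{H1}, and $E_0(a,a)E_0(\cdot,\cdot)$ is invariant under orthogonal changes of basis within it. When the two indices equal to $k$ are $\{i,t\}$ or $\{s,j\}$, the entry is $((k-1)^2+1)/k^4$. This can be checked independently at $k=2$: label $x_2,y_2,x_1,y_1$ as $0,1,2,3$, so that $K_{2,2}\cong C_4$ with $a=(0,1)$. Then the paper's proposition for $C_4$ gives $+1/8$ for $e_1=(x_2,y_1)$, $e_2=(x_1,y_2)$, rather than the stated $-1/8$. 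So the ``exactly two'' case of the statement must be refined exactly as you propose.
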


\begin{figure}[htbp]
    \centering
    \includegraphics[width=\textwidth]{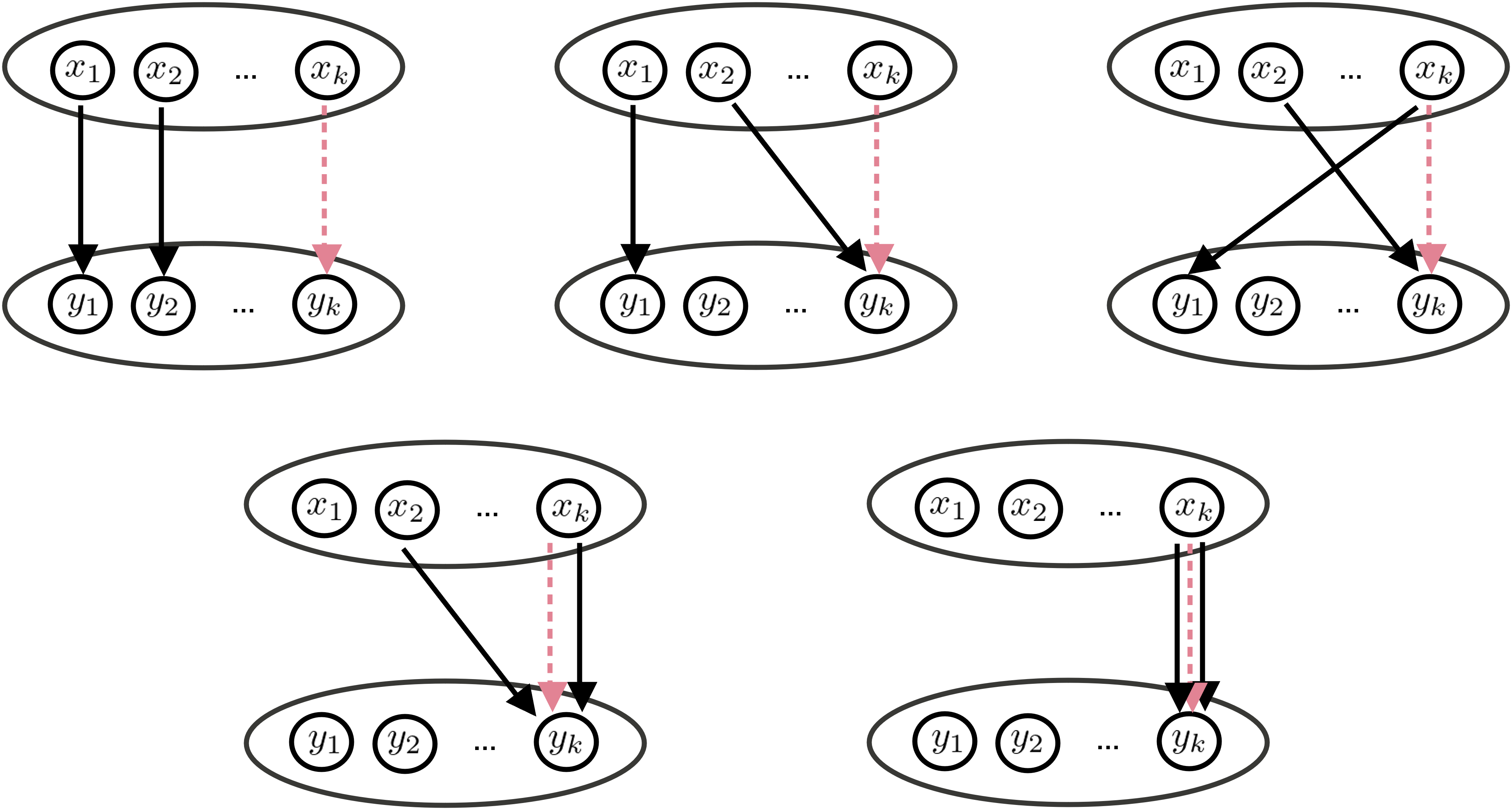}
    \caption{
        Configurations of directed edges corresponding to the entries of $H$: The five configurations shown above correspond to the five cases in the definition of $H$. The two solid directed edges represent the indices of an entry of $H$, while the dashed directed edge represents the reference edge $a$.
    }
    \label{fig:all}
\end{figure}

\begin{figure}[htbp]
    \centering
    \includegraphics[width=\textwidth]{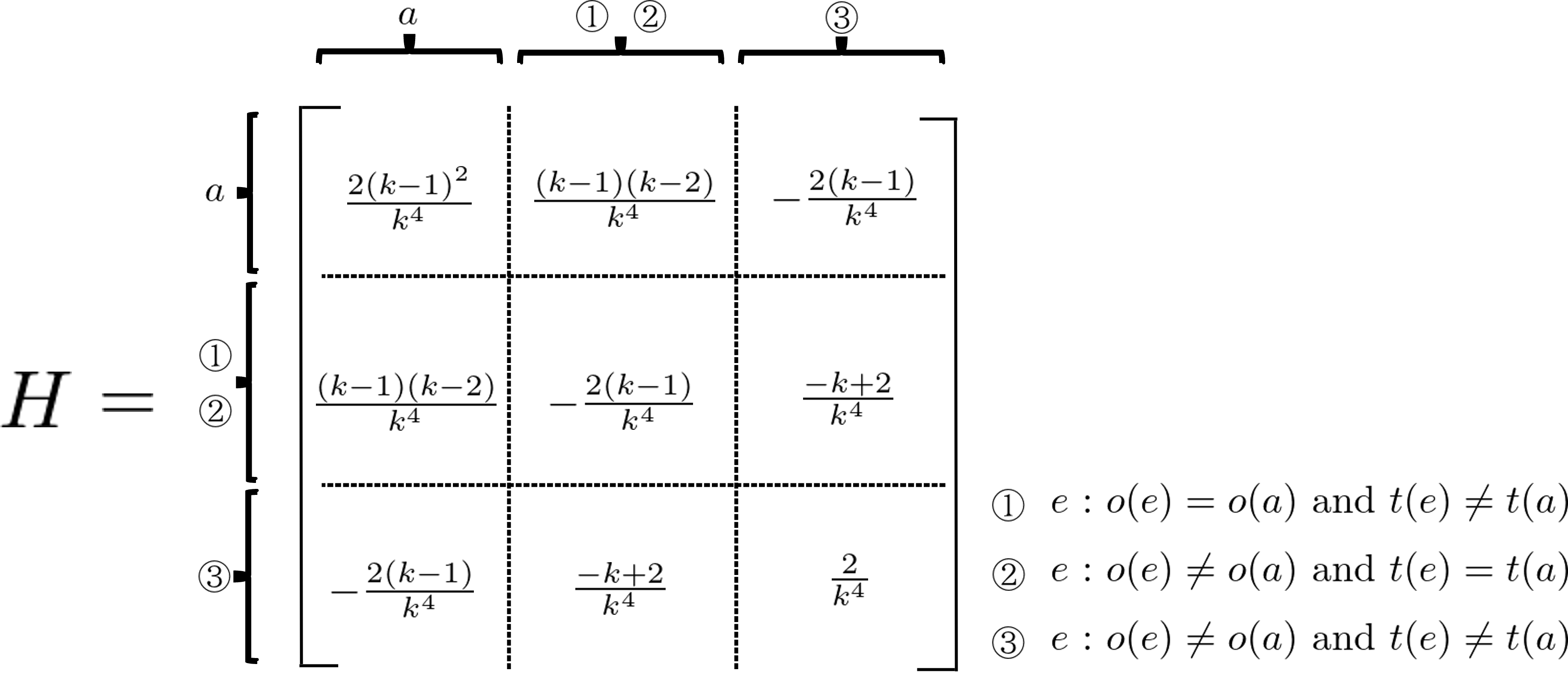}
    \caption{
        Values of the entries of \(H\) corresponding to each edge configuration.
    }
    \label{fig:mat}
\end{figure}

\begin{proof}
    Without loss of generality, we may assume that
\[
o(a)=x_k, \qquad t(a)=y_k,
\]
and label the rows and columns of the discriminant matrix \(T\) by
\[
x_1,\dots,x_k,y_1,\dots,y_k
\]
in this order. 
 The only eigenvalue of \(T\) with multiplicity greater than \(1\) is \(0\).
An orthonormal basis of the corresponding eigenspace is given by
\[
\frac{1}{\sqrt{2}}
\begin{bmatrix}
1\\
-1\\
0\\
0\\
\vdots\\
0\\
\mathbf{0}_{k}
\end{bmatrix}, 
\frac{1}{\sqrt{6}}
\begin{bmatrix}
1\\
1\\
-2\\
0\\
\vdots\\
0\\
\mathbf{0}_{k}
\end{bmatrix}, 
\dots, 
\frac{1}{\sqrt{k(k-1)}}
\begin{bmatrix}
1\\
1\\
1\\
1\\
\vdots\\
-(k-1)\\
\mathbf{0}_{k}
\end{bmatrix},
\]
and 
\[
\frac{1}{\sqrt{2}}
\begin{bmatrix}
\mathbf{0}_{k}\\
1\\
-1\\
0\\
0\\
\vdots\\
0
\end{bmatrix}, 
\frac{1}{\sqrt{6}}
\begin{bmatrix}
\mathbf{0}_{k}\\
1\\
1\\
-2\\
0\\
\vdots\\
0
\end{bmatrix}, 
\dots, 
\frac{1}{\sqrt{k(k-1)}}
\begin{bmatrix}
\mathbf{0}_{k}\\
1\\
1\\
1\\
1\\
\vdots\\
-(k-1)
\end{bmatrix},
\]
where $\mathbf{0}_{k}$ denotes the zero vector in $\mathbb{C}^k$.
From this basis, we choose
\[
f_1^{(0)}
=
\frac{1}{\sqrt{k(k-1)}}
\begin{bmatrix}
1\\
\vdots\\
1\\
-(k-1)\\
\mathbf{0}_{k}
\end{bmatrix},
\quad
f_2^{(0)}
= 
\frac{1}{\sqrt{k(k-1)}}
\begin{bmatrix}
\mathbf{0}_{k}\\
1\\
\vdots\\
1\\
-(k-1)
\end{bmatrix},
\]
where $f_1^{(0)}$ and $f_2^{(0)}$ are defined as in \eqref{f}. Then,
\begin{align*}
    E_0(a,a)
    =
    \frac{1}{k}
    \begin{vmatrix}f_1^{(0)} (x_k) &
    f_2^{(0)} (x_k) \\
    f_1^{(0)} (y_k) &
    f_2^{(0)} (y_k)\end{vmatrix} 
    =
    \frac{k-1}{k^2}.
\end{align*}
We next evaluate \(E_0(e_1,e_2)\) according to the partite sets containing
\(o(e_1)\) and \(o(e_2)\).
Suppose that \(o(e_1)\in V_1\) and \(o(e_2)\in V_2\).
Since
\[
f_1^{(0)}(t(e_1))
=
f_2^{(0)}(t(e_2))
=
f_1^{(0)}(o(e_2))
=
f_2^{(0)}(o(e_1))
=
0,
\]
it follows that
\[
E_0(e_1,e_2)=E_0(e_2,e_1)=0.
\]
Similarly, if \(o(e_1)\in V_2\) and \(o(e_2)\in V_1\), then
\[
E_0(e_1,e_2)=E_0(e_2,e_1)=0.
\]
The remaining cases, where \(o(e_1)\) and \(o(e_2)\) belong to the same partite set, are handled similarly.
By distinguishing whether \(e_1\) and \(e_2\) share endpoints with the reference edge \(a\), the statement of the theorem follows.

\end{proof}

\subsection{Example 3: The path graph $P_n$ }

Finally, we consider \(H\) associated with the path graph \(P_n\). $P_n$ is known to induce an $2(n-1)$-periodic Grover walk~\cite{SHH}.
Since all eigenvalues of \(P_n\) are simple \cite{BH}, the following proposition holds.

\begin{pro}
    Let $G$ be the $n$-vertex path $P_n$. For the strength of a vector potential $\beta$ and a reference edge $a \in A(G)$, set $U_{\beta} = U_{\beta}(G,a)$. Here, we assume that $0 < \beta \ll 1 $. Then,
    \[
    ||U_{\beta}^{2n}-I|| = \mathcal{O}(\beta^2).
    \]
\end{pro}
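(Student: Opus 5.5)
The plan is to show that $H=H(P_n,a)$ vanishes identically, and then to read the estimate off Lemma~\ref{key lemma}, whose proof in fact yields $U_\beta^{m\tau}=I+i\beta m\tau H+\mathcal{O}(\beta^2)$ for every $m\in\mathbb{N}$.

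\textbf{Step 1: $H=0$.} All eigenvalues of $T(P_n)$ are simple, since $T$ is similar to $D^{-1}M$ by \eqref{RWbyT}, the path has simple spectrum \cite{BH}, and a Jacobi-type tridiagonal matrix with nonzero off-diagonal entries always has simple spectrum. Hence for every $\lambda_T\in\Spec(T)$ we have $\dim\bigl(\ker(T-\lambda_T I)\cap\mathcal B_{\partial X_a}\bigr)\le 1$. By Definition~\ref{H1} this forces $E_{\lambda_T}(e_1,e_2)=0$ for all $\lambda_T$, so $H=O$. Equivalently, in Theorem~\ref{main2} the index set $\Spec^{(2)}(U_0)$ is empty, so $\operatorname{rank}H=0$. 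As a consistency check, $P_n$ is a tree, so $\mathcal{L}^\perp=0$ by Lemma~\ref{GSMT}, and $\ker H=\mathcal S_{\mathrm{sim}}\oplus\mathcal T_{\mathrm{per}}=\mathbb{C}^A$.

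\textbf{Step 2: apply the expansion at a multiple of the period.} In the proof of Lemma~\ref{key lemma}, replace $\tau$ by an arbitrary multiple $N=m\tau$. Write $U_\beta=(I+i\beta\sigma+\mathcal{O}(\beta^2))U_0$ and expand $U_\beta^N$ to first order in $\beta$. The zeroth-order term is $U_0^N=I$. The first-order term is $i\beta\sum_{j=0}^{N-1}U_0^j\sigma U_0^{-j}$. After the same spectral-projection computation, the cross terms with $\lambda\neq\lambda'$ still vanish because $(\lambda/\lambda')^N=1$. The diagonal terms sum to $N\sum_\lambda P_\lambda\sigma P_\lambda=NH=0$ by Step~1. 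Since $N$ is fixed, the remainder is $\mathcal{O}(\beta^2)$ with a constant depending only on $N$. This gives $\|U_\beta^N-I\|=\mathcal{O}(\beta^2)$.

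\textbf{Step 3 and the main obstacle: the exponent $2n$.} The stated exponent is $2n$, so Step~2 applies exactly when $2n$ is a multiple of the period of the Grover walk on the path in question. This is the step I expect to be the main obstacle and must be checked carefully. If $P_n$ is read as the path with $n$ edges, its period is $2n$ \cite{SHH}, so $N=2n=\tau$ and the claim is precisely Step~2 with $m=1$. If $P_n$ is read as the path with $n$ vertices, as in Table~\ref{tab:examples}, the period is $\tau=2(n-1)$. Then $2n=\tau+2$, and the decomposition $U_\beta^{2n}=U_\beta^{\tau}U_\beta^{2}$ gives $U_\beta^{2n}=U_0^2+\mathcal{O}(\beta)$. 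In that reading the estimate holds only when $U_0^2=I$, that is, when $\tau\mid 2$, which means $n=2$.

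I would therefore first fix the convention for $P_n$ so that $2n$ is a multiple of the period. The natural choice, matching the exponent in the statement, is $n$ edges. Under that convention Steps~1--2 give the proposition directly.
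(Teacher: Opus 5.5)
Your Steps~1--2 are the paper's own argument. Simplicity of the spectrum makes every block $M_\sigma^{(\lambda,\lambda)}=\bra{\lambda}\sigma\ket{\lambda}$ vanish by Lemma~\ref{braket}, so $H=O$. The expansion in Lemma~\ref{key lemma}, which works verbatim for any multiple of the period, then gives the bound.

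Your Step~3 concern is correct, and it exposes an actual error in the paper's proof. That proof simply asserts that the period of $P_n$ is $2n$. This contradicts the paper's own remark that the $n$-vertex path is $2(n-1)$-periodic, as well as $|A|=2(n-1)$ in Table~\ref{tab:examples}. For $n\ge 3$ vertices the stated estimate is false. For example, on $P_3$ the walk $U_0$ is a $4$-cycle on the arcs, so $\|U_\beta^{6}-I\|\to\|U_0^{2}-I\|=2$. The proposition should therefore read $U_\beta^{2(n-1)}$, or equivalently take $P_n$ with $n$ edges, as you chose.

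On a tree there is also a sharper, non-perturbative route, because the potential is pure gauge. Choose $\theta$ with $\theta(t(f))-\theta(o(f))=B(f)$, and let $W$ be diagonal with $W_{e,e}=e^{i\theta(t(e))}$. Then $WCW^*=C$ and $WS_\beta W^*=S_0$, so $U_\beta=W^*U_0W$. Hence $U_\beta^{2(n-1)}=I$ exactly for every $\beta$, not just up to $\mathcal{O}(\beta^2)$.
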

\begin{proof}
    From \cite{SHH}, the period of the path $P_n$ is $2n$.
    In addition,  Lemma~\ref{GSMT}, any eigenvalue of $U_0$ has multiplicity one. Let $\ket{\lambda}$ be an eigenvector of $U_0$ associated with the eigenvalue $\lambda$. 
    Then, 
    \begin{align*}
        H
        &=
        2n \sum_{\lambda\in \Spec(U_{0})} \Phi_{\lambda} M_{\sigma}^{(\lambda, \lambda)} \Phi_{\lambda} ^* \\
        &=
        2n \sum_{\lambda\in \Spec(U_{0})} \ket{\lambda}
        \bra{\lambda}\sigma\ket{\lambda}
        \bra{\lambda}
        \\
        &=
        0.
    \end{align*}
    Here, the third equality follows from Lemma~\ref{braket}. 
\end{proof}

\section*{Acknowledgements}
E.S. acknowledges financial support from the Grant-in-Aid of Scientific Research (C) JSPS KAKENHI Grant No.~24K06863. 

\section*{Data availability}
No datasets were generated or analyzed during the current study.

\end{document}